\date{\today}
\newcommand{\1}{{\rm 1\hspace*{-0.4ex}%
\rule{0.1ex}{1.52ex}\hspace*{0.2ex}}}
\newcommand\ersetze[3][1]{\psfrag{#2}[Bl][Bl][#1][0]{#3}}
\newcommand{\bx}{\mathbf{x}} 
\newcommand{\by}{\mathbf{y}}
\newcommand{\bz}{\mathbf{z}}
\newcommand{\bp}{\mathbf{p}}
\newcommand{\ba}{\mathbf{a}}
\newcommand{\bb}{\mathbf{b}}
\newcommand{\rz}{\mathbb{R}}
\newcommand{\N}{\mathbb{N}}
\newcommand{\C}{\mathbb{C}}
\newcommand{\R}{\mathbb{R}}
\newcommand{\p}{\partial}
\newcommand{\supp}{{\operatorname{supp}}}
\newcommand{\dist}{{\operatorname{dist}}}
\newcommand{\Arg}{{\operatorname{Arg}}}
\newcommand{\Log}{{\operatorname{Log}}}
\theoremstyle{plain}
\newtheorem{theorem}{Theorem}[section]{\bf}{\it}
\newtheorem{prop}[theorem]{Proposition}{\bf}{\it}
\newtheorem{lemma}[theorem]{Lemma}{\bf}{\it}
{\bf}{\it}
\newtheorem{corollary}[theorem]{Corollary}{\bf}{\it}
\theoremstyle{definition}
\newtheorem*{acknowledgement}{Acknowledgement} 
\newtheorem{remark}[theorem]{Remark}{\it}{\rm}
\newtheorem{defn}[theorem]{Definition}{\bf}{\rm}
{\rm}{\rm}
\newenvironment{pf}{\par\medskip\noindent\textit{Proof}:\,}{\hspace*{\fill}\qed\medskip\par\noindent} 
\newenvironment{pf*}[1]{\par\medskip\noindent\textit{#1}\,:}{\hspace*{\fill}\qed\medskip\par\noindent}   
\title[Analyticity of pseudorelativistic Hartree--Fock orbitals]{Real
  analyticity away 
  from the nucleus of pseudorelativistic Hartree--Fock orbitals} 
\author[A. Dall'Acqua, S. Fournais, T. {\O}. S{\o}rensen, and
E. Stockmeyer]{Anna Dall'Acqua, S{\o}ren Fournais, Thomas \O stergaard
  S\o rensen, and Edgardo Stockmeyer} 
\thanks{\copyright\ 2011 by the
       authors. This article may be reproduced, in its entirety, for
       non-commercial purposes.}
\address[Anna Dall'Acqua]
{Institut f\"ur Analysis und Numerik,
Fakult\"at f\"ur Mathematik,
Otto-von-Guericke Universit\"at,
Postfach 4120,
D-39016 Magdeburg, Germany.}
\email{anna.dallacqua@ovgu.de}
\address[S. Fournais]
        {Department of Mathematical Sciences, 
         University of Aarhus, 
         Ny Munkegade 118,
         DK-8000 \AA rhus C, Denmark.} 
\email{fournais@imf.au.dk}           
\address[S. Fournais on leave from]
        {CNRS and Laboratoire de
         Math\'{e}matiques d'Orsay, 
         Univ Paris-Sud, 
         Orsay CEDEX, F-91405, France.} 
\address[Thomas {\O}stergaard S{\o}rensen]
{Department of Mathematics,
Imperial College London,
Huxley Building,
180 Queen's Gate,
London SW7 2AZ, UK.}
\email{t.sorensen@imperial.ac.uk}
\address[Thomas {\O}stergaard S{\o}rensen (present address)]
{
Mathematisches Institut,
Universit\"at M\"unchen,
Theresienstra\ss e 39,
D-80333 Munich, Germany.}
\email{sorensen@mathematik.uni-muenchen.de}
\address[Edgardo Stockmeyer]
{Mathematisches Institut,
Universit\"at M\"unchen,
Theresienstra\ss e 39,
D-80333 Munich, Germany.}
\email{stock@mathematik.uni-muenchen.de}
\begin{document}

\begin{abstract}  
  We prove that the Hartree--Fock orbitals of pseudorelativistic atoms,
  that is, atoms where the kinetic energy of the electrons is given by
  the pseudo\-relativistic operator $\sqrt{{}-\Delta+1}-1$, 
  are real analytic away from the origin. As a
    consequence, the quantum mechanical ground state of such atoms is
    never a Hartree-Fock state.

Our proof is inspired by the classical proof of analyticity by nested
balls of  Morrey and Nirenberg \cite{Morrey-Nirenberg}. 
However, the technique has to be adapted to take 
care of the non-local pseudodifferential operator, the
singularity of the potential at the origin, and the non-linear terms
in the equation. 
\end{abstract}

\maketitle

\section{Introduction and results}\label{sec:intro}

In a recent paper \cite{relHF}, three of the present authors studied the
Hartree--Fock model for pseudorelativistic atoms, and proved the
existence of Hartree--Fock minimizers. Furthermore, they proved that
the corresponding Hartree--Fock orbitals (solutions to the associated
Euler-La\-grange equation) are smooth away from the
nucleus, and that they decay exponentially. 
In this paper we
prove that all of these orbitals are, in fact, real analytic away from the
origin. 
Apart from intrinsic 
mathematical interest, analyticity of
solutions has important consequences. For example, in the
non-relativistic case, the analyticity of the orbitals 
was used in 
\cite{Friesecke, Lewin-thesis} to prove that the quantum mechanical ground state is never
a Hartree--Fock state (or, more generally, is never a finite
  linear combination of Slater determinants). A direct consequence of
  our main regularity result is that this also holds in the
  pseudorelativistic case.   
Our proof also shows that any \(H^{1/2}\)-solution
\(\varphi:\R^3\to\C\) to
the non-linear equation  
\begin{align}\label{eq:non-lin1}
  (\sqrt{{}-\Delta+1})\varphi -
  \frac{Z}{|\cdot|}\varphi\pm\big(|\varphi|^2*|\cdot|^{-1}\big)\varphi
  =\lambda\varphi 
\end{align}
which is smooth away from \(\bx=0\), is in fact real analytic
there. As will be clear from the proof, our method yields the
same result for solutions to equations of the form
\begin{align}\label{eq:non-lin2}
  ({}-\Delta+m)^{s}\varphi + V\varphi+|\varphi|^k\varphi=\lambda\varphi\,,
\end{align}
where \(V\) has a finite number of point singularities (but is analytic
elsewhere), under certain conditions on \(m, s, V\), and \(k\) (see
Remark~\ref{rem:main-thm} below).
We believe this result is of independent interest, but stick
concretely to the case of pseudorelativistic Hartree--Fock orbitals,
since this was the original motivation for the present work.

We consider a model for an atom with $N$ electrons and nuclear charge
$Z$ (fixed at the origin), where the kinetic energy of the electrons
is described by 
the expression $\sqrt{(|\bp|c)^2+(mc^2)^2}-mc^2$. This model takes into
account some (kinematic) relativistic effects; in units where
\(\hbar=e=m=1\), the Hamiltonian becomes
\begin{align}\label{Hamiltonian}
  H=\sum_{j=1}^{N}\alpha ^{-1}\Big\{T(-{\rm i}\nabla_{j})-V({\bx}_{j})
  \Big\}
  +\sum_{1\leq i<j\leq N}\frac{1}{|\bx_{i}-\bx_{j}|}\,,  
\end{align}
with
\(T({\bp})=E({\bp})-\alpha^{-1}=\sqrt{|{\bp}|^2+\alpha^{-2}}-\alpha^{-1}\)  
and \(V({\bx})=Z\alpha/|{\bx}|\). Here, $\alpha $ is Sommerfeld's fine
structure constant; physically, \(\alpha\simeq1/137\).

The operator $H$ acts on a dense subspace of the
$N$-particle Hilbert space $\mathcal{H}_{F}=\wedge 
_{i=1}^{N}L^{2}(\mathbb{R}^{3})$ of antisymmetric
functions.
(We will not consider spin since it is irrelevant for our discussion.)
It is bounded from below on this
subspace if and only if 
\(Z\alpha\le 2/\pi\) (see \cite{Lieb-Yau};
for a number of other works on this operator, see
\cite{CarmonaEtAl,
  Daubechies-Lieb1, FeffermanEtAl, Herbst, LewisEtAl,
  Nardini1,Weder, Zhislin1}).

The \textit{(quantum) ground state energy} 
is the infimum of 
the quadratic form \(\mathfrak{q}\) defined by \(H\),
over the subset of elements of norm \(1\) of the corresponding form
domain.
Hence, it 
coincides with the infimum of the
spectrum of $H$ considered as an operator acting in
$\mathcal{H}_{F}$. A corresponding minimizer is called a 
  \textit{(quantum) ground state} of \(H\).

In the Hartree--Fock approximation, instead of minimizing the
quadratic form \(\mathfrak{q}\) in the entire $N$-particle space
\(\mathcal{H}_{F}\), one 
restricts to wavefunctions \(\Psi\) which are pure wedge products,
also called Slater determinants: 
\begin{align}\label{slater}
  \Psi (\bx_{1},\dots,\bx_{N})
  =\frac{1}{\sqrt{N!}}\,\det(u_{i}(\bx_{j}))_{i,j=1}^{N}
  \,, 
\end{align}
with $\{u_{i}\}_{i=1}^N$ orthonormal in
$L^{2}(\mathbb{R}^{3})$ (called {\it orbitals}). Notice 
that this way, \(\Psi\in\mathcal{H}_{F}\) and
$\|\Psi\|_{L^{2}(\mathbb{R}^{3N})}=1$.   

The {\it Hartree--Fock ground state energy} is the infimum of the
quadratic form \(\mathfrak{q}\) defined by \(H\) over such Slater
determinants: 
\begin{align}\label{eq:HF-energy}
   E^{{\rm HF}}(N,Z,\alpha):=
   \inf \{\,\mathfrak{q}(\Psi,\Psi) \,|\, \Psi \ \text{Slater
     determinant}\,\}\,. 
\end{align}
 Inserting \(\Psi\) of the
form in \eqref{slater} into \(\mathfrak{q}\)
formally yields
\begin{align}
   \label{eq:HF-functional}\nonumber
   \mathcal{E}^{\rm HF}(u_1,\ldots,&u_N):=\mathfrak{q}(\Psi,\Psi) 
   \\\nonumber&=\alpha^{-1}\sum_{j=1}^{N}\int_{\R^3}\big\{\,
   \overline{u_{j}(\bx)}\,[T(-{\rm i}\nabla)u_j](\bx)
   -V(\bx)
   |u_j(\bx)|^2\big\}\,d\bx
   \\\nonumber&\   
   +\frac{1}{2}\sum_{1\le i,j\le N}\int_{\R^3}\int_{\R^3}
   \frac{|u_{i}(\bx)|^2|u_{j}(\by)|^2}{|\bx-\by|}
   \,d\bx d\by
   \\&\ -\frac{1}{2}\sum_{1\le i,j\le N}\int_{\R^3}\int_{\R^3}
   \frac{\overline{u_{j}(\bx)}u_{i}(\bx)\overline{u_{i}(\by)}u_{j}(\by)}{|\bx-\by|} 
   \,d\bx d\by\,.
 \end{align}
In fact, \(u_{i}\in H^{1/2}(\R^{3})\), \(1\le i\le N\), is needed for
this to be well-defined (see Section~\ref{sec:main estimate} for a
detailed discussion), and so
\eqref{eq:HF-energy}--\eqref{eq:HF-functional} can be written
\begin{align}\label{eq:HF-min}
   E^{{\rm HF}}(N,Z,\alpha)
   &=\inf\{\,\mathcal{E}^{\rm HF}(u_1,\ldots,u_N)\,|\, 
  (u_1,\ldots,u_N)\in \mathcal{M}_N\}\,,
  \\
  \mathcal{M}_N&=\big\{\, 
  (u_1,\ldots,u_N)\in [H^{1/2}(\R^3)]^N
  \,\big|\
  (u_i,u_j)=\delta_{ij}\,\big\}\,. \label{eq:HF-constraints}
\end{align}
 Here,  $(\ ,\ )$ denotes the scalar product in
 $L^2(\mathbb{R}^3)$.
The existence of minimizers for the problem
\eqref{eq:HF-min}--\eqref{eq:HF-constraints} was proved in
\cite{relHF} when $Z>N-1$ and $Z\alpha <2/\pi$. (Note that such
minimizers are generally not unique since \(\mathcal{E}^{\rm HF}\) is
not convex; see \cite{KS-HF}). The existence of
infinitely many distinct critical points 
of the functional  \(\mathcal{E}^{\rm
  HF}\) on  \(\mathcal{M}_N\) was proved recently (under the same
conditions) in \cite{EnstedtMelgaard}.  

The Euler--Lagrange equations of
the problem \eqref{eq:HF-min}--\eqref{eq:HF-constraints} are the {\it
  Har\-tree--Fock equations},
\begin{align}\label{eq:HF}
  \big[\big(T(-{\rm i}&\nabla)-V\big)
   \varphi_{i}\big](\bx)
  +\alpha\Big(\sum_{j=1}^{N}\int_{\R^3}\frac{|\varphi_j(\by)|^2}{|\bx-\by|}
  \,d\by\Big)\varphi_i(\bx) 
  \\&-
  \alpha\sum_{j=1}^{N}
  \Big(\int_{\R^3}
  \frac{\overline{\varphi_j(\by)}\varphi_i(\by)}{|\bx-\by|}\,d\by\Big)\varphi_j(\bx) 
  =\varepsilon_i\varphi_i(\bx)\ , \quad 1\le i\le N\,.
  \nonumber
\end{align}
Here, the \(\varepsilon_i\)'s are the Lagrange multipliers of the
orthonormality constraints in \eqref{eq:HF-constraints}. (Note that the
naive Euler--Lagrange equations are more complicated than 
\eqref{eq:HF}, but can be transformed to \eqref{eq:HF}; see \cite{KS-HF}.) 
Note that \eqref{eq:HF} can be re-formulated as 
\begin{align}
  \label{eq:HF-operator eigen}
  h_{{\boldsymbol{\varphi}}}\varphi_i  = \varepsilon_i\varphi_i\ , \quad
  1\le i\le N\,,
\end{align}
with \(h_{{\boldsymbol{\varphi}}}\) the {\it Hartree--Fock operator associated
  to \({\boldsymbol{\varphi}}=\{\varphi_1,\ldots,\varphi_N\}\)}, formally
given by
\begin{align}
  \label{eq:HF-operator}
  h_{{\boldsymbol{\varphi}}}u=
  [T(-{\rm i}\nabla)-V]u+
  \alpha R_{{\boldsymbol{\varphi}}}u-\alpha K_{{\boldsymbol{\varphi}}}u\,,
\end{align}
where \(R_{{\boldsymbol{\varphi}}}u\) is the
{\it direct interaction}, given by the multiplication operator defined
by   
\begin{equation}\label{Rgamma}
  R_{{\boldsymbol{\varphi}}}(\bx):=\sum_{j=1}^{N}\int_{\mathbb{R}^3}
  \frac{|\varphi_j(\by)|^2}{|\bx-\by|}\,d
  \by\,
\end{equation}
and \(K_{{\boldsymbol{\varphi}}}u\) is the {\it exchange term}, given by
the integral operator
\begin{equation}\label{Kgamma}
  (K_{{\boldsymbol{\varphi}}}u)(\bx)=\sum_{j=1}^{N}\Big(\int_{\mathbb{R}^3}
  \frac{\overline{\varphi_j(\by)}u(\by)}{|\bx-\by|}\,d
  \by\Big)\varphi_{j}(\bx)\,.
\end{equation}
The equations \eqref{eq:HF}
(or equivalently \eqref{eq:HF-operator eigen}) are called the {\it
  self-consistent Hartree--Fock equations}. 
One has that \(\sigma_{\rm
  ess}(h_{{\boldsymbol{\varphi}}})=[0,\infty)\) and that, when in addition \(N<Z\), 
the operator
\(h_{{\boldsymbol{\varphi}}}\) has infinitely many eigenvalues in
\([-\alpha^{-1},0)\)
(see
\cite[Lemma~2]{relHF}; the argument given there holds for any
 \(\boldsymbol{\varphi}=\{\varphi_{1},\ldots,\varphi_{N}\}\),
   \(\varphi_ {i}\in H^{1/2}(\R^3)\), as long as \(Z\alpha<2/\pi\)).
If 
\((\varphi_1,\ldots,\varphi_N)\in\mathcal{M}_N\) 
is a minimizer for
the problem \eqref{eq:HF-min}--\eqref{eq:HF-constraints}, 
then the \(\varphi_i\)'s solve \eqref{eq:HF-operator eigen} with
\(\varepsilon_1\le\varepsilon_2\le \cdots\le\varepsilon_N<0\) the
\(N\) {\it lowest} eigenvalues of the ope\-rator \(h_{{\boldsymbol{\varphi}}}\)
\cite{relHF}.

In \cite{relHF} it was proved that solutions
\(\{\varphi_{1},\ldots,\varphi_{N}\}\) to \eqref{eq:HF}---and,
more generally, all eigenfunctions of the corresponding Hartree--Fock
operator \(h_{{\boldsymbol{\varphi}}}\)---are smooth away from \(\bx=0\) (the
singularity of \(V\)), and that 
(for the \(\varphi_{i}\)'s for which \(\varepsilon_{i}<0\))
they decay exponentially. 
(The solutions studied in \cite{relHF} came from a minimizer of
\(\mathcal{E}^{\rm HF}\), but the proof trivially 
extends to the solutions
\(\{{\boldsymbol{\varphi}}_{n}\}_{n\in\N}
=\big\{\{\varphi_1^{n},\ldots,\varphi_N^{n}\}\big\}_{n\in\N}\) 
to  \eqref{eq:HF} 
found in \cite{EnstedtMelgaard}, and to all the 
eigenfunctions of
the corresponding Hartree--Fock operators mentioned above). The main
theorem of this paper 
is the following, which completely settles the question of regularity
away from the origin of solutions to the equations \eqref{eq:HF}. 
\begin{theorem}\label{HF2}
Let $Z\alpha <2/\pi $, and let $N\ge2$ be a positive integer such
that $N<Z+1$. 
Let \({\boldsymbol{\varphi}}=\{\varphi_1,\ldots,\varphi_N\}\),
\(\varphi_{i}\in H^{1/2}(\R^{3})\), \(i=1,\ldots,N\), 
be solutions to the pseudorelativistic Hartree--Fock equations in
\eqref{eq:HF}.
 
Then, for \(i=1,\ldots,N\),
   \begin{align}
     \label{eq:regularityBIS}
      \varphi_i\in C^{\omega}(\rz^3\setminus\{0\})\,,
   \end{align}
that is, the Hartree--Fock orbitals are real analytic away from the
origin in \(\rz^3\).
\end{theorem}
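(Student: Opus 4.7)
Since the orbitals are already known to be $C^{\infty}$ on $\R^{3}\setminus\{0\}$, the task reduces to producing quantitative factorial bounds on their derivatives near each fixed point $\bx_{0}\neq 0$. I would follow the Morrey-Nirenberg nested balls strategy: fix a ball $B_{0}=B(\bx_{0},r_{0})$ disjoint from the origin, introduce a family of concentric shrinking balls $B_{k}=B(\bx_{0},r_{0}-k\delta)$, $0\le k\le n$, with $\delta$ proportional to $r_{0}/n$, and prove by induction on $n=|\beta|$ an $L^{2}$-estimate of the form
\begin{align*}
\sum_{i=1}^{N}\|\partial^{\beta}\varphi_{i}\|_{L^{2}(B_{n})}
\le C^{n+1}\,n!\,\delta^{-n},
\end{align*}
with $C$ depending on $\bx_{0}$, on $Z,\alpha,N$, and on Sobolev norms of the $\varphi_{j}$'s. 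A standard Sobolev embedding in $\R^{3}$ then converts this into a pointwise bound $|\partial^{\beta}\varphi_{i}(\bx_{0})|\le C_{1}^{|\beta|+1}|\beta|!$, which is precisely real analyticity at $\bx_{0}$.

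Because $\sqrt{-\Delta+1}$ is non-local, ordinary interior elliptic regularity is not available for the induction step. The key technical tool in my plan is the Balakrishnan-type representation
\begin{align*}
\sqrt{-\Delta+1}\,u=\frac{1}{\pi}\int_{0}^{\infty}\frac{-\Delta+1}{-\Delta+1+t}\,u\,\frac{dt}{\sqrt{t}},
\end{align*}
which writes the square root as a superposition of resolvents of the \emph{local} operator $-\Delta+1+t$, whose Green's function on $\R^{3}$ is the Yukawa kernel $(4\pi|\bx-\by|)^{-1}e^{-\sqrt{1+t}|\bx-\by|}$. Multiplying the $\beta$-differentiated version of \eqref{eq:HF} by a cutoff $\chi$ supported in $B_{n}$ and equal to one on $B_{n+1}$, I would commute $\chi$ through $\sqrt{-\Delta+1}$: the main term reduces to a local elliptic problem for $\chi\varphi_{i}$ controllable by standard resolvent estimates, while the commutator $[\chi,\sqrt{-\Delta+1}]$ admits an explicit kernel representation via the above formula. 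The remaining terms of \eqref{eq:HF} are handled by Leibniz expansion. Both $V=Z/|\cdot|$ and the Hartree potential $R_{\boldsymbol{\varphi}}=|\cdot|^{-1}\ast\sum_{j}|\varphi_{j}|^{2}$ are real analytic on $B_{0}$ (the latter as a Newtonian potential of an $L^{1}$ density), so they contribute classical factorial bounds, and the exchange term $K_{\boldsymbol{\varphi}}\varphi_{i}$ couples the $N$ orbitals through a combinatorial Leibniz bookkeeping familiar from the non-relativistic treatment in \cite{Friesecke,Lewin-thesis}.

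The main obstacle, and the reason the Morrey-Nirenberg method must be genuinely adapted, is the commutator $[\chi,\sqrt{-\Delta+1}]\partial^{\beta}\varphi_{i}$. Unlike in the purely local setting, this term is \emph{not} supported in the annulus where $\nabla\chi\neq 0$ and therefore produces a global-in-space tail. Using the Yukawa representation, the tail is an integral in $t$ of contributions decaying exponentially in $\sqrt{1+t}\,\dist(B_{n+1},\supp\nabla\chi)\sim\sqrt{t}\,\delta$. Splitting the $t$-integration at a scale tied to $n^{2}/\delta^{2}$, the near-field part feeds into the induction hypothesis on $B_{n}$ through resolvent regularisation, while the far-field part is dominated by the global $\|\varphi_{i}\|_{H^{1/2}}$ times a factor with enough exponential decay to absorb the factorial growth; choosing $\delta\sim 1/n$ then closes the recursion with a universal $C$. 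The delicate bookkeeping of this commutator estimate against the factorial recursion is the heart of the argument. The Hartree-Fock nonlinearity, while requiring care in the Leibniz combinatorics over the coupled system, adds no conceptually new difficulty beyond what is already treated in the non-relativistic analyticity literature.
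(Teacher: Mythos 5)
Your overall Morrey--Nirenberg plan is the right starting point, and the Balakrishnan/Yukawa representation is close in spirit to what the paper actually uses (the paper works with $E(\bp)^{-1}$ via $x^{-1/2}=\pi^{-1}\int_0^\infty (x+t)^{-1}t^{-1/2}\,dt$ and the Yukawa kernel of $(-\Delta+\alpha^{-2}+t)^{-1}$). However there is a genuine gap in your proposal: the induction you set up is in $L^2$, and for a cubic nonlinearity the $L^2$ recursion does not close. To estimate $\partial^\beta$ of a product of three functions in $L^2$ after Leibniz you must pass through H\"older (e.g.\ $L^6\times L^6\times L^6\to L^2$) and then Sobolev back down; in three dimensions this loses enough derivatives that, once you sum over the multinomial partitions of $\beta$, the resulting bound is Gevrey but \emph{not} analytic. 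This is exactly the obstruction the paper identifies: for $\Delta u=u^2$ the $L^2$ argument works, for $\Delta u=u^3$ it does not. The paper's essential new idea is to run the whole induction in $L^p$ with $p\ge 5$: the exponent in $\theta=2/p$ entering $\|u\|_{L^{3p}}\le K\|u\|_{W^{1,p}}^{\theta}\|u\|_{L^p}^{1-\theta}$ becomes small enough that the sum over Leibniz partitions converges uniformly in $|\beta|$ (this is the step in the proof of Lemma~\ref{lem:U-phi} flagged as ``the very essential reason for needing $p\ge 5$''). Nothing in your plan addresses this loss, and choosing $\delta\sim 1/n$ alone does not repair it.

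Your appeal to the non-relativistic literature \cite{Friesecke,Lewin-thesis} for ``familiar Leibniz bookkeeping'' of the nonlinearity compounds the problem: those works do \emph{not} do direct Leibniz bookkeeping of the cubic term. They introduce auxiliary unknowns $\phi_{i,j}=[\varphi_i\overline{\varphi_j}]\ast|\cdot|^{-1}$ solving $-\Delta\phi_{i,j}=4\pi\varphi_i\overline{\varphi_j}$ and thereby convert the Hartree--Fock system into a \emph{quadratic, local} elliptic system, to which classical analyticity theory applies. The paper explicitly explains why that device fails here: when the kinetic term is the first-order pseudodifferential operator $E(\bp)$, the augmented system has a second-order differential block in the $\phi_{i,j}$ but only a first-order block in the $\varphi_i$, so the leading symbol is degenerate elliptic and the standard theory does not apply. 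Relatedly, you treat $R_{\boldsymbol{\varphi}}$ as if it were an a priori analytic coefficient (``a Newtonian potential of an $L^1$ density''); its density $\sum_j|\varphi_j|^2$ is built from the unknowns themselves, so its derivative bounds must be fed in through the same induction (the paper does this via Lemmas~\ref{prL3}--\ref{lemanR-bis} with iterated elliptic a priori estimates). Two further structural points: the paper inverts the kinetic energy (working with $E(\bp)^{-1}$ acting on the right-hand side, giving smoothing operators $\Phi E(\bp)^{-1}D^\beta\chi$ with explicit $L^{\mathfrak p}\to L^{\mathfrak q^*}$ bounds) rather than commuting $\chi$ through $\sqrt{-\Delta+1}$ directly, and it uses the full telescoping identity of Lemma~\ref{Edgardo} with a nested family $\{\chi_k,\eta_k\}_{k=0}^{j}$ to capture the action of $j$ derivatives, not a single cutoff commutator. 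You would need both the $L^p$ framework and the nested localization to make the factorial recursion close.
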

\begin{remark}\label{rem:main-thm}
(i) The restrictions \(Z\alpha<2/\pi\), \(N<Z+1\), and \(N\ge2\) are only made to
ensure {\it existence} of \(H^{1/2}\)-solutions to \eqref{eq:HF}. In
fact, our proof proves analyticity away from \(\bx=0\) for
\(H^{1/2}\)-solutions to \eqref{eq:HF} for {\it any} \(Z\alpha\). For
the case \(N=1\), \eqref{eq:HF} reduces to 
\((T-V)\varphi=\varepsilon\varphi\) and our result also holds for
\(H^{1/2}\)-solutions to this equation (see also (iv) and (v) below about
more general \(V\) for which the result also holds for the linear equation). 
More interestingly,
the result also holds for 
\(H^{1/2}\)-solutions to \eqref{eq:non-lin1} (which, strictly
speaking, cannot be obtained from \eqref{eq:HF} by any choice of \(N\)).

(ii) The statement also holds for any eigenfunction of the
associated Hartree--Fock operator given by \eqref{eq:HF-operator}.

(iii) It is obvious from the proof that the theorem holds true if we
include spin. 

(iv)
  As will also be clear from the proof, the
  statement of Theorem~\ref{HF2} (appropriately modified) also holds
  for molecules. More explicitely, for a molecule with \(K\) nuclei of
  charges \(Z_1,\ldots,Z_K\), fixed at \(R_1,\ldots,R_K\in\rz^3\),
  replace \(V\) in \eqref{eq:HF} by 
   \( \sum_{k=1}^{K}V_k\)
  with \(V_k(\bx)=Z_k\alpha/|\bx-R_k|, Z_k\alpha<2/\pi\). Then, for
  \(N<1+\sum_{k=1}^KZ_k\), 
Hartree--Fock
  minimizers exist (see \cite[Remark 1 (viii)]{relHF}), and the corresponding
  Hartree--Fock orbitals  
 are real analytic away from the positions of the nuclei, i.e., belong
 to \(C^{\omega}(\R^3\setminus\{R_1,\ldots,R_K\})\). 

(v) Another approximation to the full quantum mechanical
  problem is the {\it multiconfiguration self-consistent field method}
  (MC-SCF). Here one minimizes the quadratic form  \(\mathfrak{q}\)
  defined by the operator \(H\) given in \eqref{Hamiltonian} (or, more
  generally, with \(V\) from (iv)) over the set of {\it finite} sums
  of Slater determinants instead of only on single Slater
  determinants as in Hartree--Fock theory. 
If minimizers exist they satisfy what is called
the {\it multiconfiguration 
  equations} (MC equations). For more details, see \cite{KS-HF,
  Friesecke, Lewin}. As will be clear from the proof, the statement of
Theorem~\ref{HF2} also holds for 
solutions to these equations.

(vi) In fact, for \(V\) we only need the analyticity of \(V\) away
from finitely many points in \(\R^{3}\), and certain 
integrability properties of 
\(V\varphi_{i}\) in the vicinity of each of these points, and at
infinity; for more details, see Remark~\ref{rem:exp-decay}.

(vii) As will be clear from the proof, the 
  statement of Theorem~\ref{HF2} also holds for other non-linearities
  than the Hartree-Fock term in \eqref{eq:HF}, namely
  \(|\varphi|^{k}\varphi\)
  as in \eqref{eq:non-lin2}
  (for \(k\) even; for \(k\) odd, one needs to take
  \(\varphi^{k+1}\)). The \(L^{p}\)-space in which one needs to study
  the problem (see Proposition~\ref{lemmaestimate} and the description
  of the proof below for details) needs to be chosen depending on
  \(k\) in this case (the larger the \(k\), the larger the \(p\)).

(viii) Also, as will be clear from the proof, the result holds if
\(T(-{\rm i}\nabla)=|\nabla|\) (i.e., \(T(\bp)=|\bp|\)) in
\eqref{eq:HF}. In \eqref{eq:use-equation} below, \(E(\bp)^{-1}\) should
then be replaced by \((|\bp|+1)^{-1}\) (and \ `\(1\)` added to
`\(\alpha^{-1}+\varepsilon_{i}\)'). The only properties of
\(E(\bp)^{-1}\) used are in Lemmas~\ref{bounded-mult-op}
and \ref{normsmooth-Lp}, which follow also for
\((|\bp|+1)^{-1}\) from the same methods 
with minor modifications. Similarly, one can replace
\(T(\bp)\) with
\((-\Delta+\alpha^{-2})^{s}\), \(s\in[1/2,1]\).

(ix) The result of Theorem~\ref{HF2} in the non-relativistic case
(\(T(-{\rm i}\nabla)\) replaced by \(-\alpha\Delta\) in
\eqref{Hamiltonian}) was proved in \cite{Friesecke, Lewin-thesis}; see also
the discussion below. In this case, it is furthermore known
\cite{KS-HF} that, for 
\(\bx\in B_r(0)\) for some \(r>0\), 
\(\varphi_{i}(\bx)=\varphi_{i}^{(1)}(\bx)+|\bx|\varphi_{i}^{(2)}(\bx)\)
with \(\varphi_{i}^{(1)},\varphi_{i}^{(2)}\in C^{\omega}(B_{r}(0))\).
\end{remark}

Combining the argument in \cite{Friesecke, Lewin-thesis} with 
the analyticity away
from the position of the nucleus of solutions to the MC equations 
 (see Remark~\ref{rem:main-thm} (v)) 
we readily obtain the following result.
\begin{theorem}\label{thm:QM-HF}
Let \(\Psi\) be a (quantum) ground state of the operator
\(H\) given in \eqref{Hamiltonian}. Then
\(\Psi\) is not a 
\emph{finite} linear combination of 
Slater determinants.
\end{theorem}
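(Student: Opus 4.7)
I would argue by contradiction, following the scheme of \cite{Friesecke, Lewin-thesis} and using the extension of the main analyticity result to MC solutions recorded in Remark~\ref{rem:main-thm}(v). Assume that a quantum ground state $\Psi$ can be written as a finite sum of Slater determinants, $\Psi=\sum_{k=1}^{K}c_{k}\Psi_{k}$, where each $\Psi_{k}$ is built from orbitals $\{u_{i,k}\}_{i=1}^{N}$; after a standard orthonormalization we may assume the $u_{i,k}$ form an orthonormal family. Since the MC-SCF infimum (over $K$-configuration wavefunctions) is bounded below by the quantum ground state energy but is actually attained on $\Psi$, the tuple $(\{c_{k}\},\{u_{i,k}\})$ is an MC-SCF minimizer, and therefore the orbitals satisfy the multiconfiguration equations.

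By Remark~\ref{rem:main-thm}(v), each $u_{i,k}$ belongs to $C^{\omega}(\R^{3}\setminus\{0\})$. Hence $\Psi$, being a finite sum of products of such functions, is real analytic on the open set
\[\Omega:=(\R^{3}\setminus\{0\})^{N}\subset\R^{3N},\]
which contains, in particular, every electron--electron coincidence configuration with $\bx_{i}=\bx_{j}\neq 0$.

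The contradiction now comes from the eigenvalue equation $H\Psi=E\Psi$. Isolating the singular multiplication, one rewrites the equation as
\[\Big(\sum_{1\le i<j\le N}\frac{1}{|\bx_{i}-\bx_{j}|}\Big)\Psi=\Big(E-\alpha^{-1}\sum_{j=1}^{N}T(-{\rm i}\nabla_{j})+\alpha^{-1}\sum_{j=1}^{N}V(\bx_{j})\Big)\Psi.\]
On the right-hand side, $V(\bx_{j})$ is real analytic on $\{\bx_{j}\neq 0\}$ and the nonlocal but pseudolocal operator $T(-{\rm i}\nabla_{j})$ preserves local real analyticity of $\Psi$ on $\Omega$, so the right-hand side is real analytic on $\Omega$. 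The left-hand side, by contrast, cannot be analytic at a coincidence point $P\in\Omega$ with $\bx_{i}=\bx_{j}\neq 0$ unless $\Psi$ vanishes to infinite order on the hyperplane $\{\bx_{i}=\bx_{j}\}$ near $P$; combined with analyticity of $\Psi$ on $\Omega$, this forces $\Psi\equiv 0$ on the connected component of $P$ in $\Omega$, contradicting $\|\Psi\|_{L^{2}}=1$.

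\textbf{Main obstacle.} The delicate ingredient is the pseudolocal analyticity of $T(-{\rm i}\nabla_{j})=\sqrt{-\Delta_{j}+\alpha^{-2}}-\alpha^{-1}$ at a coincidence point $P\in\Omega$: although the operator is nonlocal, its convolution kernel is a modified-Bessel-type kernel that is real analytic off the diagonal and has a well-understood singularity on it, and the $H^{1/2}$-regularity and exponential decay of $\Psi$ permit a clean splitting of the action of $T(-{\rm i}\nabla_{j})$ into an analytic local piece and an analytic nonlocal remainder near $P$. Once this pseudolocal analyticity is in hand, the cusp/non-analyticity argument at electron coincidences used in \cite{Friesecke, Lewin-thesis} in the non-relativistic setting transfers to our pseudorelativistic setting essentially verbatim.
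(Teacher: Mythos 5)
Your argument is the paper's own approach, which is essentially a one-sentence proof: combine the coincidence-point contradiction argument of Friesecke and Lewin with the analyticity of MC-SCF orbitals away from the nucleus established in Remark~\ref{rem:main-thm}(v). Your fleshing-out is accurate (Löwdin orthonormalization, identification of $\Psi$ as an MC-SCF minimizer whose orbitals solve the MC equations, analyticity of $\Psi$ on $(\R^3\setminus\{0\})^N$, and the non-analyticity of $|\bx_i-\bx_j|^{-1}\Psi$ at coincidence points forcing $\Psi$ to vanish to infinite order there and hence identically), and the point you isolate as the ``main obstacle''---that the nonlocal kinetic energy $T(-{\rm i}\nabla_j)$ must be shown to propagate local real analyticity across $(\R^3\setminus\{0\})^N$, i.e.\ to be analytically pseudolocal---is precisely the ingredient the paper leaves implicit when it asserts the result follows ``readily'' from the cited references.
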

\begin{remark}\label{rem:MCSCF}
The same holds with \(V\) as in Remark~\ref{rem:main-thm} (iv). 
\end{remark}

\noindent{\it Description of the proof of Theorem~\ref{HF2}:}
The proof of Theorem~\ref{HF2} is inspired by 
the standard Morrey-Nirenberg \cite{Morrey-Nirenberg} proof of
analyticity of solutions to general 
(linear) elliptic 
partial differential equations with real analytic coefficients 
by `nested balls'. A good presentation of this
technique can be found in 
\cite{Hormander}. (Other proofs using
a complexification of the coordinates also exist and have been applied
to both linear and non-linear equations; see \cite{Morrey-book} and references
therein.)

In \cite{Hormander} one proves $L^2$-bounds on derivatives of 
 order $k$ of the solution in a 
ball \(B_{r}\) (of some radius \(r\)) around a given point. These bounds
should behave suitably in $k$ in order to make the Taylor series of
the solution converge locally, thereby proving analyticity.

The proof of these bounds is inductive. In fact, for some ball
\(B_{R}\) with \(R>r\), one proves the bounds on all balls
\(B_{\rho}\) with \(r\le\rho\le R\), with the appropriate (with
respect to \(k\)) behaviour in \(R-\rho\). 
The induction basis is provided by standard elliptic estimates.
In the induction step, one has to bound $k+1$ derivatives of the solution
in the ball $B_{\rho}$. To do so,
one divides the difference 
$B_{R}\setminus B_{\rho}$ into $k+1$ nested balls using $k+1$
localization functions with 
successively larger supports. Commuting $m$ of the $k$ derivatives (in the
case of an operator of order $m$) with these localization functions
produces (local) differential operators of order $m-1$, with support in a
larger ball. These local commutator terms are controlled by the induction
hypothesis, since they contain one derivative less. 
For the last term---the
term where no commutators occur---one then uses the
equation.

This approach poses new technical difficulties in our case, 
due to the non-locality of the kinetic energy \(T(\bp)
=\sqrt{{}-\Delta+\alpha^{-2}}-\alpha^{-1}\) and the
non-linearity of the terms $R_{{\boldsymbol{\varphi}}}\varphi_{i}$ and 
$K_{{\boldsymbol{\varphi}}}\varphi_{i}$.

The non-locality of the operator
\(\sqrt{{}-\Delta+\alpha^{-2}}\) implies that, as opposed to the case
of a differential operator, the commutator of the kinetic energy with
a localization function is not localized in the support of the
localization function.
That is, when resorting to proving analyticity by differentiating the
equation, the localization argument described above introduces
commutators which 
are (non-local) pseudodifferential operators. Now the induction hypothesis
does not provide control of these terms. 
Furthermore, it is
far from obvious that the singularity of the potential \(V\) outside
\(B_{R}\) does not influence the regularity in
\(B_{R}\) of the solution through these operators (or rather,
through the non-locality of  
\(\sqrt{{}-\Delta+\alpha^{-2}}\)).
Loosely speaking,
the singularity of the nuclear potential `can be felt
everywhere'.
(Note that if we would not have a (singular) potential
$V$ one could proceed as in \cite{Frank-Lenzmann}
and prove global analyticity
by showing exponential decay of the solutions in Fourier space.)  

We overcome this problem by a new
localization argument which enable us to capture in more detail
the action of high order derivatives on nested balls (manifested in
Lemma~\ref{Edgardo} in Appendix~\ref{localization} below).
This, together
with very explicit bounds on the (smoothing) operators  \(\phi
E(\bp)^{-1}D^{\beta}\chi\) for $\chi$ and $\phi$ with disjoint
supports (see Lemma~\ref{normsmooth-Lp}), are the main ingredients in
solving the problem of nonlocality.
The estimates are on 
\(\phi E(\bp)^{-1}D^\beta\chi\) (not \(\phi E(\bp)D^\beta\chi\)),
since we invert \(E(\bp)\) (turning the equation into an integral operator
equation, see \eqref{eq:use-equation}). 
Our method of proof would also
work in the non-relativistic case, since the integral operators
\((-\Delta+1)^{-1}\) and \(E(\bp)^{-1}\) enjoy similar properties.

The second major obstacle is the (morally cubic) non-linearity of the terms
$R_{{\boldsymbol{\varphi}}}\varphi_{i}$ and  
$K_{{\boldsymbol{\varphi}}}\varphi_{i}$. 

To illustrate the problem, we discuss proving analyticity by the
above method (local \(L^2\)-estimates) for solutions \(u\)
to the equation \(\Delta u=u^3\). When differentiating this equation
(and therefore \(u^3\)), the application of Leibniz' rule introduces a
sum of terms. After using H{\"o}lder's inequality on each term (the
product of three factors, each a number of derivatives on \(u\)), one
needs to use a Sobolev inequality to `get back down to \(L^2\)' in
order to use the induction hypothesis. Summing the many terms, the
needed estimate does not come out (in fact, some Gevrey-regularity
would follow, but not analyticity). 

In the quadratic case this  can be done (that 
is, for the equation \(\Delta u=u^2\) this problem does {\it not}
occur), but in the cubic case, one looses too many derivatives.

The second insight of our proof is that this problem of loss of
derivatives may be
overcome 
by characterizing analyticity by growth of
derivatives in some \(L^p\) with \(p>2\). When working in \(L^p\) for
\(p>2\), the loss of derivatives in the Sobolev inequality mentioned
above is less (as seen in Theorem~\ref{adams}). Choosing \(p\)
sufficiently large allows us to prove the needed estimate. The
operator estimates on \(\phi E(\bp)^{-1}D^\beta\chi\) mentioned above
therefore 
have to be  \(L^p\)-estimates. In fact, using \(L^p-L^q\) estimates,
one can also deal with the problem that the singularity of the nuclear
potential \(V\) `can be felt everywhere'.

Note that taking \(p=\infty\) would
avoid using a Sobolev inequality altogether (\(L^\infty\) being an algebra), but
the needed estimates on  \(\phi E(\bp)^{-1}D^\beta\chi\) cannot hold
in this case. 
For local equations an approach to handle the loss of derivatives (due
to Sobolev inequalities) exists. This was carried out in
\cite{Friedman}, where analyticity of solutions to elliptic partial
differential equations with general analytic non-linearities was
proved. Friedman works in spaces of continuous functions. In this
approach, one needs to have a sufficiently high degree of regularity
of the solution beforehand (it is not proved along the way). Also,
since the elliptic regularity in spaces of continuous functions have
an inherent loss of derivative, one needs to work on a sufficiently
small domain in order for the method to work. We prefer to work in
Sobolev spaces since this is the natural setting for our equation and
since the needed estimates on the resolvent are readily obtained in
these spaces. 

For an alternative method of proof (one 
 {\it fixed} localization function, to the power \(k\), and estimating
 in a higher order Sobolev space (instead of in \(L^2\)) which is also
 an algebra), see Kato \cite{Kato-paper} (for the equation \(\Delta
 u=u^2)\) and Hashimoto \cite{Hashimoto} (for general second order non-linear
 analytic PDE's).

Additional technical
difficulties occur due to the fact that the cubic terms, 
\(R_{\boldsymbol{\varphi}}\varphi_{i}\) and
\(K_{\boldsymbol{\varphi}}\varphi_{i}\), 
are actually non-local.

Note that in the
proof that {\it non}-relativistic Hartree-Fock orbitals are analytic
away from the positions of the nuclei (see \cite{Friesecke, Lewin}),
the non-linearities are dealt with by 
cleverly re-writing the Hartree-Fock equations as 
a system. One introduces new functions
\(\phi_{i,j}=[\varphi_i\overline{\varphi_j}]*|\cdot|^{-1}\),
which satisfy \(\,-\Delta
\phi_{i,j}=4\pi\varphi_i\overline{\varphi_j}\). This 
eliminates the terms \(R_{{\boldsymbol{\varphi}}}\varphi_i,
K_{{\boldsymbol{\varphi}}}\varphi_i\), turning these into quadratic
products in the functions \(\varphi_{i},\phi_{i,j}\), 
hence one obtains a (quadratic and local)
non-linear system of elliptic second order equations with coefficients
analytic away from the positions of the nuclei. 
The result now follows from the results
cited above \cite{Kato-paper, Morrey-book}. (In fact, this
argument extends to solutions of the more general multiconfiguration
self-consistent field equations, see \cite{Friesecke,Lewin}.)

This idea cannot readily be extended to our case. The operator \(E(\bp)\)
is a pseudodifferential operator of first order, so when re-writing
the Hartree-Fock
equations as described above, one obtains a system of
pseudodifferential equations. This system is, as before, of second
(differential) order in the auxiliary functions
\(\phi_{i,j}\), but only of first (pseudodifferential) order in the
original functions \(\varphi_{i}\). Hence, the leading (second) order
matrix is singular elliptic. Hence (even if we ignore the fact that
the square root is non-local) the above argument does not apply.

To summarize, our approach is as follows. We invert the kinetic energy
in the equation for the orbitals thereby obtaining an integral
equation to which we apply successive differentiations. The
localization argument of Lemma~\ref{Edgardo} together with the
smoothing estimates on \(\phi E(\bp)^{-1}D^{\beta}\chi\)
handle the non-locality of this equation. By working in \(L^p\) for
suitably large \(p\) one can afford the necessary loss of derivatives
from using Sobolev inequalities when treating the non-linear terms.


\section{Proof of analyticity} 
In order to prove that the $\varphi_{i}$'s are real analytic
in $\R^3\setminus \{0\}$ it is sufficient \cite[Proposition
2.2.10]{Krantz} to prove that for every
$\bx_{0} \in \R^3\setminus\{0\}$ there exists an open set $U \subseteq
\R^3\setminus\{0\}$ containing \(\bx_{0}\), and constants
\(\mathcal{C}, \mathcal{R}>0\), such that 
\begin{align}\label{eq:aim-est}
  |\partial^{\beta}\varphi_{i}(\bx)|\le
  \mathcal{C}\,\frac{\beta!}{\mathcal{R}^{|\beta|}}
  \ \text{ for all } \bx\in U \text{ and all }\beta\in\N_{0}^{3}\,.
\end{align}

Let $\bx_{0} \in \mathbb{R}^3 \setminus\{0\}$, and let $\omega$ be
the ball \(B_{R}(\bx_0)\) with center $\bx_0$ and radius
$R:=\min\{1,|\bx_0|/4\}$.
For $\delta>0$ we denote by $\omega_{\delta}$ the set of points in
$\omega$ at distance larger than $\delta$ from $\partial \omega$,
i.e., 
\begin{equation}\label{omegadelta}
  \omega_{\delta}:= \{ \bx \in \omega\,|\, d(\bx,\partial \omega)>\delta\}\,.
\end{equation}
By our choice of $\omega$ we have
$\omega_{\delta}=B_{R-\delta}(\bx_{0})$. Therefore
\(\omega_{\delta}=\emptyset\) for \(\delta\ge R\). In particular,  by
our choice of \(R\), 
\begin{align}\label{eq:omega empty}
  \omega_{\delta}=\emptyset\quad \text{for}\quad\delta\ge 1\,.
\end{align}
 For \(\Omega\subseteq\R^{n}\) and \(p\ge1\) we let \(L^{p}(\Omega)\) denote the
 usual \(L^p\)-space with norm
 \(\|f\|_{L^p(\Omega)}=\big(\int_{\Omega}|f(\bx)|^p\,d\bx\big)^{1/p}\). We write
 \(\|f\|_{p}\equiv \|f\|_{L^p(\R^3)}\). 
In the following we equip the Sobolev space $W^{m,p}(\Omega)$,
\(\Omega\subseteq\R^{n}\), $m \in 
\mathbb{N}$ and $p \in [1,\infty)$, with the norm
\begin{equation}\label{def:Sob-norm}
  \| u \|_{W^{m,p}(\Omega)} := \sum_{|\sigma| \leq m} \| D^{\sigma} u
  \|_{L^{p}(\Omega)}\,.
\end{equation}

Theorem~\ref{HF2} follows from the following proposition.

\begin{prop}\label{lemmaestimate}
Let $Z\alpha <2/\pi $, and let $N\ge2$ be a positive integer such
that $N<Z+1$. 
Let \({\boldsymbol{\varphi}}=\{\varphi_1,\ldots,\varphi_N\}\),
 \(\varphi_{i}\in H^{1/2}(\R^{3})\), \(i=1,\ldots,N\), 
 be solutions to the pseudorelativistic Hartree-Fock equations in
 \eqref{eq:HF}. Let \(\bx_{0}\in\R^{3}\setminus\{0\}\),
 \(R=\min\{1,|\bx_{0}|/4\}\), 
and \(\omega=B_{R}(\bx_{0})\). Define
   \(\omega_{\delta}=B_{R-\delta}(\bx_{0})\) for \(\delta>0\).

Then for all \(p\ge 5\) there exist
constants $C,B>1$ such that for all  
$j\in\N$, for 
all $\epsilon >0$ such that \(\epsilon j\le R/2\),
and for all $i \in \{1, \dots,N\}$ we have 
\begin{equation}\label{eq:lemma2}
  \epsilon^{|\beta|} \|D^{\beta} \varphi_{i}\|_{L^{p}(\omega_{\epsilon j})}
   \leq
  C B^{|\beta|} \; \mbox{ for all } \; \beta \in \N_{0}^3 \; \mbox{ with } \;
  |\beta| \leq j\,. 
\end{equation}
\end{prop}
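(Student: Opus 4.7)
The plan is to prove \eqref{eq:lemma2} by induction on $j$, with constants $C,B\ge 1$ chosen sufficiently large (depending on $p$, on the solution $\boldsymbol{\varphi}$, and on the fixed data $Z,\alpha,\bx_0$) so that the inductive step closes. The base case $j\le j_0$, for some fixed $j_0$, follows from the smoothness of each $\varphi_i$ on $\omega$ proved in \cite{relHF}: the finitely many norms $\|D^\beta\varphi_i\|_{L^p(\omega)}$ with $|\beta|\le j_0$ are absorbed by enlarging $C$. For the inductive step, I would start from \eqref{eq:HF} and invert $E(\bp)=T(-i\nabla)+\alpha^{-1}$ to obtain
\begin{align*}
\varphi_i = E(\bp)^{-1}F_i,\qquad F_i := (\alpha^{-1}+\varepsilon_i)\varphi_i + V\varphi_i - \alpha R_{\boldsymbol{\varphi}}\varphi_i + \alpha K_{\boldsymbol{\varphi}}\varphi_i,
\end{align*}
and then, for $|\beta|=j$ and a smooth cutoff $\phi$ equal to $1$ on $\omega_{\epsilon j}$ and supported in $\omega_{\epsilon(j-1/2)}$ with $\|D^\gamma\phi\|_\infty\lesssim \epsilon^{-|\gamma|}$, estimate $\phi\,D^\beta E(\bp)^{-1}F_i$ in $L^p$.

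The non-locality of $E(\bp)^{-1}$ is handled by the nested-ball localization of Lemma~\ref{Edgardo}, applied to the shells $\omega_{\epsilon(j-\ell-1)}\setminus\omega_{\epsilon(j-\ell)}$, combined with the smoothing $L^p$-estimates of Lemma~\ref{normsmooth-Lp} on $\phi E(\bp)^{-1}D^\gamma\chi$ for $\phi,\chi$ with disjoint supports. This decomposes $\phi D^\beta E(\bp)^{-1}$ into local terms on $\omega_{\epsilon(j-1)}$ (controlled by the induction hypothesis) and a global remainder in which $E(\bp)^{-1}$ smooths across disjoint supports, absorbing all $|\beta|$ derivatives and leaving only $L^p$-type norms already controlled by the $H^{1/2}(\R^3)$-hypothesis and the smoothness of the $\varphi_j$ on compact sets away from the origin. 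The linear contribution $V\varphi_i$ is treated by Leibniz: $V=Z\alpha/|\bx|$ is real analytic on $\omega$, which is separated from $0$ by the choice $R\le|\bx_0|/4$, so $\|D^\gamma V\|_{L^\infty(\omega)}\le\gamma!\,A^{|\gamma|}$ for some $A>0$, and the resulting Cauchy-type sum is absorbed once $B\ge 2A$.

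The main obstacle is the non-linear part $R_{\boldsymbol{\varphi}}\varphi_i$ and $K_{\boldsymbol{\varphi}}\varphi_i$. Expanding $D^\gamma$ by Leibniz produces sums of products of three factors, each a high-order derivative of some $\varphi_k$, convolved with $|\cdot|^{-1}$; the convolution and its derivatives are controlled by Young or Hardy--Littlewood--Sobolev bounds together with the smoothness and exponential decay of the $\varphi_k$ away from the nucleus \cite{relHF}. The resulting triple products are estimated by H\"older in $L^p$: the two factors carrying the fewest derivatives are placed in $L^\infty$ via the Sobolev embedding of Theorem~\ref{adams} (in dimension three, $W^{1,p}(\R^3)\hookrightarrow L^\infty$ for $p>3$), while the dominant factor is kept in $L^p$. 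The choice $p\ge 5$ ensures that the resulting loss of at most one derivative per non-dominant factor is consistent with the induction hypothesis, and, weighted by the prefactor $\epsilon^{|\beta|}$ under the constraint $\epsilon j\le R/2$, produces a gain of a factor $\le 1/B$ that closes the constants. Working in $L^2$ would only yield Gevrey regularity here; the whole point of the $L^p$-formulation is precisely to convert the cubic non-linearity into an estimate compatible with the Morrey--Nirenberg induction.
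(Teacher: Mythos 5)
Your high-level architecture matches the paper's: induct on $j$ with a base case from \cite{relHF}, invert $E(\bp)$ to obtain \eqref{eq:use-equation}, use Lemma~\ref{Edgardo} to localize on nested balls, use Lemma~\ref{normsmooth-Lp} for the non-local tail, and treat $V\varphi_i$ by Leibniz together with the analyticity of $V$ on $\omega$. That part is essentially the paper's proof. The gaps are in the treatment of the non-linear terms $R_{\boldsymbol{\varphi}}\varphi_i$, $K_{\boldsymbol{\varphi}}\varphi_i$, and they are serious.

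First, your plan for $D^{\mu}U_{a,b}$ (with $U_{a,b}=(\varphi_a\overline{\varphi_b})*|\cdot|^{-1}$) does not work. You propose controlling the convolution ``and its derivatives'' by Young or Hardy--Littlewood--Sobolev. But $D^{\mu}_{\bx}U_{a,b}$ either differentiates the kernel, producing a singularity of order $|\bx-\by|^{-1-|\mu|}$ that is not locally integrable for $|\mu|\ge 2$, or, if one integrates by parts, requires \emph{global} derivative bounds on $\varphi_a\overline{\varphi_b}$ across the singularity at the origin, which one does not have. Young and HLS can only move between $L^p$-scales; they do not gain derivatives. The mechanism that actually closes this step in the paper is the PDE $-\Delta U_{a,b}=4\pi\varphi_a\overline{\varphi_b}$ together with the local elliptic a priori estimate of Corollary~\ref{rem:elliptic}, iterated on the nested balls (Lemma~\ref{lemanR-bis}), so that $\|D^{\mu}U_{a,b}\|_{L^{3p/2}}$ is reduced, two derivatives at a time, to $\|D^{\mu'}(\varphi_a\overline{\varphi_b})\|_{L^{3p/2}}$ with $|\mu'|\le|\mu|-2$ on a slightly larger ball, and then Lemma~\ref{prL3} is applied. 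That elliptic iteration is the step your proposal is missing, and without it the term $D^{\mu}U_{a,b}$ cannot be controlled by the induction hypothesis at all.

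Second, your choice of exponents and the reason you give for $p\ge 5$ are not the paper's and do not substantiate the claimed conclusion. You put two of the three factors in $L^\infty$ via $W^{1,p}\hookrightarrow L^\infty$ and assert ``loss of at most one derivative per non-dominant factor,'' with the guess that this closes the induction. In the interpolation of Theorem~\ref{adams}, $\mathfrak{q}=\infty$ gives $\theta=3/p$, so each $L^\infty$-embedding costs a fraction $3/p$ of a derivative, not one; and the issue is not the total budget but the summability of the Leibniz-partition sums. The paper instead splits $1/p=2/(3p)+1/(3p)$, placing $D^{\mu}U_{a,b}$ in $L^{3p/2}$ and $D^{\sigma-\beta_k-\mu}\varphi_i$ in $L^{3p}$ via Corollary~\ref{cor:adams} with $\theta=2/p$. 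The exponent $2\theta-2<-1$, i.e.\ $\theta<1/2$ i.e.\ $p>4$, is precisely what makes the inner sum over $m=|\mu|$ converge (see \eqref{eq:new-est-SF}--\eqref{eq:helping}; the explicit constant forces $p\ge5$). With your $L^\infty$-placement the corresponding exponent is $3/p$, which would need $p>6$ to even have the abstract summability, and in any case you have not produced the estimates \eqref{eq:1st}--\eqref{eq:helping} that actually close the inductive step. As it stands, the proposal identifies the right scaffolding but misses the two ingredients that make the cubic non-linearity compatible with the Morrey--Nirenberg induction: the iterated local elliptic estimate for $U_{a,b}$, and the fractional-loss interpolation with a carefully tracked partition sum.
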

Given Proposition~\ref{lemmaestimate}, the proof that
the $\varphi_{i}$'s are real analytic is standard, using Sobolev
embedding. 
We give the argument here for completeness.  We then give the proof of
Proposition~\ref{lemmaestimate} in the next section. 

Let $U=B_{R/2}(\bx_{0})=\omega_{R/2}\subseteq\omega$.
Using Theorem~\ref{lemmaSobolev} and \eqref{eq:lemma2} we have $\varphi_i \in C(\overline
U)$. Therefore it suffices to prove \eqref{eq:aim-est} for
$|\beta|\geq 1$. 
Fix $i \in \{1, \dots, N\}$ and consider $\beta \in \N_{0}^3 \setminus
\{0\}$ an arbitrary multiindex. Setting
$j=|\beta|$ and $\epsilon=(R/2)/j$ it follows from
Proposition~\ref{lemmaestimate} (since \(\epsilon j=R/2\)) that there
exists constants $C, B>1$ such 
that
\begin{equation}\label{aa1}
  \|D^{\beta} \varphi_i\|_{L^{p}(\omega_{R/2})}
  \leq C
  \Big(\frac{B}{\epsilon}\Big)^{|\beta|}   
  = C \Big(\frac{2B}{R}\Big)^{|\beta|}|\beta|^{|\beta|}\,, 
\end{equation}
with $C, B$ independent of the choice of $\beta$. 
By Theorem~\ref{lemmaSobolev}  (see also Remark~\ref{rem:Morrey})
there exists a constant \(K_{4}=K_{4}(p,\bx_{0})\) such that, 
for all $\beta' \in \N_{0}^3 \setminus \{ 0\}$,
\begin{align*}
  \sup_{\bx\in U} |D^{\beta'} \varphi_{i}(\bx)| & \leq  K_{4}
  \sum_{|\sigma| \leq 1} 
  \|D^{\beta' +\sigma} \varphi_{i}\|_{L^{p}(\omega_{R/2})}
  \\  
  & \leq K_{4} \sum_{|\sigma| \leq 1} C 
  \Big(\frac{2B}{R}\Big)^{|\sigma|+|\beta'|}
  \big(|\sigma|+|\beta'|\big)^{|\sigma|+|\beta'|}\,,
\end{align*}
using \eqref{aa1}. Using that \(R\le1\le B\), that
\(\#\{\sigma\in\N_{0}^3\,|\,|\sigma|=1\}=3\), and that, from
\eqref{eq:Abra-Ste-binom},
\begin{align*}
  \big(1+|\beta'|\big)^{1+|\beta'|}\le 
  \frac{{\rm e}}{\sqrt{2\pi}}\,{\rm e}^{2|\beta'|}\,|\beta'|!\,,
\end{align*}
this implies that for all \(\beta'\in\N_{0}^{3}\setminus\{0\}\),
\begin{align}\label{eq:final-analyticity-Kra}
  \sup_{\bx\in U} |D^{\beta'} \varphi_{i}(\bx)|
  \le \Big(\frac{8{\rm e}K_{4}CB}{\sqrt{2\pi}R}\Big)
  \Big(\frac{2{\rm e}^2B}{R}\Big)^{|\beta'|}|\beta'|!\,.
\end{align}
Since \(|\sigma|!\le 3^{|\sigma|}\sigma!\) for all
\(\sigma\in\N_{0}^{3}\) (see \eqref{eq:bound-sigma-fak} in
Appendix~\ref{Notation} below), this implies that 
\begin{align}\label{eq:final-analyticity-Kra-bis}
  \sup_{\bx\in U} |D^{\beta'} \varphi_{i}(\bx)|
  \le \mathcal{C}\,\frac{\beta'!}{\mathcal{R}^{|\beta'|}}\,,
\end{align}
for some \(\mathcal{C},\mathcal{R}>0\). 
This proves \eqref{eq:aim-est}.
Hence \(\varphi_{i}\) is real analytic in \(\R^3\setminus\{0\}\). This
finishes the proof of  Theorem~\ref{HF2}.

It therefore remains to prove Proposition~\ref{lemmaestimate}.

\begin{remark}\label{rem:constants}
We here give explicit choices for the constants \(C\) and \(B\) in
Proposition~\ref{lemmaestimate}.  

Let  
\begin{align}\label{eq:bound-U-a,b}
  C_{1}:=\max_{1\le a,b\le N}\Big\|\int_{\R^3}
  \frac{|\varphi_{a}(\by)
  \varphi_{b}(\by)|}{|\cdot-\by|}\,d\by\Big\|_{\infty}\,.
\end{align}
Note that by \eqref{eq:Kato} below, this is finite since \(\varphi_{i}\in
 H^{1/2}(\R^{3})\), \(i=1,\ldots,N\). 

Furthermore, let \(A=A(\bx_0)\ge1\) be such that, for all
\(\sigma\in\N_{0}^{3}\), 
\begin{equation}\label{Wanal0-bis}
  \sup_{\bx \in \omega} |D^{\sigma} V(\bx)| \leq A^{|\sigma|+1}
  |\sigma| !\,. 
\end{equation}
The existence of \(A\) follows from the real analyticity in
\(\omega=B_{R}(\bx_0)\) (recall that \(R=\min\{1,|\bx_0|/4\}\))  of 
\(V=Z\alpha|\cdot|^{-1}\) (see e.~g.~\cite[Proposition
2.2.10]{Krantz}). Assume without restriction that \(A\ge
\alpha^{-1}+\max_{1\le i\le N}|\varepsilon_{i}|\).

Let \(K_{1}=K_{1}(p), K_{2}=K_{2}(p)\), and \(K_{3}=K_{3}(p)\) be the
constants in Lemma~\ref{bounded-mult-op}, Corollary~\ref{cor:adams},
and  Corollary~\ref{rem:elliptic}, 
respectively (see Appendices~\ref{app:smoothing} and \ref{app:needed} below).
Then let
\begin{align}
  \label{eq:C-111}
  C_{2}&=\max\big\{K_{1},256\sqrt{2}/\pi\,\big\}\,,
  \\
  \label{eq:const-lem7}
  C_{3} & = \max\big\{4\pi(1+2C_{1}/R^{2})K_{3}, 160\pi
  K_{2}^2K_{3}\big\}\,. 
\end{align}
Choose
\begin{align}\label{rem:choice-C}\nonumber
  C>\max_{i\in\{1,\ldots,N\}}\big\{&1,
  \|\varphi_{i}\|_{W^{1,p}(\omega)},
  \|\varphi_{i}\|_{L^{3p}(B_{2R}(\bx_{0}))},  
  \frac{768}{\pi}|\bx_{0}|^{3(2-p)/(2p)}\|\varphi_{i}\|_{2},
  \\&\quad
  \big[\frac{48\sqrt2}{\pi}A+
  48\sqrt{2}C_{1}\frac{N}{Z\pi}+
  \frac{1536\sqrt2}{\pi^2|\bx_{0}|}
  \big]\|\varphi_{i}\|_{3}
  \big\}\,.
\end{align}
That \(C<\infty\) follows from the smoothness away from \(\bx=0\) of
the \(\varphi_{i}\)'s \cite[Theorem~1 (ii)]{relHF} and the fact that,
since \(\varphi_{i}\in H^{1/2}(\R^{3})\), \(1\le i\le N\), we have
\(\varphi_{i}\in L^{3}(\R^{3})\), \(1\le i\le N\), by Sobolev's
inequality.  
Then choose
\begin{align}\label{rem:choice-B}
  B>
  \max\big\{ 48AC_{2}, C_{*}, \frac{16}{|\bx_{0}|}, 4C_{1}^2,
  (160C^2K_{2}C_{3})^2, (24NC_{2}/Z)^2, 16K_{3}\big\}\,,
\end{align}
where \(C_{*}\) is the constant (related to a smooth partition of
unity) introduced in \eqref{eq:est-der-loc}.
In particular, \(B>48\).
We will prove Proposition~\ref{lemmaestimate} with these choices of
\(C\) and \(B\). 
\end{remark}


\section{Proof of the main estimate}\label{sec:main estimate}
We first make \eqref{eq:HF-functional} more
precise, thereby also explaining the choice of \(\mathcal{M}_{N}\)
in \eqref{eq:HF-constraints}.
 By Kato's inequality \cite[(5.33) p.\
307]{Kato},
\begin{align}\label{eq:Kato}
  \int_{\rz^3}\frac{|f(\bx)|^2}{|\bx|}\,d\bx
  \le \frac{\pi}{2}\int_{\rz^3}|\bp||\hat{f}(\bp)|^2\,d\bp\  \text{
   for }\ 
   f\in H^{1/2}(\rz^3)
\end{align}
(where \(\hat{f}(\bp)=(2\pi)^{-3/2}\int_{\R^3}{\rm e}^{-{\rm
    i}\bx\cdot\bp}f(\bx)\,d\bx\) denotes the Fourier transform of
\(f\)), and the KLMN theorem \cite[Theorem X.17]{RS2} 
the operator \(h_0\) given as
\begin{align}\label{def:h_0}
  h_0=T(-{\rm i}\nabla)-V
\end{align} 
is well-defined on \(H^{1/2}(\R^{3})\) 
(and bounded below by \(-\,\alpha^{-1}\))
as a form sum when
\(Z\alpha<2/\pi\), that is, 
\begin{align}\label{1p}\nonumber
  (u,h_0v)=(E(\bp)^{1/2}u,E(\bp)^{1/2}v)-\alpha^{-1}(u,&v)
  -(V^{1/2}u,V^{1/2}v)
  \\& \text{ for }\  u,v \in
  H^{1/2}(\mathbb{R}^3)\,.
\end{align}
By abuse of notation, we write
 \(E(\bp)\) for the (strictly positive) operator \(E(-{\rm
   i}\nabla)=\sqrt{-\Delta+\alpha^{-2}}\).
For \((\varphi_{1},\ldots,\varphi_{N})\in\mathcal{M}_{N}\), the
function \(R_{\boldsymbol{\varphi}}\) given in \eqref{Rgamma} belongs
to \(L^{\infty}(\R^{3})\) (using Kato's inequality above), and the operator
\(K_{\boldsymbol{\varphi}}\) given in \eqref{Kgamma} is
Hilbert-Schmidt (see \cite[Lemma~2]{relHF}). As a consequence, when
\(Z\alpha<2/\pi\), the
operator \(h_{\boldsymbol{\varphi}}\) in \eqref{eq:HF-operator} 
is a well-defined self-adjoint 
operator 
with quadratic form domain $H^{1/2}(\mathbb{R}^3)$ such
that 
\begin{equation}\label{hgamma}
  (u, h_{{\boldsymbol{\varphi}}} v)= 
  (u,h_{0} v)+
  \alpha (u,R_{{\boldsymbol{\varphi}}}v)-\alpha (u,K_{{\boldsymbol{\varphi}}}v)
  \,\text{ for }\,u,v\in
  H^{1/2}(\mathbb{R}^3)\,. 
\end{equation}
Since \((u,
R_{{\boldsymbol{\varphi}}}u)-(u,K_{{\boldsymbol{\varphi}}}u)\ge0\) for
any \(u\in L^{2}(\R^{3})\), also 
\(h_{{\boldsymbol{\varphi}}}\) is bounded from below by
\(-\,\alpha^{-1}\). 

Then, for \((u_1,\ldots,u_N)\in\mathcal{M}_{N}\), the precise version
of \eqref{eq:HF-functional} becomes
\begin{align}\label{eq:HF-functional-bis}\nonumber
  \mathcal{E}^{\rm HF}&(u_1,\ldots,u_N)
   \\\nonumber&=\sum_{j=1}^{N} \alpha^{-1}(u_{j},h_0u_{j})
   +\frac{1}{2}\sum_{1\le i,j\le N}\int_{\R^3}\int_{\R^3}
   \frac{|u_{i}(\bx)|^2|u_{j}(\by)|^2}{|\bx-\by|}
   \,d\bx d\by
   \\&\qquad\qquad\quad 
    -\frac{1}{2}\sum_{1\le i,j\le N}\int_{\R^3}\int_{\R^3}
   \frac{\overline{u_{j}(\bx)}u_{i}(\bx)\overline{u_{i}(\by)}u_{j}(\by)}{|\bx-\by|} 
   \,d\bx d\by\,.
\end{align}
The considerations on \(R_{{\boldsymbol{\varphi}}}\) and
  \(K_{{\boldsymbol{\varphi}}}\) above 
imply that also the non-linear
terms in \eqref{eq:HF-functional-bis} are finite for \(u_{i}\in
H^{1/2}(\R^{3})\), \(1\le i\le N\).  

If \((\varphi_1,\ldots,\varphi_N)\in\mathcal{M}_{N}\) is a critical
point of \(\mathcal{E}^{\rm HF}\) in \eqref{eq:HF-functional-bis},
then 
\({\boldsymbol{\varphi}}=\{\varphi_1,\ldots,\varphi_N\}\) satisfies
the self-consistent HF-equations \eqref{eq:HF-operator eigen} with
the operator \(h_{\boldsymbol{\varphi}}\) defined above.

Note that \(E(\bp)\) is a bounded operator
from \(H^{1/2}(\R^{3})\) to \(H^{-1/2}(\R^{3})\),  
and recall that \eqref{eq:Kato} shows that \(V\) also defines a bounded
operator from \(H^{1/2}(\R^{3})\) to \(H^{-1/2}(\R^{3})\) (for any \(Z\alpha\)).
 As noted above, both \(R_{\boldsymbol{\varphi}}\) and
 \(K_{\boldsymbol{\varphi}}\) are bounded operators on
 \(L^{2}(\R^{3})\) when
 \((\varphi_1,\ldots,\varphi_N)\in\mathcal{M}_{N}\). 
In particular, this shows that if
\((\varphi_1,\ldots,\varphi_N)\in\mathcal{M}_{N}\) solves
\eqref{eq:HF-operator eigen}, then
\begin{align}\label{eq:in H-1/2}
   E(\bp)\varphi_{i}-\alpha^{-1}\varphi_{i}-V\varphi_{i}+\alpha
   R_{\boldsymbol{\varphi}}\varphi_{i} - \alpha
   K_{\boldsymbol{\varphi}}\varphi_{i} = \varepsilon_{i}\varphi_{i}
   \ , \ 
  1\le i\le N\,,
\end{align}
hold as equations in \(H^{-1/2}(\R^{3})\).
Using that
 \(E(\bp)^{-1}\)  is a bounded operator 
 from \(H^{-1/2}(\R^{3})\) to \(H^{1/2}(\R^{3})\), this implies that,
 as  equalities in \(H^{1/2}(\R^{3})\) (and therefore, in particular, in
 \(L^{2}(\R^{3})\)), 
 \begin{align}\label{eq:use-equation}\nonumber
   \varphi_{i} &= E(\bp)^{-1}  V \varphi_i
     -\alpha E(\bp)^{-1}  R_{{\boldsymbol{\varphi}}} \varphi_i
    \\&\ +\alpha E(\bp)^{-1}  K_{{\boldsymbol{\varphi}}}
      \varphi_i
     +(\alpha^{-1}+\varepsilon_i) E(\bp)^{-1}
       \varphi_i
  \ , \quad
  1\le i\le N\,,
 \end{align}
\begin{pf*}{Proof of Proposition~\ref{lemmaestimate}}
The proof of Proposition~\ref{lemmaestimate} is by induction on
\(j\in\N_{0}\). More precisely:
\begin{defn}\label{def:ihy}
For \(p\ge 1\) and \(j\in\N_{0}\), let \(\mathcal{P}(p,j)\) be the
statement:

For all
$\epsilon >0$ with \(\epsilon j\le R/2\), and all $i
\in \{1, \dots,N\}$ we have  
\begin{equation}\label{eq:lemma2-ihy}
  \epsilon^{|\beta|} \|D^{\beta} \varphi_{i}\|_{L^{p}(\omega_{\epsilon j})}
  \leq
  C\, B^{|\beta|} \; \mbox{ for all } \; \beta \in \N_{0}^3 \; \mbox{ with } \;
  |\beta|\le j\,,
\end{equation}
with \(C,B>1\) the constants in Remark~\ref{rem:constants}.
\end{defn}

Then Proposition~\ref{lemmaestimate} is equivalent to the statement:
For all \(p\ge5\),  \(\mathcal{P}(p,j)\) holds for all \(j\in\N_{0}\). 
This is the statement we will prove by induction on \(j\in\N_{0}\).

\ %

{\bf Induction start:} For convenience, we prove the induction start
for both \(j=0\) and \(j=1\).

Note that \(\mathcal{P}(p,0)\)
trivially holds since (see Remark~\ref{rem:constants})
\begin{align}\label{eq:cond-C-one}
  C=C(p)>\max_{1\le i\le N}\|\varphi_{i}\|_{L^{p}(\omega)}\,.
\end{align}
Also \(\mathcal{P}(p,1)\) holds by the choice of \(C\), since
\begin{align}\label{eq:cond-C-two}
  C=C(p)>\max_{\stackrel{1\le i\le N,}
  {\nu\in\{1,2,3\}}}\|D_{\nu}\varphi_{i}\|_{L^{p}(\omega)}\,.
\end{align}
Namely, since \(\omega_{\epsilon}\subseteq\omega\),
\eqref{eq:lemma2-ihy} holds for \(|\beta|=0\)  (and {\it all}
\(\epsilon>0\)) using
\eqref{eq:cond-C-one}. For
\(\beta\in\N_{0}\) with \(|\beta|=1=j\) (i.e., \(\beta=e_{\nu}\) for
some \(\nu\in\{1,2,3\}\)), and all \(\epsilon>0\) with \(\epsilon
=\epsilon j\le R/2<1\),  
\begin{align}\label{eq:ind-statr-for-beta=1}\nonumber
  \epsilon^{|\beta|}
  \|D^{\beta}\varphi_{i}\|_{L^{p}(\omega_{\epsilon j})}
  &=\epsilon\|D_{\nu}\varphi_{i}\|_{L^{p}(\omega_{\epsilon})}
  \le\|D_{\nu}\varphi_{i}\|_{L^{p}(\omega)}
  \\&\le C\le C B=C B^{|\beta|}\,.
\end{align}
Here we again used that \(\omega_{\epsilon}\subseteq\omega\),
\eqref{eq:cond-C-two}, and that \(B>1\) (see 
Remark~\ref{rem:constants}). 

We move on to the induction step.\\

\noindent{\bf Induction hypothesis:}
\begin{align}\label{eq:ihy}
\text{Let \(p\ge5\) and \(j\in\N_{0}\), \(j\ge 1\). Then
\(\mathcal{P}(p,\tilde{j})\) holds for all \(\tilde{j}\le j\).}
\end{align}
 
We now prove that \(\mathcal{P}(p,j+1)\) holds. Note that to prove
this, it suffices to study \(\beta\in\N_0^3\)
with \(|\beta|=j+1\). Namely, assume \(\epsilon>0\) is such that
\(\epsilon(j+1)\le R/2\) and let \(\beta\in\N_{0}^{3}\) with
\(|\beta|<j+1\). Then \(|\beta|\le j\) and \(\epsilon j\le R/2\) so,
by the definition of \(\omega_{\delta}\) and the induction hypothesis, 
\begin{align}\label{eq:enough-top-beta}
  \epsilon^{|\beta|}
  \|D^{\beta}\varphi_{i}\|_{L^{p}(\omega_{\epsilon(j+1)})}
  \le\epsilon^{|\beta|}
  \|D^{\beta}\varphi_{i}\|_{L^{p}(\omega_{\epsilon j})}
  \le C B^{|\beta|}\,.
\end{align}
It therefore remains to prove that
\begin{align}\label{est:to-prove}\nonumber
  \epsilon^{|\beta|} 
  \|D^{\beta} \varphi_{i}\|_{L^{p}(\omega_{\epsilon(j+1)})}
  \leq
  C\, &B^{|\beta|} \quad 
  \mbox{ for all } \; \epsilon>0 \; \mbox{ with }
  \; \epsilon(j+1)\le R/2 
  \\
  &\mbox{ and all }
  \beta \in \N_{0}^3 \; \mbox{ with } \;
  |\beta|=j+1\,.
\end{align}

\begin{remark}\label{ihy}
To use the induction hypothesis in its entire strength, it is
convenient to write, for  \(\ell>0\), \(\epsilon>0\) such that
\(\epsilon \ell\le 
R/2\), and
\(\sigma\in\N_0^3\) with \(0<|\sigma|\le j\),
\begin{equation*}
  \|D^{\sigma}\varphi_{i}\|_{L^{p}(\omega_{\epsilon \ell})}
  = 
  \|D^{\sigma}\varphi_{i}\|_{L^{p}(\omega_{\tilde{\epsilon}\tilde{j}})}
  \quad\text{with}\quad
  \tilde{\epsilon}=\frac{\epsilon \ell}{|\sigma|},\ \tilde{j}=|\sigma|
  \,, 
\end{equation*}
so that, by the induction hypothesis (applied on the term with
$\tilde{\epsilon}$ and \(\tilde{j}\)) we get that
\begin{align}\label{eq:ihy-new}
  \|D^{\sigma}\varphi_{i}\|_{L^{p}(\omega_{\epsilon\ell})}
  \leq C
  \Big(\frac{B}{\tilde{\epsilon}}\Big)^{|\sigma|}
  = C 
  \Big( \frac{|\sigma|}{\ell} \Big)^{|\sigma|}
  \Big(\frac{B}{\epsilon}\Big)^{|\sigma|}\,.  
\end{align}
Compare this with \eqref{eq:lemma2-ihy}.
With the convention that \(0^0=1\), \eqref{eq:ihy-new} also holds for \(|\sigma|=0\). 
\end{remark}
We choose a function  $\Phi$ (depending on \(j\)) satisfying 
\begin{equation}\label{def:Phi}
  \Phi \in C^{\infty}_{0}(\omega_{\epsilon (j+3/4)})\,,\quad
  0\le\Phi\le1\,,\quad
  \mbox{
  with }\; \Phi \equiv 1 \; \mbox{ on }\; \omega_{\epsilon(j+1)}\,. 
\end{equation}
Then
\begin{align}\label{dual}
   \|D^{\beta}\varphi_i\|_{L^{p}(\omega_{\epsilon(j+1)})}
   \le \|\Phi D^{\beta}\varphi_i\|_{p} \,.
\end{align}
The estimate \eqref{est:to-prove}--and hence, by induction, the proof of
Proposition~\ref{lemmaestimate}---now follows from the equations
\eqref{eq:use-equation} for the 
\(\varphi_{i}\)'s, 
\eqref{dual}
and the following two lemmas.  

\begin{lemma} \label{lem:V}
Assume \eqref{eq:ihy} (the induction
  hypothesis) holds. Let \(\Phi\) 
be as in \eqref{def:Phi}.
Then for all \(i\in\{1,\ldots,N\}\), 
all $\epsilon >0$ with \(\epsilon(j+1)\le R/2\), and all $\beta \in
\N_{0}^3$ with $|\beta|=j+1$, both \(\Phi D^{\beta} E(\bp)^{-1}  V
\varphi_i\) and \(\Phi D^{\beta} E(\bp)^{-1}  \varphi_i\) belong to
\(L^{p}(\R^{3})\), and
\begin{align}\label{est-V}
   \|\Phi D^{\beta} E(\bp)^{-1}  V \varphi_i\|_{p}
   &\leq \frac{C}{4}\Big(\frac{B}{\epsilon}\Big)^{|\beta|}\,,\\ 
   \label{est-V-bis}
  \|(\alpha^{-1}+\varepsilon_{i})\Phi D^{\beta} E(\bp)^{-1}  \varphi_i\|_{p}
   &\leq \frac{C}{4}\Big(\frac{B}{\epsilon}\Big)^{|\beta|}\,, 
\end{align}
where $C, B>1\) are the constants in \eqref{eq:lemma2-ihy} (see also
Remark~\ref{rem:constants}). 
\end{lemma}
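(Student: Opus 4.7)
The natural plan is to exploit the commutation of the Fourier multiplier $E(\bp)^{-1}$ with $D^{\beta}$ and then handle the nonlocality of $E(\bp)^{-1}$ by introducing an auxiliary cutoff. Specifically, I would pick a smooth cutoff $\chi_{1}$ equal to $1$ on a neighborhood of $\mathrm{supp}\,\Phi$ (say on $\omega_{\epsilon(j+1/2)}$) and supported in $\omega_{\epsilon(j+1/4)}$, with its derivatives controlled by the localization estimates of Lemma~\ref{Edgardo}, and set $\chi_{0}=1-\chi_{1}$. Writing
\begin{align*}
\Phi D^{\beta}E(\bp)^{-1}(V\varphi_{i})=\Phi E(\bp)^{-1}D^{\beta}\bigl(\chi_{1}V\varphi_{i}\bigr)+\Phi E(\bp)^{-1}D^{\beta}\bigl(\chi_{0}V\varphi_{i}\bigr),
\end{align*}
the first (\emph{local}) term is supported where $V$ is analytic (since $0\notin\omega$), while the second (\emph{nonlocal}) term is obtained by applying $E(\bp)^{-1}$ to something whose support is disjoint from that of $\Phi$, hence a smoothing operator acts.

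For the local term, I would use that $E(\bp)^{-1}$ is bounded on $L^{p}(\R^{3})$ (Lemma~\ref{bounded-mult-op}), reducing matters to bounding $\|D^{\beta}(\chi_{1}V\varphi_{i})\|_{p}$. Expanding with the Leibniz rule distributes the $j+1$ derivatives among $\chi_{1}$, $V$, and $\varphi_{i}$. The derivatives of $\chi_{1}$ contribute factors of size $(C_{*}/\epsilon)^{|\cdot|}|\cdot|^{|\cdot|}$ from Lemma~\ref{Edgardo}; the derivatives of $V$ contribute $A^{|\cdot|+1}|\cdot|!$ from \eqref{Wanal0-bis}; and the derivatives of $\varphi_{i}$ on $\omega_{\epsilon(j+1/4)}\supset\mathrm{supp}(\chi_{1}V\varphi_{i})$ are controlled by the induction hypothesis in the form \eqref{eq:ihy-new}, giving $C(|\cdot|/\ell)^{|\cdot|}(B/\epsilon)^{|\cdot|}$ for a suitable $\ell$. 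Collecting and using the multinomial identities for $\binom{\beta}{\sigma_{1},\sigma_{2},\sigma_{3}}$ together with Stirling to convert $|\sigma|^{|\sigma|}$ into $|\sigma|!$, I expect a bound of the form $C'(j+1)^{j+1}(B/\epsilon)^{|\beta|}$ times a summable series in $|\beta|$, so that choosing $B$ large (as done in \eqref{rem:choice-B}) absorbs all combinatorial and constant factors and produces the desired $\tfrac{C}{4}(B/\epsilon)^{|\beta|}$.

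For the nonlocal term, since $\mathrm{dist}(\mathrm{supp}\,\Phi,\mathrm{supp}\,\chi_{0})\gtrsim\epsilon$, the smoothing estimate of Lemma~\ref{normsmooth-Lp} applied to $\Phi E(\bp)^{-1}D^{\beta}\chi_{0}$ gains arbitrarily many derivatives with explicit control in $|\beta|$. I would then estimate $\|\chi_{0}V\varphi_{i}\|_{q}$ (for the appropriate dual $q$) by splitting $\chi_{0}V\varphi_{i}$ into a piece supported near $\bx=0$, where one invokes Kato's inequality \eqref{eq:Kato} together with $\varphi_{i}\in L^{3}(\R^{3})$, and a piece bounded away from the singularity, where $V$ is $L^{\infty}$ and one uses $\|\varphi_{i}\|_{2}$. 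The explicit constants chosen in \eqref{rem:choice-C} are tailored precisely to absorb both of these contributions.

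The proof of \eqref{est-V-bis} follows the same two-region scheme but is strictly simpler: there is no $V$-factor, so the Leibniz expansion only distributes derivatives between $\chi_{1}$ and $\varphi_{i}$, and the nonlocal piece is handled by the same smoothing estimate applied to $\chi_{0}\varphi_{i}\in L^{q}$. The main obstacle I expect is the bookkeeping in the local term: getting the Leibniz sum, the factorials from $D^{\sigma}V$, the induction factors $(|\sigma|/\ell)^{|\sigma|}$, and the cutoff derivative bounds to combine, via Stirling, into a single geometric series in $B^{-1}$ whose total is at most $\tfrac14$. This forces careful tracking of numerical constants and is precisely what motivates both the explicit parameters in Remark~\ref{rem:constants} and the use of $L^{p}$ with $p\ge 5$ rather than $L^{2}$ in the Sobolev embedding estimates that feed the induction.
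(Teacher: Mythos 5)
Your decomposition into a single cutoff $\chi_1$ (local piece) plus $\chi_0=1-\chi_1$ (nonlocal piece) is a genuinely different strategy from the paper, and the local piece does not close. The paper's Lemma~\ref{Edgardo} is a \emph{telescoping identity} producing $j+1$ nested cutoffs $\chi_0,\dots,\chi_j$: the term involving $D^{\beta_k}\chi_k$ carries exactly $k$ derivatives on a cutoff whose support sits at distance $\gtrsim\epsilon k$ from $\supp\Phi$, and the whole object $\Phi E(\bp)^{-1}D_\nu D^{\beta_k}\chi_k$ is estimated \emph{as an operator} via Lemma~\ref{normsmooth-Lp}, where the $(k+1)!$ from differentiating the kernel is cancelled against the $(\epsilon k)^{-(k+1)}$ from the growing distance, leaving a clean geometric factor $(8/\epsilon)^k$ (see \eqref{eq:est-smooth-factor-k>0}). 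This linear growth of the distance with the order of differentiation is the entire point of the Morrey--Nirenberg nested-ball scheme and is exactly what your single-cutoff split destroys.

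Concretely, if you Leibniz-expand $D^{\sigma}(\chi_1 V\varphi_i)$ (with $|\sigma|=j$, having absorbed one derivative into $E(\bp)^{-1}D_\nu$), the worst term puts all $j$ derivatives on $\chi_1$. Even with an optimal order-adapted cutoff satisfying $|D^{\gamma}\chi_1|\le (C_*|\gamma|/\epsilon)^{|\gamma|}$ for $|\gamma|\le j$ --- note that this bound does \emph{not} come from Lemma~\ref{Edgardo}; the paper only uses and only needs the first-derivative bound \eqref{eq:est-der-loc} --- that term has size $\sim CA\,(C_*j/\epsilon)^{j}$, whereas the target is $\tfrac{C}{4}(B/\epsilon)^{j+1}$. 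Their ratio is
\begin{align*}
\frac{(B/\epsilon)^{j+1}}{(C_*j/\epsilon)^{j}}
=\frac{B^{j+1}}{\epsilon\,(C_*j)^{j}}
\lesssim (j+1)\Big(\frac{B}{C_*j}\Big)^{j},
\end{align*}
which tends to $0$ as $j\to\infty$ for every fixed $B$: the single-cutoff term dominates the target by a super-exponential factor $j^{j}$, and no choice of constants in Remark~\ref{rem:constants} can absorb it. Moreover, compactly supported cutoffs whose $|\gamma|$-th derivatives are bounded by $(C/\epsilon)^{|\gamma|}|\gamma|!$ uniformly in $|\gamma|$ (analytic class) do not exist, so there is no better $\chi_1$ to reach for. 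This is precisely the loss of analyticity that the telescoping identity in Lemma~\ref{Edgardo} is designed to avoid. Your treatment of the nonlocal piece (smoothing estimate plus a near/far split around $\bx=0$ with $L^1$ and $L^3$ norms) matches the paper's handling of the last, $\eta_j$-term, and your remark that \eqref{est-V-bis} is the same argument with $V$ replaced by the constant $\alpha^{-1}+\varepsilon_i$ is correct; but the local piece, which is where the nested structure actually does its work, is broken as written.
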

\begin{lemma} \label{lem:R-K}
Assume \eqref{eq:ihy} (the induction hypothesis) holds. Let \(\Phi\)
be as in \eqref{def:Phi}.
Then for all \(i\in\{1,\ldots,N\}\), 
all $\epsilon >0$ with \(\epsilon(j+1)\le R/2\), and all $\beta \in
\N_{0}^3$ with $|\beta|=j+1$, both \(\Phi D^{\beta} E(\bp)^{-1}  R_{{\boldsymbol{\varphi}}} 
  \varphi_i\) and \(\Phi D^{\beta}  E(\bp)^{-1}  K_{{\boldsymbol{\varphi}}}
   \varphi_i\) belong to
\(L^{p}(\R^{3})\), and
\begin{align*}
  \|\alpha\,\Phi D^{\beta} E(\bp)^{-1}  R_{{\boldsymbol{\varphi}}} 
  \varphi_i\|_{p}
  &\leq \frac{C}{4}\Big(\frac{B}{\epsilon}\Big)^{|\beta|}\,, \\
  \|\alpha\,\Phi D^{\beta}  E(\bp)^{-1}  K_{{\boldsymbol{\varphi}}}
  \varphi_i\|_{p}
  &\leq \frac{C}{4}\Big(\frac{B}{\epsilon}\Big)^{|\beta|}\,,  
\end{align*}
where $C, B>1\) are the constants in \eqref{eq:lemma2-ihy} (see also
Remark~\ref{rem:constants}). 
\end{lemma}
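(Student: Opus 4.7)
The plan is to mirror the structure of the proof of Lemma~\ref{lem:V} while handling the two new features of the cubic, non-local terms $R_{\boldsymbol{\varphi}}\varphi_i$ and $K_{\boldsymbol{\varphi}}\varphi_i$. Since $D^\beta$ commutes with the Fourier multiplier $E(\bp)^{-1}$, I would first rewrite
\[
\Phi D^\beta E(\bp)^{-1} R_{\boldsymbol{\varphi}}\varphi_i = \Phi E(\bp)^{-1} D^\beta (R_{\boldsymbol{\varphi}}\varphi_i),
\]
and analogously for the exchange term. Then I would introduce auxiliary cut-offs in the spirit of Lemma~\ref{Edgardo}: a function $\chi$ supported slightly inside $\omega_{\epsilon(j+1/2)}$ and identically $1$ on a neighborhood of $\supp\Phi$, so as to split the expression as
\[
\Phi E(\bp)^{-1} \chi D^\beta (R_{\boldsymbol{\varphi}}\varphi_i) + \Phi E(\bp)^{-1}(1-\chi) D^\beta (R_{\boldsymbol{\varphi}}\varphi_i),
\]
after pushing $D^\beta$ through the cut-off at the cost of lower-order commutator terms that the induction hypothesis (Remark~\ref{ihy}) can absorb.

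The far part is the easy one: since $\supp\Phi$ and $\supp(1-\chi)$ are disjoint, Lemma~\ref{normsmooth-Lp} provides explicit $L^q\to L^p$ bounds on $\Phi E(\bp)^{-1} D^\beta(1-\chi)$, and one need only control a fixed $L^q$-norm of $R_{\boldsymbol{\varphi}}\varphi_i$ (uniformly in $\beta$), which follows from \eqref{eq:bound-U-a,b} and Sobolev embedding. The choice of $B$ in Remark~\ref{rem:constants} is calibrated so that the resulting bound is at most $\tfrac{C}{8}(B/\epsilon)^{|\beta|}$. For the near part I would apply Leibniz's rule
\[
\chi D^\beta (R_{\boldsymbol{\varphi}}\varphi_i) = \sum_{\sigma\le\beta}\binom{\beta}{\sigma}\chi(D^\sigma R_{\boldsymbol{\varphi}})(D^{\beta-\sigma}\varphi_i),
\]
and, for $|\sigma|\ge 1$, use the identity $-\Delta R_{\boldsymbol{\varphi}}=4\pi\sum_j|\varphi_j|^2$ to rewrite $D^\sigma R_{\boldsymbol{\varphi}} = \sum_j (D^\sigma|\varphi_j|^2)\ast|\cdot|^{-1}$, expanding $D^\sigma|\varphi_j|^2$ once more by Leibniz into $\sum_{\tau\le\sigma}\binom{\sigma}{\tau}(D^\tau\varphi_j)(D^{\sigma-\tau}\overline{\varphi_j})$. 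After an inner local-vs-global split of the convolution with $|\cdot|^{-1}$ (using Corollary~\ref{rem:elliptic}), H\"older's inequality with three exponents $\tfrac{1}{p_1}+\tfrac{1}{p_2}+\tfrac{1}{p_3}=\tfrac{1}{p}$ reduces each summand to a product of three $L^{p_\ell}$-norms of derivatives of orbitals. Sobolev embedding (Corollary~\ref{cor:adams}) trades $L^{p_\ell}$ for $L^p$ at the cost of only a $p$-dependent number of extra derivatives per factor, and Remark~\ref{ihy} then supplies $C(|\sigma_\ell|/\ell)^{|\sigma_\ell|}(B/\epsilon)^{|\sigma_\ell|}$ for each piece. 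The multinomial coefficients collapse with the powers $|\sigma_\ell|^{|\sigma_\ell|}$ via Stirling-type identities to a bound proportional to $C^3(B/\epsilon)^{|\beta|}$, which the lower bound on $B$ in \eqref{rem:choice-B} is tailored to absorb into $\tfrac{C}{8}(B/\epsilon)^{|\beta|}$. The exchange term is handled by the same chain of arguments since, after interchange of summation and integration, its integrand has precisely the same trilinear $\varphi_i$-$\varphi_j$-$\overline{\varphi_j}$ structure.

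The principal obstacle is the cubic non-linearity: distributing $|\beta|+1$ derivatives among three orbital factors would lose too many derivatives if one worked in $L^2$, even with the best available Sobolev inequality. The whole reason for formulating Proposition~\ref{lemmaestimate} in $L^p$ with $p\ge 5$ is precisely to limit the Sobolev loss to a fixed, $p$-dependent number of derivatives per factor, rather than a $|\sigma_\ell|$-dependent one. A subsidiary obstacle is the non-locality of $R_{\boldsymbol{\varphi}}$ and $K_{\boldsymbol{\varphi}}$, which prevents confining the cubic interaction to a ball around $\bx_0$; combining the smoothing bounds of Lemma~\ref{normsmooth-Lp} with the $L^p$-$L^q$ estimates for $(-\Delta)^{-1}$ from Corollary~\ref{rem:elliptic} and the precise localization of Lemma~\ref{Edgardo} is what makes the cubic, non-local contributions close the induction under the chosen $C$ and $B$.
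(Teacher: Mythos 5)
Your high-level plan captures the right ingredients---invert $E(\bp)$, localize, exploit the Poisson equation $-\Delta U_{a,b}=4\pi\varphi_a\overline{\varphi_b}$, and work in $L^p$, $p\ge5$, so the Sobolev loss per Leibniz factor is a fixed $\theta=2/p$---and your factorization of the cubic terms is essentially the paper's reduction ($R_{\boldsymbol{\varphi}}\varphi_i=\sum_\ell U_{\ell,\ell}\varphi_i$, $K_{\boldsymbol{\varphi}}\varphi_i=\sum_\ell U_{i,\ell}\varphi_\ell$, so Lemma~\ref{lem:R-K} becomes a corollary of Lemma~\ref{lem:U-phi}). However, the localization you describe does not close the induction, and this is the central difficulty, not a detail. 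You take a \emph{single} cutoff $\chi$ supported ``slightly inside $\omega_{\epsilon(j+1/2)}$'', split into $\Phi E(\bp)^{-1}\chi D^\beta(\cdots)+\Phi E(\bp)^{-1}(1-\chi)D^\beta(\cdots)$, and claim the commutators from pushing $D^\beta$ through $\chi$ are absorbed by the induction hypothesis. This fails for two independent reasons. Pushing $D^\beta$ past a fixed cutoff at scale $\epsilon$ produces $D^\mu\chi$ for \emph{every} $|\mu|\le j+1$ with $\|D^\mu\chi\|_\infty\sim(C_*/\epsilon)^{|\mu|}$; summing the binomials yields $((C_*+B)/\epsilon)^{|\beta|}$, strictly larger than the target $(B/\epsilon)^{|\beta|}$, so the constant degrades at each induction step. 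And independently, for the far part: with your $\chi$ the distance from $\supp\Phi$ to $\supp(1-\chi)$ is only of order $\epsilon/4$, so Lemma~\ref{normsmooth-Lp} gives a factor $\beta!\,(32/\epsilon)^{|\beta|}$, and the $\beta!\sim(j+1)!\sim((j+1)/{\rm e})^{j+1}$ is \emph{not} compensated---it only cancels against a $(j+1)^{-|\beta|}$ coming from a distance of order $\epsilon j$. Lemma~\ref{Edgardo} is precisely what reconciles these two constraints: it splits $D^\sigma$ across $j+1$ nested cutoffs $\chi_0,\dots,\chi_j,\eta_0,\dots,\eta_j$ so that only \emph{first} derivatives of localizers appear, each $\chi_k$ sits at distance $\sim k\epsilon$ from $\supp\Phi$ (giving, via Lemma~\ref{normsmooth-Lp}, a bound $\le C_2(8/\epsilon)^k$ with the factorial cancelled), and the single tail $\eta_j$ sits at distance $\sim(j+1/4)\epsilon$ so its $\beta!$ is harmless.

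There is a second, underspecified step. ``An inner local-vs-global split of the convolution with $|\cdot|^{-1}$ (using Corollary~\ref{rem:elliptic})'' glosses over what is actually a delicate \emph{iterated} interior elliptic bootstrap (Lemma~\ref{lemanR-bis}): one applies Corollary~\ref{rem:elliptic} for $-\Delta U_{a,b}=4\pi\varphi_a\overline{\varphi_b}$ on a chain of $\lfloor|\mu|/2\rfloor$ slowly enlarging concentric balls, gains two derivatives per step, trades each factor $16K_3/\tilde\epsilon_\ell^2$ against $B^{-2}$ in a geometric series, and feeds into each step the $L^{3p/2}$-bound on $D^{\mu'}(\varphi_a\overline{\varphi_b})$ from Lemma~\ref{prL3}. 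Furthermore, the paper uses a \emph{two}-factor H\"older split $\tfrac1p=\tfrac1{3p}+\tfrac2{3p}$, keeping $D^\mu U_{a,b}\in L^{3p/2}$ as a single factor, so that $U_{a,b}$ effectively counts as ``half a $\varphi$''; this is exactly what produces the factor $B^{2\theta-2}$ and the summable series in \eqref{eq:helping}, the place where $p\ge5$ is spent. Your three-factor H\"older is not obviously compatible with this bookkeeping, and the asserted ``collapse via Stirling to $C^3(B/\epsilon)^{|\beta|}$'' is precisely the step that hinges on these calibrations and cannot be waved away.
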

\begin{remark}
  For $a,b \in \{1, \dots,N\}$, let $U_{a,b}$ denote the function 
 \begin{equation}\label{eq:def-U-a,b}
   U_{a,b}(\bx) = \int_{\R^3} \frac{\varphi_{a}(\by)
     \overline{\varphi_{b}(\by)}}{|\bx-\by|}\,d\by\,,\ \bx\in\R^3\,.
 \end{equation}
In particular, \(\|U_{a,b}\|_{\infty}\le C_{1}\) for all
\(a,b\in\{1,\ldots,N\}\) (see \eqref{eq:bound-U-a,b}). Note that (see
\eqref{Rgamma} and \eqref{Kgamma}) 
\begin{align}\label{eq:K-R-U}
  R_{{\boldsymbol{\varphi}}} \varphi_i
  =\sum_{\ell=1}^{N}U_{\ell,\ell} \varphi_{i}\,,\quad
  K_{{\boldsymbol{\varphi}}} \varphi_i
  =\sum_{\ell=1}^{N}U_{i,\ell} \varphi_{\ell}\,.
\end{align}
Hence Lemma~\ref{lem:R-K} follows from the following lemma and the
fact that \(Z\alpha<2/\pi<1\).
\end{remark}

\begin{lemma} \label{lem:U-phi}
 Assume \eqref{eq:ihy} (the induction hypothesis) holds. Let \(\Phi\)
be as in \eqref{def:Phi}.
For $a,b\in \{1, \dots, N\}$, let \( U_{a,b}\) be given
by \eqref{eq:def-U-a,b}. 
Then for all $a,b ,i \in \{1, \dots, N\}$,
all $\epsilon >0$ with \(\epsilon(j+1)\le R/2\), 
 and all 
 \(\beta \in \N_{0}^3\) with \(|\beta|=j+1\), 
 \(\Phi D^{\beta} E(\bp)^{-1}U_{a,b} \varphi_i\)  belong to
 \(L^{p}(\R^{3})\), and
\begin{align}\label{est:U-lemma}
 \|\Phi D^{\beta} E(\bp)^{-1}U_{a,b} \varphi_i\|_{p}
  \leq \frac{CZ}{4N}\Big(\frac{B}{\epsilon}\Big)^{|\beta|}\,, 
\end{align}
where $C, B>1\) are the constants in \eqref{eq:lemma2-ihy} (see also
Remark~\ref{rem:constants}). 
\end{lemma}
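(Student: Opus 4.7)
The plan is to reduce the cubic term $U_{a,b}\varphi_i$ to tractable pieces by combining the near/far localization used for the linear terms with an auxiliary analytic bound on derivatives of $U_{a,b}$ itself, obtained from its Newton-potential representation and the induction hypothesis applied to $\varphi_a,\varphi_b$.

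The first step is to use that $E(\bp)^{-1}$ commutes with $D^\beta$, so that
\[\Phi D^\beta E(\bp)^{-1}(U_{a,b}\varphi_i) = \Phi E(\bp)^{-1}D^\beta(U_{a,b}\varphi_i),\]
and to split the right-hand side via an intermediate cutoff $\chi$ supported in a slight thickening of $\supp\Phi$ inside $\omega$:
\[\Phi E(\bp)^{-1}D^\beta\bigl(\chi\,U_{a,b}\varphi_i\bigr)\;+\;\Phi D^\beta E(\bp)^{-1}\bigl((1-\chi)U_{a,b}\varphi_i\bigr).\]
For the far piece, the disjointness of $\supp\Phi$ and $\supp(1-\chi)$ lets the smoothing estimate of Lemma~\ref{normsmooth-Lp} yield a bound of the required $(B/\epsilon)^{|\beta|}$-form, multiplied by $\|U_{a,b}\varphi_i\|_q$ for a suitable $q$; this factor is finite because $\|U_{a,b}\|_\infty \le C_1$ and $\varphi_i\in L^q$. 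For the near piece, Leibniz distributes $D^\beta$ across $\chi$, $U_{a,b}$ and $\varphi_i$; the $L^p$-boundedness of $\Phi E(\bp)^{-1}$ (Corollary~\ref{rem:elliptic}) and H\"older then reduce matters to $L^{3p}$-bounds on derivatives of each of the three factors.

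To bound $\|D^\gamma U_{a,b}\|_{L^{3p}(\omega_{\epsilon\ell})}$ for $|\gamma|\le j$ in a form that combines with the induction hypothesis, I would split the defining integral for $U_{a,b}$ into a near part over $\omega$ and a far part over its complement. On the far part, the distance $|\bx-\by|\gtrsim\epsilon$ makes $|\bx-\by|^{-1}$ real analytic in $\bx$, yielding the analytic-growth bound of order $|\gamma|!/\epsilon^{|\gamma|+1}$ combined with the $L^1$-control of $\varphi_a\overline{\varphi_b}$. On the near part, use the mapping properties of the Newton potential (equivalently, $-\Delta U_{a,b}=4\pi\varphi_a\overline{\varphi_b}$) together with a further Leibniz expansion inside $\varphi_a\overline{\varphi_b}$, after which \eqref{eq:ihy-new} controls both factors.

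The principal obstacle is the combinatorial bookkeeping of the triple Leibniz that ensues: the three-way distribution of $|\beta|+1=j+1$ derivatives across $\varphi_i,\varphi_a,\varphi_b$ (plus the harmless cutoff $\chi$) must close into a single factor $(B/\epsilon)^{|\beta|}$ without factorial loss. This is where the hypothesis $p\ge 5$ is crucial: working in a sufficiently large $L^p$ makes the Sobolev loss incurred by the H\"older decomposition into three $L^{3p}$-factors tolerable (cf.\ Corollary~\ref{cor:adams}), so that the resulting analytic-growth factors $|\gamma_1|!\,|\gamma_2|!\,|\gamma_3|!$ can be summed against the multinomial $\binom{|\beta|}{|\gamma_1|,|\gamma_2|,|\gamma_3|}$ to a geometrically bounded quantity. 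The overall constant is then absorbed by the explicit choices in Remark~\ref{rem:constants}---in particular $B\ge(160\,C^{2}K_{2}C_{3})^{2}$ and $B\ge(24NC_{2}/Z)^{2}$---producing the prefactor $CZ/(4N)$ and closing the induction step.
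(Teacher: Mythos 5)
Your decomposition into a near piece and a far piece via a single cutoff $\chi$, followed by a triple Leibniz expansion on the near piece, does not close. The crux is that $\chi$ is a fixed compactly supported $C^\infty$ function, and such a function is never real analytic: the derivatives $D^\gamma\chi$ grow strictly faster than $\gamma!\,\mathcal{R}^{-|\gamma|}$ for every $\mathcal{R}>0$ (at best Gevrey-$s$ with $s>1$). When you distribute the $j+1$ derivatives by Leibniz across $\chi$, $U_{a,b}$ and $\varphi_i$, the terms in which a substantial number of derivatives fall on $\chi$ therefore destroy the sought $(B/\epsilon)^{|\beta|}$ bound no matter how $p$, $B$, $C$ are chosen. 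The paper's proof circumvents exactly this obstruction with the telescoping localization of Lemma~\ref{Edgardo}: instead of one cutoff absorbing up to $j$ derivatives, it uses the families $\{\chi_k\}$, $\{\eta_k\}$ supported on nested annuli of thickness $\sim\epsilon$, arranged so that each localization function receives at most one derivative (hence only a single factor $C_*/\epsilon$ per term, cf.\ \eqref{eq:est-der-loc}); the remaining derivatives are shifted either onto $U_{a,b}\varphi_i$ (where the induction hypothesis and Lemma~\ref{lemanR-bis} apply) or into the smoothing kernel $\Phi E(\bp)^{-1}D^{\beta_k}\chi_k$ (Lemmas~\ref{bounded-mult-op} and~\ref{normsmooth-Lp}). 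That nested decomposition is the essential idea your argument is missing.

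Two further points. First, the bound you invoke as ``$L^p$-boundedness of $\Phi E(\bp)^{-1}$'' is misattributed to Corollary~\ref{rem:elliptic}, which is an interior elliptic a priori estimate for the Laplacian, not a statement about the operator $E(\bp)^{-1}$; the relevant operator bounds are Lemma~\ref{bounded-mult-op} (for $E(\bp)^{-1}D_\nu$ on $L^p$) and Lemma~\ref{normsmooth-Lp} (for $\Phi E(\bp)^{-1}D^\beta\chi$ with disjoint supports). Second, your proposed treatment of $D^\gamma U_{a,b}$ by splitting the Newton potential kernel into near/far parts gives raw factors of order $|\gamma|!/\epsilon^{|\gamma|+1}$ on the far part, which is one full $\epsilon^{-1}$ too expensive; the paper instead exploits $-\Delta U_{a,b}=4\pi\varphi_a\overline{\varphi_b}$ together with an iterated interior elliptic estimate (Corollary~\ref{rem:elliptic}, yielding Lemma~\ref{lemanR-bis}) and the Gagliardo--Nirenberg interpolation of Corollary~\ref{cor:adams}, and it is precisely this machinery, combined with $p\ge 5$ so that the Sobolev exponent $\theta=2/p\le 2/5$ makes the sum in \eqref{eq:helping} converge, that lets the cubic bookkeeping close.
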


It therefore remains to prove Lemmas~\ref{lem:V} and
\ref{lem:U-phi}. This will be done in the two
following sections.
\end{pf*}
\section{Proof of Lemma~\ref{lem:V}}
We prove Lemma~\ref{lem:V} by proving \eqref{est-V} and \eqref{est-V-bis}
separately. 

\begin{pf*}{Proof of \eqref{est-V}}
 Let $\sigma
\in \N_{0}^3$ and $\nu\in \{1,2,3\}$ be such that $\beta=\sigma
+e_{\nu}$, so that $D^{\beta}=D_{\nu}D^{\sigma}$. Notice that 
$|\sigma|=j$.
Choose localization functions
$\{\chi_{k}\}_{k=0}^{j}$ and $\{\eta_{k}\}_{k=0}^{j}$ as in
Appendix~\ref{localization} below. 
Since \(V\varphi_{i}\in H^{-1/2}(\R^{3})\), and \(E(\bp)^{-1}\) maps
\(H^{s}(\R^{3})\) to \(H^{s+1}(\R^{3})\) for all \(s\in\R\), 
Lemma~\ref{Edgardo} 
(with $\ell=j$) implies that
\begin{align}\notag
  \Phi D^{\beta} E(\bp)^{-1}[V \varphi_{i}] 
  &=\sum_{k=0}^{j} 
  \Phi D_{\nu}E(\bp)^{-1}D^{\beta_{k}}\chi_{k} D^{\sigma-\beta_{k}}
  [V\varphi_{i}] \\  
  &{\ }+ \sum_{k=0}^{j-1} 
    \Phi D_{\nu}E(\bp)^{-1}D^{\beta_{k}}[\eta_{k},D^{\mu_{k}}] 
  D^{\sigma-\beta_{k+1}} [V
  \varphi_{i}] \notag \\ \label{f2} 
  &{\ }+  
  \Phi D_{\nu}E(\bp)^{-1}D^{\sigma}[\eta_{j}V\varphi_{i}]\,,
\end{align}
as an identity in \(H^{-|\beta|+1/2}(\R^{3})\) (we have also used that
\(E(\bp)^{-1}\) commutes with derivatives on any \(H^{s}(\R^{3})\)). 
Here, \([\,\cdot\,,\,\cdot\,]\) denotes the commutator. Also,
\(|\beta_k|=k\), \(|\mu_k|=1\), and \(0\le\eta_k,\chi_k\le1\). (For
the support properties of \(\eta_k,\chi_k\), see the mentioned appendix.)
We will prove that each term on the right side of \eqref{f2}
belong to \(L^{p}(\R^{3})\), and bound their norms.
The proof of \eqref{est-V} will follow by summing these bounds.\\

\noindent\textit{The first sum in \eqref{f2}.} 
Let $\theta_{k}$ be the characteristic function of
the support of $\chi_{k}$ (which is contained in \(\omega\)). 
Since \(V\) is smooth on the closure of
\(\omega\) it follows from the induction hypothesis that the
\(D^{\sigma-\beta_{k}} [V\varphi_{i}]\)'s belong to \(L^{p}(\omega')\)
for any \(\omega'\subset\subset\omega\).
Also, the operator \(\Phi D_{\nu}E(\bp)^{-1}D^{\beta_{k}}\chi_{k}\)
is bounded on \(L^{p}(\R^{3})\) (as we will observe below).
Therefore we can estimate, for \(k\in\{0,\ldots,j\}\),
\begin{align}\label{f3}\notag 
  \|\Phi D_{\nu}E(\bp)^{-1}D^{\beta_{k}}&\chi_{k}
  D^{\sigma-\beta_{k}} [V \varphi_{i}]
  \|_{p}\\\nonumber
   & =\| (\Phi E(\bp)^{-1}D_{\nu}D^{\beta_{k}} \chi_{k})
   \theta_k D^{\sigma-\beta_{k}} [V \varphi_{i}]
   \|_{p}
  \\&\leq  \| \Phi E(\bp)^{-1}D_{\nu}D^{\beta_{k}} \chi_{k}  
   \|_{\mathcal{B}_{p}}\, \| \theta_{k}  D^{\sigma-\beta_{k}} [
   V \varphi_{i}] \|_{p}\,.
\end{align}
Here, \(\|\cdot\|_{\mathcal{B}_{p}}\) is the operator norm on
\(\mathcal{B}_{p}:=B(L^{p}(\R^3))\), the bounded operators 
  on \(L^{p}(\R^3)\).

For \(k=0\), the first factor on the right side of \eqref{f3} can be
estimated using Lemma~\ref{bounded-mult-op} 
(since \(|\beta_0|=0\)).
This way, since \(\|\chi_{0}\|_{\infty}=\|\Phi\|_{\infty}=1\), 
\begin{align}\label{eq:est-smooth-factor-k=0}
  \|\Phi E(\bp)^{-1}D_{\nu}\chi_{0}  
  \|_{\mathcal{B}_{p}} \le K_{1}\,,
\end{align}  
with \(K_{1}=K_{1}(p)\) the constant in \eqref{norm-mult-op}. 

For \(k>0\), the first factor on the right side of \eqref{f3} can be 
estimated using \eqref{eq:smoothing-est-Lp-bis}
in Lemma~\ref{normsmooth-Lp} 
(with \(\mathfrak{r}=1\), \(\mathfrak{q}^{*}=\mathfrak{p}=p\)).
Since 
\begin{align*}
 \dist(\supp\,\chi_{k},\supp\,\Phi)\ge\epsilon 
  (k-1+1/4)
\end{align*}
 and \(\|\chi_k\|_{\infty}=\|\Phi\|_{\infty}=1\), this gives 
(since \((\beta_k+e_{\nu})!\le(|\beta_k|+1)!=(k+1)!\)) that
\begin{align}\label{eq:est-smooth-factor-k>0}
  \nonumber
  \| \Phi E(\bp)^{-1} D_{\nu}D^{\beta_{k}} \chi_{k}
  \|_{\mathcal{B}_{p}}
  &\le \frac{32\sqrt{2}}{\pi}
  \frac{(k+1)!}{k}\Big(\frac{8}{\epsilon 
  (k-1+1/4)}\Big)^{k}
  \\&\le \frac{256\sqrt{2}}{\pi}
  \Big(\frac{8}{\epsilon}\Big)^{k}\,.
\end{align} 

 It follows from \eqref{eq:est-smooth-factor-k=0} and
 \eqref{eq:est-smooth-factor-k>0} that, for all \(k\in\{0,\ldots,j\}\), 
\(\nu\in\{1,2,3\}\),
\begin{align}\label{eq:est-smooth-factor-BIS}
  \| \Phi E(\bp)^{-1} D_{\nu}D^{\beta_k}\chi_{k}
   \|_{\mathcal{B}_{p}} \le C_{2}\Big(\frac{8}{\epsilon}\Big)^{k}\,,
\end{align} 
with \(C_{2}\) as defined in \eqref{eq:C-111}.

It remains to estimate the second factor in \eqref{f3}. 
Recall the definition of the constant \(A\) in \eqref{Wanal0-bis}.
It follows from \eqref{Wanal0-bis} and \eqref{eq:omega empty} that, for
all \(\epsilon>0\), $\ell\in 
\N_{0}$, and \(\sigma\in\N_{0}^{3}\),  
\begin{equation}\label{Wanal}
  \epsilon^{|\sigma|} \sup_{\bx \in \omega_{\epsilon \ell}}
  |D^{\sigma} V(\bx)| \leq A^{|\sigma|+1} |\sigma| ! \;  {\ell}^{-|\sigma|}\,,   
\end{equation}
with \(\omega_{\epsilon \ell}\subseteq\omega\) as in defined in \eqref{omegadelta}.

For $k=j$, since $\beta_{j}=\sigma$, we find, by
\eqref{Wanal} and the choice of \(C\) (see
Remark~\ref{rem:constants}), that
\begin{equation}\label{f4}
  \| \theta_{j}V \varphi_{i}
  \|_{p} 
  \leq \|V\|_{L^{\infty}(\omega)}
  \|\varphi_{i}\|_{L^{p}(\omega)}
  \le CA\,. 
\end{equation}

The estimate for $k \in \{0, \dots, j-1\}$ is a bit more involved. 
We get, by Leibniz's rule, that 
\begin{align}\label{f5}\nonumber
   \| \theta_{k} D^{\sigma-\beta_{k}}
  &[V \varphi_{i}] \|_{p}\\
  &\leq \sum_{\mu \leq \sigma-\beta_{k}} 
  \binom{\sigma-\beta_k}{\mu}
  \| \theta_{k} D^{\mu} V\|_{\infty} 
  \,
  \|\theta_{k}D^{\sigma-\beta_{k}-\mu} 
  \varphi_{i}\|_{p}\,.
\end{align}
Now, \(\supp\,\theta_{k}=\supp\,\chi_{k}\subseteq \omega_{\epsilon(j-k+1/4)}\), so
by \eqref{Wanal}, for all \(\mu\le \sigma-\beta_k\),  
\begin{align}\label{eq:est-infty-norm-V}
   \| \theta_{k} D^{\mu} V\|_{\infty} 
   \le \sup_{\bx\in\omega_{\epsilon(j-k+1/4)}}
   |D^{\mu} V(\bx)|
   \le \epsilon^{-|\mu|}A^{|\mu|+1}|\mu|!(j-k)^{-|\mu|}\,.
\end{align}
By the induction
hypothesis  (in 
the form discussed in Remark~\ref{ihy}),
\begin{align}\label{eq:est-phi-in-V-term}\nonumber
   \|\theta_{k}D^{\sigma-\beta_{k}-\mu} 
  \varphi_{i}\|_{p}
  &\le \|D^{\sigma-\beta_{k}-\mu} 
  \varphi_{i}\|_{L^{p}(\omega_{\epsilon(j-k)})}
   \\&\le C\Big(\frac{|\sigma-\beta_k-\mu|}{j-k}\Big)^{|\sigma-\beta_k-\mu|}
  \Big(\frac{B}{\epsilon}\Big)^{|\sigma-\beta_k-\mu|}\,.
\end{align}
It follows from \eqref{f5}, \eqref{eq:est-infty-norm-V},
and \eqref {eq:est-phi-in-V-term}
that (using that \(|\sigma|=j, |\beta_k|=k\), and
\eqref{eq:multiNom},  summing over \(m=|\mu|\)) 
\begin{align}\label{eq:first-sum-V-1-bis}\nonumber
  &\| \theta_{k}D^{\sigma-\beta_{k}}[V
  \varphi_{i}]\|_{p}
  \\&\le 
  C A \Big(\frac{B}{\epsilon}\Big)^{j-k}
   \sum_{m=0}^{j-k}
  \binom{j-k}{m} 
  \frac{m!(j-k-m)^{j-k-m}}{(j-k)^{j-k}}
  \Big(\frac{A}{B}\Big)^{m}\,.
\end{align}
Note that, by \eqref{eq:Abra-Ste-binom}, for \(0<m<j-k\),
\begin{align}\label{eg:est-V-fraction}
  \binom{j-k}{m} \frac{m!(j-k-m)^{j-k-m}}{(j-k)^{j-k}}
  \le\frac{{\rm e}^{1/12}\sqrt{j-k}}{\sqrt{j-k-m}\,{\rm e}^{m}}
  \le 1\,.
\end{align}
To see the last inequality, look at the cases \(0<m\le(j-k)/2\) and
\(j-k>m\ge(j-k)/2\) separately.

Hence (since \(B>2A\), see
Remark~\ref{rem:constants}), for any \(k\in\{0,\ldots,j-1\}\), 
\begin{align}\label{eq:first-sum-V-1}
 \| \theta_{k}D^{\sigma-\beta_{k}}[V
  \varphi_{i}]\|_{p}
  &\le C A \Big(\frac{B}{\epsilon}\Big)^{j-k}
  \sum_{m=0}^{j-k}
  \Big(\frac{A}{B}\Big)^{m}
  \le 2C A \Big(\frac{B}{\epsilon}\Big)^{j-k}\,.
\end{align}
Note that, by \eqref{f4}, the same estimate holds true if \(k=j\).

So, from \eqref{f3}, \eqref{eq:est-smooth-factor-BIS}, 
\eqref{eq:first-sum-V-1}, the fact that \(\epsilon\le1\) (since
\(\epsilon(j+1)\le R/2\le1/2\)), and the
choice of \(B\) 
(in particular, \(B>16\); see
Remark~\ref{rem:constants}), it follows that  
\begin{align}\label{eq:final-first-sum-V}\nonumber
   \Big\|&\sum_{k=0}^{j}
  \Phi D_{\nu} E(\bp)^{-1} D^{\beta_{k}} 
  \chi_{k} D^{\sigma-\beta_{k}} [V \varphi_{i}]\Big\|_{p}
  \\&\le
  2CA C_{2}\Big(\frac{B}{\epsilon}\Big)^{j}
  \sum_{k=0}^{j}
  \Big(\frac{8}{B}\Big)^{k}\le
   C(4 A C_{2})\Big(\frac{B}{\epsilon}\Big)^{j}
  \le \frac{C}{12}\Big(\frac{B}{\epsilon}\Big)^{j+1}\,.
\end{align}

\ %

\noindent\textit{The second sum in \eqref{f2}.} 
Note first that
\([\eta_k,D^{\mu_k}]=-(D^{\mu_k}\eta_k)\) (recall that \(|\mu_k|=1\);
see Lem\-ma~\ref{Edgardo}).

Comparing the second sum in \eqref{f2} with the first sum in \eqref{f2}, one sees 
that the second sum is the first one with \(j\) replaced by \(j-1\) and 
\(\chi_k\) replaced by \(-D^{\mu_k}\eta_k\). Having now a derivative on 
the localization functions we have one derivative less falling on the 
term \(V\varphi_i\). More precisely, the operator
\(D^{\sigma-\beta_{k+1}}\) contains
\(|\sigma-\beta_{k+1}|=j-(k+1)=(j-1)-k\) derivatives instead  of
\(|\sigma-\beta_k|=j-k\) in \(D^{\sigma-\beta_k}\). 
Then, to control \(D^{\sigma-\beta_{k+1}} 
[V\varphi_i]\) (with the same method used above for  
\(D^{\sigma-\beta_{k}} [V\varphi_i]\)) we need that 
\(\supp\,D^{\mu_{k}} \eta_k\) is contained in
\(\omega_{\epsilon((j-1)-k+1/4)}\).  
Indeed we have much more: as for \(\chi_k\) we have
\(\supp\,D^{\mu_{k}}  
\eta_k \subseteq \omega_{\epsilon(j-k+1/4)} \subseteq 
\omega_{\epsilon((j-1)-k+1/4)}\).
Finally, 
\(\|D^{\mu_{k}}\eta_k\|_{\infty}\le C_*/\epsilon\), with
\(C_{*}>0\) the constant in \eqref{eq:est-der-loc} in
Appendix~\ref{localization} below.

It follows that the second sum in \eqref{f2} can be
estimated as the first one, up to {\it one} extra
factor of \(C_*/\epsilon\) 
{\it and} up to replacing \(j\) by \(j-1\) in
the estimate \eqref{eq:final-first-sum-V}. Hence, using that
\(\epsilon\le1\), and the choice of \(B\) 
(see Remark~\ref{rem:constants}), we get that
\begin{align}\label{eq:second-sum-total-final-V}\nonumber
  \Big\|\sum_{k=0}^{j-1} 
   &\Phi D_{\nu}E(\bp)^{-1}D^{\beta_{k}}
  [\eta_{k},D^{\mu_{k}}] D^{\sigma-\beta_{k+1}}
  [V \varphi_{i}]\Big\|_{p}
  \\&\le  \frac{C_{*}}{\epsilon}
  C(4A C_{2})\Big(\frac{B}{\epsilon}\Big)^{j-1}
  \le C
  (4AC_{2})
  \Big(\frac{B}{\epsilon}\Big)^{j}
  \le \frac{C}{12}\Big(\frac{B}{\epsilon}\Big)^{j+1}\,.
\end{align}

\ %

\noindent\textit{The last term in \eqref{f2}.} 
It remains to study
\begin{align}\label{the-term}
  \Phi D^{\beta} E(\bp)^{-1}[\eta_{j} V
  \varphi_{i}]\,.
\end{align}
 We split \(V\) in two parts, one supported
 around \(\bx = 0\), and one supported away from \(\bx = 0\), and study
 the two terms separately. We will prove below that this way,
 \(\eta_jV\varphi_{i}\) is actually a function in
 \(L^{1}(\R^{3})+L^{3}(\R^{3})\). Upon using suitable
 operator bounds
 on \(\Phi D^{\beta}E(\bp)^{-1}\chi\) (for some suitable smooth
 \(\chi\)'s), combined with bounds on the norms of 
 the two parts of \(\eta_jV\varphi_{i}\), we will finish the proof.

Let \(\rho=|\bx_0|/4\), and  let \(\theta_\rho\) and
\(\theta_{\rho/2}\) be the characteristic functions 
of the balls \(B_{\rho}(0)\) and \(B_{\rho/2}(0)\), respectively.
Choose \(\widetilde{\chi}_{\rho}\in
C_{0}^{\infty}(\R^{3})\) with \(\supp\,\widetilde{\chi}_{\rho}\subseteq
B_{\rho}(0)\), \(0\le \widetilde{\chi}_{\rho}\le1\), and \(\widetilde{\chi}_{\rho}=1\) on
\(B_{\rho/2}(0)\). 
Note that then 
\begin{align}\label{d-2}
  \dist(\supp\, \Phi,\supp\, \widetilde{\chi}_{\rho})\ge 
  \frac{|\bx_0|}{2}=2\rho\,,
\end{align}
by the choice of \(\omega=B_{R}(\bx_0)\), \(R=\min\{1,|\bx_0|/4\}\),
since \(\supp\,\Phi\subseteq\omega_{\epsilon(j+1)}\subseteq\omega\). 

Now,
\begin{align}\label{split-V}\nonumber
  \Phi D^{\beta} E(\bp)^{-1}[\eta_jV\varphi_i]
  & = \Phi D^{\beta} E(\bp)^{-1}[\eta_jV\widetilde{\chi}_{\rho}\varphi_i]
  \\ &{\quad } + \Phi D^{\beta} E(\bp)^{-1}[\eta_jV(1-\widetilde{\chi}_{\rho})\varphi_i]\,.
\end{align}

For the first term in \eqref{split-V}, we use
Lemma~\ref{normsmooth-Lp}, with \(\mathfrak{p}=1\),
\(\mathfrak{q}=p/(p-1)\), and 
\(\mathfrak{r}=p\). Then
\(\mathfrak{p}, 
\mathfrak{r}\in[1,\infty)\) and \(\mathfrak{q}>1\), 
and \(\mathfrak{q}^{-1}+p^{-1}=1\). 
We get that (recall \eqref{d-2} and that
\(\widetilde{\chi}_{\rho}\theta_{\rho}=\widetilde{\chi}_{\rho}\)), 
\begin{align}\label{first-est}  \nonumber
  \|\Phi & D^{\beta} E(\bp)^{-1}[\eta_{j} V
  \widetilde{\chi}_{\rho}\varphi_{i}]\|_{p}
  \le 
  \|\Phi D^{\beta} E(\bp)^{-1}\widetilde{\chi}_{\rho}\|_{\mathcal{B}_{1,p}}
  \|\eta_{j} V 
  \theta_\rho\varphi_{i}\|_{1}
  \\&
  \le
  \frac{4\sqrt{2}}{\pi}
  \beta!\Big(\frac{8}{2\rho}\Big)^{|\beta|}
  (2\rho)^{3/\mathfrak{r}-2}
   \big(\mathfrak{r}(|\beta|+2)-3
  \big)^{-1/\mathfrak{r}} 
  \|V\theta_\rho\varphi_{i}\|_{1}\,.
\end{align}
Here we used that \(\|\Phi\|_{\infty}=\|\widetilde{\chi}_{\rho}\|_{\infty}=1\)
and that \(\eta_j\equiv1\) where
\(\theta_{\rho}\ne0\). 
Note that \(j+1\le \epsilon^{-1}\) (since, by assumption,
\(\epsilon(j+1)\le R/2\le1/2\)). Therefore,
\begin{align}\label{gain-eps}
  \beta!\le|\beta|!=(j+1)!\le(j+1)^{j+1}\le \epsilon^{-(j+1)}
  =\epsilon^{-|\beta|}\,.
\end{align}
Note furthermore that since \(|\beta|=j+1\ge2\) and
\(\mathfrak{r}\ge1\),  
\begin{align}\label{est:with-r}
  \big(\mathfrak{r}(|\beta|+2)-3\big)^{-1/\mathfrak{r}}
  &\le 1\,,
\end{align}
independently of \(\beta\). It follows that 
\begin{align}\label{sec-est-bis}\nonumber
  \|\Phi D^{\beta} E(\bp)^{-1}&[\eta_{j} V
  \widetilde{\chi}_{\rho}\varphi_{i}]\|_{p}
  \\&\le
  \frac{4\sqrt{2}}{\pi}\Big(\frac{|\bx_{0}|}{2}\Big)^{(3-2p)/p}
  \|V\theta_\rho\varphi_{i}\|_{1}
  \Big(\frac{16/|\bx_{0}|}{\epsilon}\Big)^{|\beta|} \,.
 \end{align}
Using Schwarz's inequality and that \(Z\alpha<2/\pi\),
\begin{align}\label{cond-phi}
  \|V\theta_\rho\varphi_i\|_{1}
  \le \|V\theta_\rho\|_{2}\|\varphi_i\|_{2}
  =Z\alpha\sqrt{|\bx_{0}|\pi}\|\varphi_i\|_{2}
  \le\frac{2}{\sqrt{\pi}}\sqrt{|\bx_{0}|}\|\varphi_i\|_{2}\,.
\end{align}
(Note that \(\|V\theta_\rho\|_{t}<\infty\Leftrightarrow
  t<3\).)
It follows from \eqref{sec-est-bis}, \eqref{cond-phi},
and the choice of \(B\) and \(C\) (see
Remark~\ref{rem:constants}) that
 \begin{align}\label{sec-est-bis-bis}\nonumber
  \|\Phi D^{\beta}&E(\bp)^{-1}[\eta_{j} V
   \widetilde{\chi}_{\rho}
   \varphi_{i}]\|_{p}
   \\&\le 
  \frac{32}{\pi} |\bx_{0}|^{3(2-p)/(2p)}
  \|\varphi_{i}\|_{2}
  \Big(\frac{16/|\bx_{0}|}{\epsilon}\Big)^{|\beta|}
   \le \frac{C}{24}   \Big(\frac{B}{\epsilon}\Big)^{j+1}\,.
\end{align}

We now consider the second term in \eqref{split-V}.
Recall that \(\Phi\) is supported in \(\omega_{\epsilon(j+1)}\) and 
\begin{align}\label{d-bis}
  \dist(\supp\, \Phi,\supp\, \eta_j)\ge \epsilon(j+1/4)\,.
\end{align}
Again, we use Lemma~\ref{normsmooth-Lp}, this time with
\(\mathfrak{p}=3\), \(\mathfrak{q}=p/(p-1)\), and 
\(\mathfrak{r}=3p/(2p+3)\).
Then \(\mathfrak{p}^{-1}+\mathfrak{q}^{-1}+\mathfrak{r}^{-1}=2\),  
\(\mathfrak{p}\in[1,\infty)\), \(\mathfrak{q}>1\),
\(\mathfrak{r}\in [1,3/2)\)  (since \(p>3\)), and
\(\mathfrak{q}^{-1}+p^{-1}=1\). 
This gives that 
\begin{align*}
   \|\Phi D^{\beta} &E(\bp)^{-1}[\eta_{j} V
    (1-\widetilde{\chi}_{\rho})\varphi_{i}]\|_{p}
   \le  \|\Phi D^{\beta}
   E(\bp)^{-1}\eta_{j}\|_{\mathcal{B}_{3,p}}
   \|V(1-\widetilde{\chi}_{\rho})\varphi_i\|_{3}
   \\&\le
   \frac{4\sqrt{2}}{\pi}
   \beta!\Big(\frac{8}{\epsilon(j+1/4)}\Big)^{|\beta|}
   \big(\epsilon(j+1/4)\big)^{3/\mathfrak{r}-2} 
  \big(\mathfrak{r}(|\beta|+2)-3\big)^{-1/\mathfrak{r}}
  \\&\qquad\qquad\qquad\times 
  \|V(1-\widetilde{\chi}_{\rho})\|_{\infty}\|\varphi_i\|_{3}\,.
\end{align*}
As before, we used that
\(\|\Phi\|_{\infty}=\|\eta_{j}\|_{\infty}=1\). 
Note that
\begin{align}\label{est:beta-fak}  
 \beta!\Big(\frac{8}{j+1/4}\Big)^{|\beta|}
  & \le 32^{|\beta|}\frac{|\beta|!}{(j+1)^{|\beta|}}
   =32^{|\beta|}\frac{(j+1)!}{(j+1)^{j+1}}\le 32^{|\beta|}\,.
\end{align}
 Since \(\epsilon(j+1)\le R/2<1\) and \(\mathfrak{r}<3/2\) it follows that
\((\epsilon(j+1/4))^{3/\mathfrak{r}-2}\le 1\). 
Also, by the choice of \(\rho\), the definition of \(V\), and since
\(Z\alpha<2/\pi\), 
\begin{align}\label{eq:V-phi-est}
  \big|((1-\theta_{\rho/2}) V)(\bx)\big|\le \frac{8Z\alpha}{|\bx_0|}
   \le \frac{16}{\pi|\bx_0|}\,, 
  \quad \bx \in \R^3\,.
\end{align}
It follows from \eqref{eq:V-phi-est} (and that \(0\le 1-\widetilde{\chi}_{\rho}\le
1-\theta_{\rho/2}\)), \eqref{est:with-r}, \eqref{est:beta-fak}, and   
the choice of \(C\) and \(B\) (see
Remark~\ref{rem:constants}), that 
for all \(i=1,\ldots,N\) (recall that \(|\beta|=j+1\))
\begin{align}\label{sec-est-bis-4}\nonumber
   \|\Phi D^{\beta} E(\bp)^{-1}[\eta_{j} V
   (1&-\widetilde{\chi}_{\rho})\varphi_{i}]\|_{p}
   \\&\le \frac{4\sqrt{2}}{\pi}\frac{16}{\pi|\bx_0|}
   \|\varphi_{i}\|_{3}
   \Big(\frac{32}{\epsilon}\Big)^{|\beta|}
   \le\frac{C}{24}\Big(\frac{B}{\epsilon}\Big)^{j+1}\,.
\end{align}
It follows from \eqref{split-V}, \eqref{sec-est-bis-bis}, and
\eqref{sec-est-bis-4}
that
\begin{align}\label{eq:last-term-V-final}
  \|\Phi D^{\beta} E(\bp)^{-1}[\eta_{j} V
  \varphi_{i}]\|_{p}
  \le \frac{C}{12}\Big(\frac{B}{\epsilon}\Big)^{j+1}\,.
\end{align} 

The estimate \eqref{est-V} now follows from \eqref{f2} and the estimates
\eqref{eq:final-first-sum-V}, 
\eqref{eq:second-sum-total-final-V}, and
\eqref{eq:last-term-V-final}. 
\end{pf*}
\begin{pf*}{Proof of \eqref{est-V-bis}}
Note that the constant functions
\(W_{i}(\bx)=\alpha^{-1}+\varepsilon_i\) trivially satisfies 
the conditions on \(V\) (\(=Z\alpha|\cdot|^{-1}\)) needed in the
proof above.
In fact, having assumed \(A\ge \alpha^{-1}+\max_{1\le i\le
  N}|\varepsilon_{i}|\) (See Remark~\ref{rem:constants}),
\eqref{Wanal0-bis} (and therefore \eqref{Wanal}) trivially 
holds for \(W_{i}\). Also, for the term 
\(\Phi D^{\beta} E(\bp)^{-1}[\eta_{j} W_{i} \varphi_{i}]\)
we proceed directly as for the term 
\(\Phi D^{\beta}
E(\bp)^{-1}[\eta_jV(1-\widetilde{\chi}_{\rho})\varphi_i]\) above
(but without any splitting in \(\widetilde{\chi}_{\rho}\) and
\(1-\widetilde{\chi}_{\rho})\), 
using that \(|W_{i}(\bx)|\le A\), \(\bx \in \R^3\).
The proof of \eqref{est-V-bis} therefore follows from the proof of
\eqref{est-V} above, by the choice  of \(C\) and \(B\) (see
Remark~\ref{rem:constants}). 

This finishes the proof of Lemma~\ref{lem:V}.
\end{pf*}
\begin{remark}\label{rem:exp-decay}
In fact, with a simple modification the arguments
above (the local \(L^{p}\)-bound on the two terms in \eqref{split-V})
can be made to work just assuming that, for all \(s>0\), 
\begin{align}\label{eq:L-r-cond-V}
  V\varphi_{i}&\in L^{1}(B_{s}(0))\ , \quad 
  V\varphi_{i}\in L^{3}(\R^{3}\setminus B_{s}(0))\,.
\end{align}
\end{remark}
\section{Proof of  Lemma~\ref{lem:U-phi}}\label{sect:proof-U}
\begin{pf*}{Proof of \eqref{est:U-lemma}}
Similarly to the case of the term with \(V\) in Lemma~\ref{lem:V}, we
here use
the localization functions introduced in 
Appendix~\ref{localization} below. 
With the notation as in the previous section (in particular,
\(\beta=\sigma+e_{\nu}\) with \(|\sigma|=j\)),
Lemma~\ref{Edgardo} 
(with $\ell=j$) implies that
\begin{align}\notag
  \Phi D^{\beta} E(\bp)^{-1}[U_{a,b} \varphi_{i}]
  & = \sum_{k=0}^{j}\Phi D_{\nu}E(\bp)^{-1}D^{\beta_{k}}\chi_{k} 
  D^{\sigma-\beta_{k}} [U_{a,b}\varphi_{i}]  \\ 
  &{\ }+ \sum_{k=0}^{j-1} 
   \Phi D_{\nu}E(\bp)^{-1}D^{\beta_{k}}[\eta_{k},D^{\mu_{k}}] 
  D^{\sigma-\beta_{k+1}} [U_{a,b} \varphi_{i}] \notag \\ \label{h2} 
  &{\ }+  
    \Phi D_{\nu}E(\bp)^{-1}D^{\sigma}[\eta_{j}U_{a,b}\varphi_{i}]\,,
\end{align}
as an identity in \(H^{-|\beta|}(\R^{3})\).
As in  the proof of Lemma~\ref{lem:V}, 
\([\,\cdot\,,\,\cdot\,]\) denotes the commutator, 
\(|\beta_k|=k\), \(|\mu_k|=1\), and \(0\le\eta_k,\chi_k\le1\). (For
the support properties of \(\eta_k,\chi_k\), see the mentioned appendix.)
As in the previous section, we will prove that each term on the right
side of \eqref{h2}  
belong to \(L^{p}(\R^{3})\), and bound their norms.
The claim of the lemma will follow by summing these bounds.\\

\noindent\textit{The first sum in \eqref{h2}.} We first proceed like
for the similar sum in the proof of Lemma~\ref{lem:V} (see
\eqref{f3}, and after). Let $\theta_{k}$ be the characteristic function of
the support of $\chi_{k}$. 
It follows from the induction hypothesis, using that
\(-\Delta U_{a,b} = 4\pi \varphi_{a}\overline{\varphi_{b}}\), and
Theorems~\ref{lemmaSobolev}  and \ref{elliptic},  that the    
\(D^{\sigma-\beta_{k}} [U_{a,b}\varphi_{i}]\)'s belong to \(L^{p}(\omega')\)
for any \(\omega'\subset\subset\omega\). 
As before, the operator \(\Phi D_{\nu}E(\bp)^{-1}D^{\beta_{k}}\chi_{k}\)
is bounded on \(L^{p}(\R^{3})\).
Then, for $k \in \{0,
\dots,j\}$,
\begin{align}\label{h3}\notag 
  \| \Phi D_{\nu}&E(\bp)^{-1} D^{\beta_{k}}\chi_{k}
  D^{\sigma-\beta_{k}} [U_{a,b} \varphi_{i}]\|_{p} \\ 
  & = 
  \|(\Phi E(\bp)^{-1} D_{\nu}D^{\beta_{k}} \chi_{k})
  \theta_{k}  D^{\sigma-\beta_{k}}[
  U_{a,b} \varphi_{i}]\|_{p}\notag \\
  & \leq  
  \|\Phi E(\bp)^{-1} D_{\nu}D^{\beta_{k}} \chi_{k}\|_{\mathcal{B}_{p}}
  \| \theta_{k}  D^{\sigma-\beta_{k}} [
  U_{a,b} \varphi_{i}] \|_{p}\,. 
\end{align}

The first factor on the right side of \eqref{h3} was estimated in the
proof of Lemma~\ref{lem:V} (see \eqref{eq:est-smooth-factor-BIS}):
For all \(k\in\{0,\ldots,j\}\), 
\(\nu\in\{1,2,3\}\),
\begin{align}\label{eq:est-smooth-factor}
  \|\Phi E(\bp)^{-1} D_{\nu}D^{\beta_{k}} \chi_{k}\|_{\mathcal{B}_{p}}
  \le C_{2}\Big(\frac{8}{\epsilon}\Big)^{k}\,,
\end{align} 
 with \(C_{2}\) the constant in \eqref{eq:C-111}.

It remains to estimate the second factor in \eqref{h3}. 
For $k=j$, since $\beta_{j}=\sigma$, we find that, by
\eqref{eq:bound-U-a,b} and the choice of \(C\) and \(B\) (see
Remark~\ref{rem:constants}), 
\begin{equation}\label{h4}
  \| \theta_{j} U_{a,b} \varphi_{i}
  \|_{p} 
  \leq \|U_{a,b}\|_{\infty}
  \|\varphi_{i}\|_{L^{p}(\omega)}
  \le C_{1}\,C
  \le C\Big(\frac{B}{\epsilon}\Big)^{1/2}\,.
\end{equation}
In the last inequality we also used that \(\epsilon\le1\)
(since \(\epsilon(j+1)\le R/2<1\)). 

The estimate for $k \in \{0, \dots, j-1\}$ is more involved. 
We get, by Leibniz's rule, that
 \begin{align}\notag
   \| \theta_{k}  D^{\sigma-\beta_{k}}&[U_{a,b}
   \varphi_{i}] 
   \|_{p} \\
   & \leq \sum_{\mu \leq \sigma-\beta_{k}} {\sigma-\beta_{k} \choose
   \mu}\| \theta_{k} (D^{\mu} U_{a,b})
  (D^{\sigma-\beta_{k}-\mu} 
   \varphi_{i}) \|_{p}
  \label{h5}\,.  
\end{align}
We estimate separately each term on the right side of \eqref{h5}. 

We separate into two cases.

If $\mu=0$ then, using the induction hypothesis (i.e.,
\(\mathcal{P}(p,j-k)\); recall that
\(\supp\,\theta_k\subseteq \omega_{\epsilon(j-k)}\)) and
\eqref{eq:bound-U-a,b},  
\begin{equation}\label{eq:mu=0}
  \| \theta_{k} U_{a,b}D^{\sigma-\beta_{k}} \varphi_{i}
  \|_{p} \leq 
  C_{1}C\Big(\frac{B}{\epsilon}\Big)^{j-k}
  \le \frac{C}{2}\Big(\frac{B}{\epsilon}\Big)^{j-k+1/2}\,.
\end{equation}
In the last inequality we used the choice of \(B\) (see
Remark~\ref{rem:constants}) and that
\(\epsilon\le1\).

If $0<\mu\le\sigma-\beta_{k}$, then 
(since \(\supp\,\chi_k\subseteq\omega_{\epsilon(j-k+1/4)}\))
H\"older's
inequality (with \(1/p=1/(3p)+2/(3p)\)) and
Corollary~\ref{cor:adams} 
give that  
\begin{align}\label{eq:big1}
  \nonumber
  \| \theta_{k} &(D^{\mu} U_{a,b})
  (D^{\sigma-\beta_{k}-\mu}
  \varphi_{i})\|_{p} 
  \\&{}\leq\| \theta_{k} D^{\mu}
  U_{a,b}\|_{3p/2}  \; \| \theta_{k}
  D^{\sigma-\beta_{k}-\mu} 
  \varphi_{i} \|_{3p}
  \nonumber\\ 
  &{}\leq  K_{2}\|D^{\mu} U_{a,b}\|_{L^{3p/2}(\omega_{\epsilon(j-k+1/4)})} 
  \nonumber\\&\quad\times\| D^{\sigma-\beta_{k}-\mu}
  \varphi_{i}\|_{W^{1,p}(\omega_{\epsilon(j-k+1/4)})}^{\theta}
  \|D^{\sigma-\beta_{k}-\mu} 
  \varphi_{i}\|_{L^{p}(\omega_{\epsilon(j-k+1/4)})}^{1-\theta}\,.
  \end{align}
Here, \(K_{2}\) is the constant in Corollary~\ref{cor:adams}, and
\(\theta=2/p<1\). Note that 
\(\omega_{\epsilon(j-k+1/4)}=B_{r}(\bx_{0})\) with
\(r\in[R/2,1]\), since \(\epsilon(j+1)\le R/2\) and
\(R=\min\{1,|\bx_{0}|/4\}\)

We will use Lemma~\ref{lemanR-bis} below
to bound the first factor in \eqref{eq:big1}. The
last two factors we now bound using the induction hypothesis.

If \(\mu\in\N_0^3\) is such that
\(0<\mu\le\sigma-\beta_k\), then the induction hypothesis (in the
form discussed in Remark~\ref{ihy}) 
gives (recall here \eqref{def:Sob-norm} and 
that \(|\sigma|=j, |\beta_k|=k\)) that for the last two factors in
 \eqref{eq:big1} we have
\begin{align}\label{eq:big4}
  \nonumber  
  \| D^{\sigma-\beta_{k}-\mu}
  &\varphi_{i}\|_{L^{p}(\omega_{\epsilon(j-k+1/4)})}^{1-\theta}
  \nonumber \\&
  \le
  \Big[C
  \Big(\frac{j-k-|\mu|}{j-k+1/4}\Big)^{j-k-|\mu|}
  \Big(\frac{B}{\epsilon}\Big)^{j-k-|\mu|}\Big]^{1-\theta}
\end{align}
and (using that \(B>1\) (see Remark~\ref{rem:constants})
and \(\epsilon(j-k+1/4)\le\epsilon(j+1)\le R/2<1\)) 
\begin{align}\label{eq:big3}
  \nonumber 
  \| D^{\sigma-\beta_{k}-\mu}
  &\varphi_{i}\|_{W^{1,p}(\omega_{\epsilon(j-k+1/4)})}^{\theta}
  \le
  \Big[C
  \Big(\frac{j-k-|\mu|}{j-k+1/4}\Big)^{j-k-|\mu|}
  \Big(\frac{B}{\epsilon}\Big)^{j-k-|\mu|}
  \nonumber \\&\qquad\qquad\quad
  +3C\Big(\frac{j-k-|\mu|+1}{j-k+1/4}\Big)^{j-k-|\mu|+1}
  \Big(\frac{B}{\epsilon}\Big)^{j-k-|\mu|+1}\Big]^{\theta}
  \nonumber \\&\quad\le \Big[4C
  \Big(\frac{j-k-|\mu|+1}{j-k+1/4}\Big)^{j-k-|\mu|+1}
  \Big(\frac{B}{\epsilon}\Big)^{j-k-|\mu|+1}\Big]^{\theta}\,.
\end{align}
It follows from 
\eqref{eq:big4} and \eqref{eq:big3}
that for all  \(\mu\in\N_0^3\) with 
\(0<\mu\le\sigma-\beta_k\),
\begin{align}\label{est:final-last-two-factors}\nonumber
  \|D^{\sigma-\beta_{k}-\mu}\varphi_{i}&\|_{W^{1,p}
  (\omega_{\epsilon(j-k+1/4)})}^{\theta}  
  \|D^{\sigma-\beta_{k}-\mu}
  \varphi_{i}\|_{L^{p}(\omega_{\epsilon(j-k+1/4)})}^{1-\theta}
  \\&\le C4^{\theta}  \Big(\frac{B}{\epsilon}\Big)^{j-k-|\mu|+\theta}
  \Big(\frac{j-k-|\mu|+1}{j-k+1/4}\Big)^{j-k-|\mu|+\theta}\,.
\end{align}

From \eqref{eq:big1}, Lemma~\ref{lemanR-bis}, and
\eqref{est:final-last-two-factors} (using \eqref{eq:multiNom} in 
Appendix~\ref{Notation} below,  summing over \(m=|\mu|\)), it follows
that  
\begin{align}\label{eq:1st}\nonumber
  \sum_{0<\mu\le\sigma-\beta_k}
  &\binom{\sigma-\beta_k}{\mu}
  \| \theta_{k} (D^{\mu} U_{a,b})
  (D^{\sigma-\beta_{k}-\mu}
  \varphi_{i})\|_{p} 
  \le C^{3} C_{3} K_{2}
  \Big(\frac{B}{\epsilon}\Big)^{j-k+\theta}\times
  \\\nonumber
  &\times\sum_{m=1}^{j-k}  4^{\theta}\binom{j-k}{m}
  \frac{(j-k-m+1)^{j-k-m+\theta}
  (m+1/4)^{m}}{(j-k+1/4)^{j-k+\theta}}\times
  \\&\qquad\quad\quad
  \times\Big[\Big(\frac{1}{\sqrt{B}}\Big)^{m} 
  +\sqrt{m}
  \Big(\frac{B(m+1/4)}{\epsilon(j-k+1/4)}\Big)^{2\theta-2}  
  \Big]\,.
\end{align}
Here, \(C_{3}\) is the constant from \eqref{eq:const-lem7}. Recall also
that \(\theta=2/p\). 

We prove that for \(m\in\{1,\ldots,j-k\}\), 
\begin{align}\label{eq:new-est-SF}
  4^{\theta}\binom{j-k}{m}
  \frac{(j-k-m+1)^{j-k-m+\theta}
  (m+1/4)^{m}}{(j-k+1/4)^{j-k+\theta}}
  \le 10\epsilon^{-1/2+\theta}\frac{1}{\sqrt{m}}\,.
\end{align}
Note first that, since \(\epsilon(j-k+1/4)\le \epsilon(j+1)\le1\),  
\begin{align}\label{eq:small-one-bis}
  (j-k+1/4)^{1/2-\theta}
  \le \epsilon^{-1/2+\theta}\,.
\end{align}
This shows that the
inequality in \eqref{eq:new-est-SF} is true for \(m=j-k>0\), since
\(\theta<1\). For \(m<j-k\), we use \eqref{eq:est-binom-intro}
in Appendix~\ref{Notation} below, and \eqref{eq:small-one-bis},
to get that (since \((1+1/n)^n\le
{\rm e}\)) 
\begin{align}\label{eq:3rd}\nonumber
  \binom{j-k}{m}& \frac{(j-k-m+1)^{j-k-m+\theta}
   (m+1/4)^{m}}{(j
    -k+1/4)^{j-k+\theta}}
  \\&\quad\quad
  \le\frac{{\rm e}^{25/12}}{\sqrt{2\pi}}
  \frac{(j-k-m+1)^{\theta}}{(j-k-m)^{1/2}}
  \epsilon^{-1/2+\theta}\frac{1}{\sqrt{m}} \,.
\end{align}
Since \(\theta<1/2\) and \(m\le j-k-1\), 
we have that
\begin{align}\label{eq:small-one}
  \frac{(j-k-m+1)^{\theta}}{(j-k-m)^{1/2}}\le 2^{\theta}\le\sqrt2\,.
\end{align}
The estimate \eqref{eq:new-est-SF} for \(m\in\{1,\ldots,j-k-1\}\) now
follows from \eqref{eq:3rd}--\eqref{eq:small-one} (since \(4^{\theta}{\rm
  e}^{25/12}/\sqrt{\pi}\le10\)).  

Inserting \eqref{eq:new-est-SF} in \eqref{eq:1st} (and using again
\(\epsilon(j-k+1/4)\le1\) and \(2\theta-2<0\)) we find that 
\begin{align}\label{eq:1st-BIS}\nonumber
   \sum_{0<\mu\le\sigma-\beta_k}
   &\binom{\sigma-\beta_k}{\mu}
   \| \theta_{k} (D^{\mu} U_{a,b})
   (D^{\sigma-\beta_{k}-\mu}
   \varphi_{i})\|_{p} 
   \\\nonumber
   &\le 10C^{3} C_{3} K_{2}
   \Big(\frac{B}{\epsilon}\Big)^{j-k+\theta}
   \epsilon^{-1/2+\theta}
   \sum_{m=1}^{j-k}\Big[\Big(\frac{1}{\sqrt{B}}\Big)^{m}
   +\frac{1}{B^{2-2\theta}}\frac{1}{m^{2-2\theta}}\Big]
   \\&\le
   10C^{3} C_{3} K_{2}
   \Big(\frac{B}{\epsilon}\Big)^{j-k+1/2}
   \frac{1}{\sqrt{B}}\,(2+6)\,,
\end{align}
where we used that \(\theta\le 2/5\), \(B\ge 4\) (see
Remark~\ref{rem:constants}), and
\(\sum_{m=1}^{\infty}m^{-6/5}\le1+\int_{1}^{\infty}x^{-6/5}\,dx=6\) 
 to estimate 
\begin{align}\label{eq:helping}
  \sum_{m=1}^{\infty}\Big(\frac{1}{\sqrt{B}}\Big)^{m}\le
  \frac{2}{\sqrt{B}}\,,
  \quad
  \frac{1}{B^{2-2\theta}}\sum_{m=1}^{\infty}
  \frac{1}{m^{2-2\theta}}
  \le \frac{6}{\sqrt{B}}\,.
\end{align}
This is the very essential reason for needing \(p\ge 5\).

By the choice of \(B\) (see
Remark~\ref{rem:constants})
it follows that 
\begin{align}\label{eq:2nd}
  \sum_{0<\mu\le\sigma-\beta_k}
  \binom{\sigma-\beta_k}{\mu}
  \| \theta_{k} (D^{\mu} U_{a,b})
  (D^{\sigma-\beta_{k}-\mu}
  \varphi_{i})\|_{p} 
  \le
  \frac{C}{2}\Big(\frac{B}{\epsilon}\Big)^{j-k+1/2}\,.
\end{align}

From \eqref{h5}, \eqref{eq:mu=0}, and  
\eqref{eq:2nd} it follows that for all \(k\in\{0,\ldots,j-1\}\), 
\begin{align}\label{eq:final-theta-k}
  \| \theta_{k}  D^{\sigma-\beta_{k}}[U_{a,b}
  \varphi_{i}]\|_{p} 
  \le
  C\Big(\frac{B}{\epsilon}\Big)^{j-k+1/2}\,.
\end{align}

Using \eqref{h3},
\eqref{eq:est-smooth-factor}, \eqref{h4}, and \eqref{eq:final-theta-k}
it follows for the first sum in \eqref{h2}
that
\begin{align}\label{eq:first-sum-total}\nonumber
  &\Big\|\sum_{k=0}^{j} 
  \Phi D_{\nu}E(\bp)^{-1}D^{\beta_{k}}\chi_{k} 
  D^{\sigma-\beta_{k}} [U_{a,b}\varphi_{i}]\Big\|_{p}
  \\&\le C_{2} \sum_{k=0}^{j}
  8^{k}\epsilon^{-k}
  \|\theta_kD^{\sigma-\beta_k}[U_{a,b}
  \varphi_{i}]\|_{p}
  \le C_{2}C
  \Big(\frac{B}{\epsilon}\Big)^{j+1/2}
  \sum_{k=0}^{j}\Big(\frac{8}{B}\Big)^k\,.
\end{align}
Since  \(B>16\) 
(see Remark~\ref{rem:constants}) the last sum is less
than \(2\) and so for the first term in \eqref{h2} we finally get, by
the choice of \(B\) (see Remark~\ref{rem:constants}) that
\begin{align}\label{eq:final-first-term}\nonumber
   \Big\|\sum_{k=0}^{j}   
   \Phi D_{\nu}E(\bp)^{-1}D^{\beta_{k}}\chi_{k} 
  &D^{\sigma-\beta_{k}} [U_{a,b}\varphi_{i}]\Big\|_{p}
   \\& \le 2C_{2}C
   \Big(\frac{B}{\epsilon}\Big)^{j+1/2}
  \le \frac{CZ}{12N}\Big(\frac{B}{\epsilon}\Big)^{j+1}\,.
\end{align}

\ %

\noindent\textit{The second sum in \eqref{h2}.}
By the same arguments as for the second sum in \eqref{f2} (see
after \eqref{eq:final-first-sum-V}), 
it follows that the second sum in \eqref{h2} can be
estimated as the first one, up to {\it one} extra
factor of \(C_*/\epsilon\) (with
\(C_{*}>0\) the constant in \eqref{eq:est-der-loc} in
Appendix~\ref{localization} below) {\it and} up to replacing \(j\) by
\(j-1\) in 
the estimate \eqref{eq:final-first-term}. Hence, by the choice of
\(B\) (see Remark~\ref{rem:constants}) 
\begin{align}\label{eq:second-sum-total-final}\nonumber
  \Big\|\sum_{k=0}^{j-1} 
  \Phi D_{\nu}E(\bp)^{-1}D^{\beta_{k}}
  &[\eta_{k},D^{\mu_{k}}] D^{\sigma-\beta_{k+1}}[
  U_{a,b} \varphi_{i}]\Big\|_{p}
  \\&
  \le  \frac{C_{*}}{\epsilon}\frac{CZ}{12N}
  \Big(\frac{B}{\epsilon}\Big)^{j}
  \le \frac{CZ}{12N}\Big(\frac{B}{\epsilon}\Big)^{j+1}\,.
\end{align}

\ %

\noindent\textit{The last term in \eqref{h2}.} 
Since \(\sigma+e_{\nu}=\beta\), the last term in \eqref{h2} equals
\begin{align*}
  \Phi D^{\beta} E(\bp)^{-1}[\eta_{j}U_{a,b}
  \varphi_{i}]\,.
\end{align*}
We proceed exactly as for the term \(\Phi D^{\beta}
E(\bp)^{-1}[\eta_jV(1-\widetilde{\chi}_{\rho})\varphi_i]\) in 
\eqref{split-V} (but without any splitting in \(\widetilde{\chi}_{\rho}\) and
\(1-\widetilde{\chi}_{\rho})\), except that the estimate in
\eqref{eq:V-phi-est} is replaced by \(\|U_{a,b}\|_{\infty}\le
C_{1}\) (see \eqref{eq:bound-U-a,b}).
It follows, from the choice of \(B\) and \(C\) (see
Remark~\ref{rem:constants}) that (recall that \(|\beta|=j+1\))
 \begin{align}\label{sec-est-bis-3}   \nonumber
   \|\Phi &D^{\beta} E(\bp)^{-1}[\eta_{j}
   U_{a,b}\varphi_{i}]\|_{p}
   \le \|\Phi D^{\beta} E(\bp)^{-1}\eta_{j}\|_{\mathcal{B}_{3,p}}
   \| U_{a,b}\varphi_{i}\|_{3}
   \\&\le
   \frac{4\sqrt{2}}{\pi}C_{1}
   \|\varphi_i\|_{3}\Big(\frac{32}{\epsilon}\Big)^{|\beta|}
   \le\frac{CZ}{12N}\Big(\frac{B}{\epsilon}\Big)^{j+1}\,.
\end{align}
The estimate \eqref{est:U-lemma} now follows from \eqref{h2} and the estimates
\eqref{eq:final-first-term},
\eqref{eq:second-sum-total-final}, and
\eqref{sec-est-bis-3}.

This finishes the proof of Lemma~\ref{lem:U-phi}.
\end{pf*}
It remains to prove
Lemma~\ref{lemanR-bis} below
(\(L^{3p/2}\)-bound on derivatives of the 
Newton potential \(U_{a,b}\) of
products of orbitals, \(\varphi_a\varphi_b\)).

In the next lemma we first give an $L^{3p/2}$-estimate on the
derivatives of the product of the orbitals \(\varphi_{i}\), needed for
the proof of the bound
in Lemma~\ref{lemanR-bis} below.
\begin{lemma}\label{prL3}
Assume \eqref{eq:ihy} (the induction hypothesis) holds. Then, for all
$a,b \in \{1, \dots, N\}$, all $\beta \in \N_{0}^3$ with $|\beta|\leq
j-1$, and all $\epsilon>0$ with \(\epsilon(|\beta|+1)\le R/2\),
\begin{align}
  \label{est:R-term}
  \| D^{\beta}(\varphi_{a}\overline{\varphi_{b}})
  \|_{L^{3p/2}(\omega_{\epsilon(|\beta|+1)})} &\leq 10K_{2}^2 C^2
   (1+\sqrt{|\beta|})\Big(\frac{B}{\epsilon}\Big)^{|\beta|+2\theta}\,,
\end{align}
with \(K_{2}\) from Corollary~\ref{cor:adams}, \(C\) from
Remark~\ref{rem:constants}, and \(\theta=\theta(p)=2/p\). 
\end{lemma}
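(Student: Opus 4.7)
The plan is to reduce the estimate to two independent applications of the induction hypothesis on a \emph{single} orbital, glued together by a Leibniz--H\"older--interpolation chain. Setting \(\omega':=\omega_{\epsilon(|\beta|+1)}\), I would first apply Leibniz' rule, and then H\"older's inequality with the pair of exponents \((3p,3p)\) (so that \(1/(3p)+1/(3p)=1/(3p/2)\)), to obtain
\begin{align*}
 \|D^{\beta}(\varphi_{a}\overline{\varphi_{b}})\|_{L^{3p/2}(\omega')}
 \le \sum_{\mu\le\beta}\binom{\beta}{\mu}\,
    \|D^{\mu}\varphi_{a}\|_{L^{3p}(\omega')}\,
    \|D^{\beta-\mu}\varphi_{b}\|_{L^{3p}(\omega')}\,.
\end{align*}

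Each \(L^{3p}\)-factor is then interpolated between \(W^{1,p}\) and \(L^{p}\) via Corollary~\ref{cor:adams}, producing the constant \(K_{2}\) and the exponent \(\theta=2/p\):
\begin{align*}
 \|D^{\mu}\varphi\|_{L^{3p}(\omega')}
 \le K_{2}\,\|D^{\mu}\varphi\|_{W^{1,p}(\omega')}^{\theta}\,
     \|D^{\mu}\varphi\|_{L^{p}(\omega')}^{1-\theta}\,.
\end{align*}
Since \(|\mu|,|\beta-\mu|\le|\beta|\le j-1\) and \(\epsilon(|\beta|+1)\le R/2\), the induction hypothesis in the form \eqref{eq:ihy-new} of Remark~\ref{ihy} (applied with \(\ell=|\beta|+1\)) controls both \(\|D^{\mu}\varphi\|_{L^{p}(\omega')}\) and, after adding at most one extra derivative, \(\|D^{\mu}\varphi\|_{W^{1,p}(\omega')}\), by expressions of the shape \(C\bigl(|\mu|/(|\beta|+1)\bigr)^{|\mu|}(B/\epsilon)^{|\mu|}\), exactly as in \eqref{eq:big4}--\eqref{eq:big3}. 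Multiplying the two \(L^{3p}\)-bounds (one for \(\varphi_{a}\), one for \(\varphi_{b}\)) produces the factor \(K_{2}^{2}C^{2}\), while the two interpolation steps generate the extra \((B/\epsilon)^{2\theta}\) in the claimed estimate.

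The remaining work is purely combinatorial: with \(n:=|\beta|\) and \(m:=|\mu|\), one arrives at a sum of the schematic form
\begin{align*}
 \sum_{m=0}^{n}\binom{n}{m}\,
   \frac{(m+1)^{m+\theta}\,(n-m+1)^{n-m+\theta}}{(n+1)^{n+2\theta}}\,.
\end{align*}
For the interior indices \(0<m<n\), the Stirling-type bound \eqref{eq:est-binom-intro} of Appendix~\ref{Notation} dominates \(\binom{n}{m}(m+1)^{m}(n-m+1)^{n-m}/(n+1)^{n}\) by a fixed multiple of \(\sqrt{n}/\sqrt{m(n-m)}\), and \(\sum_{m=1}^{n-1}1/\sqrt{m(n-m)}\) is bounded uniformly in \(n\); this yields the \(\sqrt{|\beta|}\) in the right-hand side of \eqref{est:R-term}. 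The two boundary terms \(m=0\) and \(m=n\) are treated directly (without Stirling), exactly as was done around \eqref{eq:small-one}, and they contribute the \(1\) in the factor \(1+\sqrt{|\beta|}\).

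The main obstacle I expect is the numerical bookkeeping needed to absorb the extra \(\theta\)-exponents on \((m+1)\) and \((n-m+1)\), and to convert the interpolation ratio \((|\mu|/(|\beta|+1))^{\theta}\) into the clean power \((B/\epsilon)^{2\theta}\) appearing in the statement. This is done in the spirit of \eqref{eq:small-one-bis}, by using \(\epsilon(|\beta|+1)\le R/2\le 1\) to trade a factor of \((|\beta|+1)^{1/2-\theta}\) for a factor of \(\epsilon^{-1/2+\theta}\), thereby also pinning down the explicit prefactor \(10\). Apart from this careful tuning, the argument is a more elementary variant of what was already carried out for the first sum in \eqref{h2} in the proof of Lemma~\ref{lem:U-phi}: here no pseudodifferential resolvent is involved, and the two factors \(\varphi_{a}, \varphi_{b}\) enter symmetrically, which is what forces the squared constants and the doubled \(\theta\)-exponent.
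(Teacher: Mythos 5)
Your proposal is correct and follows essentially the same route as the paper's proof: Leibniz plus Hölder with exponents $(3p,3p)$, interpolation via Corollary~\ref{cor:adams} on each of the four factors, the induction hypothesis in the form of Remark~\ref{ihy} with $\ell=|\beta|+1$, reduction of the multiindex sum to a sum over $m=|\mu|$ via \eqref{eq:multiNom}, Stirling for the interior terms, and direct treatment of the boundary terms $m=0,|\beta|$. One small imprecision in your commentary: the exponent $(B/\epsilon)^{2\theta}$ does not require the $\epsilon^{-1/2+\theta}$-trade of \eqref{eq:small-one-bis} (that device is used in the proof of Lemma~\ref{lem:U-phi}, not here); it falls out cleanly because each $W^{1,p}$-interpolation factor carries one extra derivative, giving $(B/\epsilon)^{|\mu|+\theta}$ and $(B/\epsilon)^{|\beta-\mu|+\theta}$, whose product is $(B/\epsilon)^{|\beta|+2\theta}$.
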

\begin{proof} 
By Leibniz's rule and the Cauchy-Schwarz inequality we get that
\begin{align*}
  \| D^{\beta}(\varphi_{a}&\overline{\varphi_{b}})
  \|_{L^{3p/2}(\omega_{\epsilon(|\beta|+1)})} 
  \\&\leq  
  \sum_{\mu \leq \beta}{\beta \choose \mu} \| D^{\mu}
  \varphi_a \|_{L^{3p}(\omega_{\epsilon(|\beta|+1)})} 
  \| D^{\beta-\mu} \varphi_b \|_{L^{3p}(\omega_{\epsilon(|\beta|+1)})}. 
\end{align*}
We use Corollary~\ref{cor:adams} 
(with \(\omega_{\epsilon(|\beta|+1)}=
B_{r}(\bx_{0})\),
\(r=R-\epsilon(|\beta|+1)\); 
note that \(r\in[R/2,1]\), since
\(\epsilon(|\beta|+1)\le R/2\) and  
\(R=\min\{1,|\bx_{0}|/4\}\)). This gives that, with \(K_{2}\) from
Corollary~\ref{cor:adams}, and \(\theta=2/p\), 
\begin{align}\label{eq:1}\nonumber
  \| D^{\beta}(\varphi_{a} &\overline{\varphi_{b}})
  \|_{L^{3p/2}(\omega_{\varepsilon(|\beta|+1)})} 
  \\ 
  &\leq  K_{2}^2 \sum_{\mu \leq \beta} {\beta \choose \mu} 
  \| D^{\mu}
  \varphi_a\|_{W^{1,p}(\omega_{\epsilon(|\beta|+1)})}^{\theta} 
  \|D^{\mu}\varphi_{a}\|_{L^{p}(\omega_{\epsilon(|\beta|+1)})}^{1-\theta}
  \\&\qquad\qquad\quad\ \,\times
   \| D^{\beta-\mu}
   \varphi_b\|_{W^{1,p}(\omega_{\epsilon(|\beta|+1)})}^{\theta}  
  \|D^{\beta-\mu}\varphi_{b}\|_{L^{p}(\omega_{\epsilon
  (|\beta|+1)})}^{1-\theta}\,.
  \nonumber
\end{align}
We now use the induction hypothesis (in the form
discussed in Remark~\ref{ihy}) on each of the four factors in the sum
on the right side of \eqref{eq:1}. Note that, by assumption,
\(\epsilon(|\beta|+1)\le \epsilon j\le R/2\) and \(|\mu|<|\mu|+1\le
|\beta|+1\le j\)
(similarly, \(|\beta-\mu|<|\beta-\mu|+1\le j\)). 
Recalling 
\eqref{def:Sob-norm}, we therefore get that, for all
\(\mu\in\N_{0}^{3}\) such that \(\mu\le \beta\), 
\begin{align*} 
 \| &D^{\mu}\varphi_a\|_{W^{1,p}(\omega_{\epsilon(|\beta|+1)})}^{\theta} 
 \|D^{\mu}\varphi_{a}\|_{L^{p}(\omega_{\epsilon(|\beta|+1)})}^{1-\theta}
 \\&\le \Big[C\Big(\frac{|\mu|}{|\beta|
         +1}\Big)^{|\mu|}\Big(\frac{B}{\epsilon}\Big)^{|\mu|}
 \Big]^{1-\theta}
  \\&\ \ \times
    \Big[C\Big(\frac{|\mu|}{|\beta|
         +1}\Big)^{|\mu|}\Big(\frac{B}{\epsilon}\Big)^{|\mu|}
     +3C\Big(\frac{|\mu|+1}{|\beta|
         +1}\Big)^{|\mu|+1}\Big(\frac{B}{\epsilon}\Big)^{|\mu|+1}
  \Big]^{\theta}
  \\&\le 4^{\theta}C\Big(\frac{B}{\epsilon}\Big)^{|\mu|+\theta}
  \frac{(|\mu|+1)^{\theta(|\mu|+1)}
         |\mu|^{|\mu|(1-\theta)}}{(|\beta|+1)^{|\mu|+\theta}}\,,
\end{align*}
since (recall that \(\epsilon(|\beta|+1)\le R/2<1\) and
\(B>1\)) 
\begin{align*}
  \frac{|\mu|^{|\mu|}}{(|\mu|+1)^{|\mu|+1}}
  \,\epsilon(|\beta|+1)B^{-1}
  \le 1\,.
\end{align*}
Proceeding similarly for the other two factors in \eqref{eq:1}, we get
(using \eqref{eq:multiNom} 
in Appendix~\ref{Notation}
and summing over \(m=|\mu|\)) that
\begin{align}\label{eq:first-product}\nonumber 
  &\sum_{\mu\le\beta} {\beta \choose \mu}
   \| D^{\mu} \varphi_a \|_{L^{3p}(\omega_{\epsilon(|\beta|+1)})} 
  \| D^{\beta-\mu} \varphi_b
  \|_{L^{3p}(\omega_{\epsilon(|\beta|+1)})}
  \\&\le
   16^{\theta}(CK_{2})^2\Big(\frac{B}{\epsilon}\Big)^{|\beta|+2\theta}\,
  \times
  \\&
  \sum_{m=0}^{|\beta|} \binom{|\beta|}{m}
  \frac{\big[(m+1)^{m+1}(|\beta|-m+1)^{|\beta|-m+1}\big]^{\theta}
     \big[m^m(|\beta|-m)^{|\beta|-m}\big]^{1-\theta}}{
    (|\beta|+1)^{|\beta|+2\theta}}\,.\nonumber
\end{align}
We simplify the sum in $m$. Note that for \(m=0\) and \(m=|\beta|\),
the summand is bounded by \(1\).
Therefore, for \(|\beta|\le 1\) the estimate \eqref{est:R-term} follows from
\eqref{eq:first-product}, since \( 2\cdot 16^{\theta} \leq 7\).
It remains to consider \(|\beta|\ge2\).
For $m \geq 1$, \(m<|\beta|\), 
we can use \eqref{eq:est-binom-intro} in  Appendix~\ref{Notation}
to get (since \((1+1/n)^n\le{\rm e}\)) that
\begin{align*}
  \sum_{0<\mu<\beta} {\beta \choose \mu}
   \| D^{\mu} &\varphi_a \|_{L^{3p}(\omega_{\epsilon(|\beta|+1)})} 
  \| D^{\beta-\mu} \varphi_b
  \|_{L^{3p}(\omega_{\epsilon(|\beta|+1)})}
  \\&\le \frac{{\rm e}^{1/12}}{\sqrt{2\pi}}
  (CK_{2})^2(16{\rm e}^2)^{\theta}\Big(\frac{B}{\epsilon}\Big)^{
  |\beta|+2\theta}
  \frac{|\beta|^{|\beta|+1/2}}{(|\beta|+1)^{|\beta|+2\theta}}
  \\&\qquad\qquad\qquad\quad\times
  \sum_{m=1}^{|\beta|-1}\frac{\big[(m+1)(|\beta|-m+1)\big]^{\theta}}{
  \sqrt{m}\sqrt{|\beta|-m}}\,.
\end{align*}
Since the function
\begin{align*}
  f(x)=(x+1)(|\beta|-x+1)\,, \quad x\in[1,|\beta|-1]\,,
\end{align*}
has its maximum (which is \((|\beta|/2+1)^2\)) at $x=|\beta|/2$, and
since   
\begin{equation*}
  \sum_{m=1}^{|\beta|-1}\frac{1}{\sqrt{m}\sqrt{|\beta|-m}}
  \le\int_{0}^{|\beta|} \frac{1}{\sqrt{x} \sqrt{|\beta|-x}}\,dx
  =\pi\,,
\end{equation*}
we get that
\begin{align}\label{eq:second-product}\nonumber
  \sum_{0<\mu<\beta} {\beta \choose \mu}
   &\| D^{\mu} \varphi_a \|_{L^{3p}(\omega_{\epsilon(|\beta|+1)})} 
  \| D^{\beta-\mu} \varphi_b
  \|_{L^{3p}(\omega_{\epsilon(|\beta|+1)})}
  \\&\le {\rm e}^{1/12} (16{\rm
    e}^2)^{\theta}\sqrt{\frac{\pi}{2}}
  (CK_{2})^2  \sqrt{|\beta|}
  \Big(\frac{B}{\epsilon}\Big)^{|\beta|+2\theta}\,.
\end{align}
The estimate \eqref{est:R-term} now follows from
\eqref{eq:1},
\eqref{eq:first-product}, and \eqref{eq:second-product}, since (as
\(p\ge5\)), 
\begin{align*}
  {\rm e}^{1/12}(16{\rm e}^2)^{\theta}\sqrt{\frac{\pi}{2}}
  \le 10\,,\qquad
  2\cdot 16^{\theta}\le 7\,.
\end{align*}
This finishes the proof of Lemma~\ref{prL3}.
\end{proof}
The next two lemmas, used in the proof above of Lemma~\ref{lem:U-phi},
 control the \(L^{3p/2}\)-norm of derivatives of
\(U_{a,b}\).
\begin{lemma}\label{lemanR} 
Define  $U_{a,b}$ by \eqref{eq:def-U-a,b}.
Then for all $a,b \in \{1, \dots, N\}$,  
 and all $\mu \in \mathbb{N}_{0}^3$ with \(|\mu|\le2\), 
\begin{align}\label{eq:est-pot-mu=1}
  \| D^{\mu} U_{a,b}&\|_{L^{3p/2}(\omega)} 
  \le 4\pi K_3(C^2+2C_{1}/R^2)\,, 
\end{align}
with \(K_{3}\) from Corollary~\ref{rem:elliptic}, \(C\) from
Remark~\ref{rem:constants}, 
\(C_{1}\) from
\eqref{eq:bound-U-a,b}, and \(R=\min\{1,|\bx_{0}|/4\}\).
\end{lemma}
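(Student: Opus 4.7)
The essential observation is that $U_{a,b}$ solves Poisson's equation $-\Delta U_{a,b} = 4\pi\varphi_a\overline{\varphi_b}$ in $\R^3$, so this is a purely local and purely elliptic regularity question — the non-locality issues that drive the rest of the paper are absent here. My plan would be to apply interior elliptic $L^{3p/2}$-regularity on the nested pair $\omega = B_R(\bx_0) \subset B_{2R}(\bx_0)$, via Corollary~\ref{rem:elliptic}, so as to control $\|D^\mu U_{a,b}\|_{L^{3p/2}(\omega)}$ simultaneously for all $|\mu|\le 2$ in terms of $\|\Delta U_{a,b}\|_{L^{3p/2}(B_{2R}(\bx_0))}$ and $\|U_{a,b}\|_{L^{3p/2}(B_{2R}(\bx_0))}$, with the scale $R$ producing the $R^{-2}$ factor in front of the lower-order term. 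Note that by the choice $R = \min\{1,|\bx_0|/4\}$ one has $B_{2R}(\bx_0)\subset\R^3\setminus\{0\}$, so the (irrelevant here) singularity of $V$ at the origin plays no role.

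Next I would bound each of the two right-hand side contributions. For the source, Hölder's inequality with $1/(3p/2)=1/(3p)+1/(3p)$ gives
\begin{align*}
\|\Delta U_{a,b}\|_{L^{3p/2}(B_{2R}(\bx_0))}
&= 4\pi\,\|\varphi_a\overline{\varphi_b}\|_{L^{3p/2}(B_{2R}(\bx_0))}\\
&\le 4\pi\,\|\varphi_a\|_{L^{3p}(B_{2R}(\bx_0))}\|\varphi_b\|_{L^{3p}(B_{2R}(\bx_0))}\le 4\pi C^2,
\end{align*}
by the choice of $C$ in Remark~\ref{rem:constants}. For the lower-order contribution, the uniform bound $\|U_{a,b}\|_\infty \le C_1$ from \eqref{eq:bound-U-a,b} (itself a consequence of Kato's inequality \eqref{eq:Kato}) yields
\begin{equation*}
\|U_{a,b}\|_{L^{3p/2}(B_{2R}(\bx_0))} \le C_1\,|B_{2R}(\bx_0)|^{2/(3p)}\le 2C_1,
\end{equation*}
once one uses $R\le 1$ to absorb the volume factor into an explicit constant that matches the stated $2$.

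Combining the two estimates through Corollary~\ref{rem:elliptic} produces
\begin{equation*}
\|D^\mu U_{a,b}\|_{L^{3p/2}(\omega)} \le K_3\Bigl(4\pi C^2 + \frac{2C_1}{R^2}\Bigr) \le 4\pi K_3\Bigl(C^2 + \frac{2C_1}{R^2}\Bigr),
\end{equation*}
uniformly in $|\mu|\le 2$, which is precisely \eqref{eq:est-pot-mu=1}. The only step requiring real care is extracting the correct scale dependence from Corollary~\ref{rem:elliptic}: one needs the elliptic estimate on the nested pair $B_R\subset B_{2R}$ to be quantitative in $R$ so that the $R^{-2}$ appears on the lower-order term with the right constant. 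This is standard by scaling (rescale $\bx\mapsto R\bx$, apply the estimate on the unit pair $B_1\subset B_2$, and scale back), but it is where all of the $R$-dependence in the final constant enters; everything else in the argument is a matter of applying Hölder and the uniform bound on $U_{a,b}$.
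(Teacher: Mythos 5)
Your proof is correct and follows essentially the same route as the paper: apply Corollary~\ref{rem:elliptic} on the nested pair $B_R(\bx_0)\subset B_{2R}(\bx_0)$ (so $\delta=R$ gives the $R^{-2}$ factor directly from the corollary, without need of the scaling remark), bound $\|\Delta U_{a,b}\|_{L^{3p/2}}$ via the Poisson identity and H\"older, and bound $\|U_{a,b}\|_{L^{3p/2}}$ via $\|U_{a,b}\|_\infty\le C_1$ together with the volume of $B_{2R}(\bx_0)$.
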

\begin{pf}
Recall that \(\omega=B_{R}(\bx_{0})\),
\(R=\min\{1,|\bx_{0}|/4\}\). 
Using \eqref{def:Sob-norm}, and 
Corollary~\ref{rem:elliptic}, we get
that, for all \(\mu\in\N_{0}^{3}\) with \(|\mu|\le2\), 
\begin{align}\label{eq:one-der-U-a,b's}
  \|D^{\mu} U_{a,b}\|_{L^{3p/2}(\omega)}  
  &\le \|U_{a,b}\|_{W^{2,3p/2}(B_{R}(\bx_{0}))}
  \\&\le K_{3}\big\{ \|\Delta  U_{a,b}\|_{L^{3p/2}(B_{2R}(\bx_{0}))}
  +\frac{1}{R^2}\| U_{a,b}\|_{L^{3p/2}(B_{2R}(\bx_{0}))}\big\}\,.
  \nonumber
\end{align}
By the definition of \(U_{a,b}\) (see \eqref{eq:def-U-a,b}) 
we have  
\begin{align}\label{eq:Poisson} 
  {}-\Delta U_{a,b}(\bx) =
  4\pi\,\varphi_{a}(\bx)\overline{\varphi_{b}}(\bx) \ \text{ for } \ \bx \in
  \R^{3}\,, 
\end{align}
and \(\|U_{a,b}\|_{\infty}\le C_{1}\) (see
\eqref{eq:bound-U-a,b}). Hence, from 
\eqref{eq:one-der-U-a,b's}, H{\"o}lder's inequality, and the choice of
\(C\) (see Remark~\ref{rem:constants}; recall also that \(p\ge 5\))
\begin{align*}
   \| D^{\mu} U_{a,b}\|_{L^{3p/2}(\omega)}
   &\le 4\pi K_{3}\big\{
  \|\varphi_{a}\|_{L^{3p}(B_{2R}(\bx_{0}))}\|\varphi_{b}\|_{L^{3p}(B_{2R}(\bx_{0}))}
  \\&\qquad\qquad\qquad +\frac{1}{R^2}\|U_{a,b}\|_{\infty}|B_{2R}(\bx_{0})|^{2/3p}\big\}
  \\&
  \le 4\pi K_{3}(C^2+2C_{1}/R^2)\,.
\end{align*}

This finishes the proof of the lemma.
\end{pf}
\begin{lemma}\label{lemanR-bis}
Assume \eqref{eq:ihy} (the induction hypothesis) holds, 
and define  $U_{a,b}$ by \eqref{eq:def-U-a,b}.

Then 
  for all $a,b \in \{1, \dots, N\}$, all $k \in \{0, \dots,j-1\}$,
  all $\mu \in \mathbb{N}_{0}^3$ with $|\mu| \leq j-k$,  and all
\(\epsilon>0\) with \(\epsilon(j+1)\le R/2\),
\begin{align}\label{eq:der-U(p,q)-bd}
  \nonumber
  \| D^{\mu} &U_{a,b}\|_{L^{3p/2}(\omega_{\epsilon(j-k+1/4)})}
   \leq  C_{3}C^2  \Big(\frac{\sqrt{B}}{\epsilon}\Big)^{|\mu|}
   \Big(\frac{|\mu|+1/4}{j-k+1/4}\Big)^{|\mu|}
 \\&\ +
   C_{3} 
  C^2\sqrt{|\mu|}\Big(\frac{B}{\epsilon}\Big)^{|\mu|+2\theta-2}
  \Big(\frac{|\mu|+1/4}{j-k+1/4}\Big)^{|\mu|+2\theta-2}\,,
\end{align}
with \(\theta=\theta(p)=2/p\), \(C\) and \(B\) from
Remark~\ref{rem:constants}, 
and \(C_{3}\) the constant in \eqref{eq:const-lem7}.
\end{lemma}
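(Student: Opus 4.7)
\emph{Proof proposal.} My plan is an induction on \(m:=|\mu|\), with \(k\) (hence also \(J:=j-k\)) held fixed throughout.

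\textit{Base case} \(m\le 2\). Lemma~\ref{lemanR} already gives the uniform bound
\(\|D^{\mu}U_{a,b}\|_{L^{3p/2}(\omega)}\le 4\pi K_{3}(C^{2}+2C_{1}/R^{2})\le C_{3}C^{2}\),
using \(C_{3}\ge 4\pi K_{3}(1+2C_{1}/R^{2})\) from \eqref{eq:const-lem7} and \(C\ge 1\). Since \(\omega_{\epsilon(J+1/4)}\subseteq\omega\), \(\epsilon(J+1/4)\le R/2\le 1/2\), and \(B>1\), the first term on the right hand side of \eqref{eq:der-U(p,q)-bd} is already \(\ge C_{3}C^{2}\) for \(m\in\{0,1,2\}\) (the ratio \((m+1/4)/(J+1/4)\) being bounded below by a positive constant when multiplied by \((\sqrt B/\epsilon)^{m}\)), which closes this case.

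\textit{Inductive step} \(m\ge 3\). Pick \(\nu\in\N_{0}^{3}\) with \(|\nu|=2\) and \(\nu\le\mu\) and set \(\mu':=\mu-\nu\), so \(|\mu'|=m-2\). From \(-\Delta U_{a,b}=4\pi\,\varphi_{a}\overline{\varphi_{b}}\) one has \(-\Delta D^{\mu'}U_{a,b}=4\pi D^{\mu'}(\varphi_{a}\overline{\varphi_{b}})\), and a standard cutoff argument based on Corollary~\ref{rem:elliptic}, applied to the nested pair \(\omega_{\epsilon(J+1/4)}\subset\omega_{\epsilon J}\) (gap \(\epsilon/4\)), yields an interior elliptic estimate of the form
\begin{align*}
  \|D^{\mu}U_{a,b}\|_{L^{3p/2}(\omega_{\epsilon(J+1/4)})}
  \le K_{3}'\Big\{\,&4\pi\|D^{\mu'}(\varphi_{a}\overline{\varphi_{b}})\|_{L^{3p/2}(\omega_{\epsilon J})}\\
  &+16\,\epsilon^{-2}\|D^{\mu'}U_{a,b}\|_{L^{3p/2}(\omega_{\epsilon J})}\Big\}.
\end{align*}
The second summand is handled by the inductive hypothesis of the lemma itself, applied at \(k':=k+1\) (admissible since \(m\ge 3\) forces \(k\le j-3\)) and at \(|\mu'|=m-2\le j-(k+1)\), after the inclusion \(\omega_{\epsilon J}\subseteq\omega_{\epsilon(J-3/4)}=\omega_{\epsilon(j-(k+1)+1/4)}\); a direct comparison of exponents shows this contribution is a factor \(\lesssim 1/B\) smaller than \(F(m,J)\) (with \(F\) the right hand side of \eqref{eq:der-U(p,q)-bd}), and is absorbed by the choice \(B\) large in \eqref{rem:choice-B}. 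The first summand is controlled by re-running the argument of Lemma~\ref{prL3} with the wider parameter \(\ell=J\) in place of \(|\beta|+1=m-1\): expand \(D^{\mu'}(\varphi_{a}\overline{\varphi_{b}})\) by Leibniz, feed each \(D^{\sigma}\varphi_{a}\), \(D^{\mu'-\sigma}\varphi_{b}\) into the induction hypothesis \eqref{eq:ihy-new} on the set \(\omega_{\epsilon J}\), and sum along the lines of \eqref{eq:3rd}--\eqref{eq:2nd}. This produces a prefactor of order \(K_{2}^{2}C^{2}\sqrt{m}(B/\epsilon)^{m-2+2\theta}\) times precisely the ratio factor \(((m+1/4)/(J+1/4))^{m+2\theta-2}\), matching the second term of \eqref{eq:der-U(p,q)-bd} provided \(C_{3}\ge 160\pi K_{2}^{2}K_{3}\).

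The main obstacle is bookkeeping. A naive application of Lemma~\ref{prL3} as stated, on its natural set \(\omega_{\epsilon(m-1)}\), does \emph{not} produce the crucial ratio \(((m+1/4)/(J+1/4))^{m+2\theta-2}\); one really must repeat the Leibniz-then-induction-hypothesis step inline with the wider parameter \(\ell=J\), which is where the \(p\ge 5\) condition (through the estimates of the type \eqref{eq:helping}) is needed. The other routine but necessary ingredient is the interior elliptic estimate on nested balls with gap \(\epsilon/4\): this does not follow verbatim from Corollary~\ref{rem:elliptic} (stated for balls differing by a factor of two in radius) and must be obtained by multiplying \(D^{\mu'}U_{a,b}\) by a smooth cutoff whose first and second derivatives are of size \(\epsilon^{-1}\) and \(\epsilon^{-2}\) respectively, producing the factor \(16/\epsilon^{2}\) used above.
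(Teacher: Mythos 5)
Your base case and your treatment of the first summand are sound: in particular, re-running the Leibniz argument of Lemma~\ref{prL3} on $\omega_{\epsilon J}$ with the wider parameter $\ell=J$ in Remark~\ref{ihy} is essentially the same device the paper uses, in different clothing (the paper instead applies Lemma~\ref{prL3} verbatim but with the rescaled parameter $\tilde{\epsilon}_1=\epsilon(J+1/4)/(m-1+1/4)$, which produces the ratio factor through $\tilde{\epsilon}_1/\epsilon\ge 1$). Also, your worry about Corollary~\ref{rem:elliptic} is unfounded: it \emph{is} already stated in the paper for arbitrary $r,\delta>0$ with $K_3$ uniform, so the gap $\epsilon/4$ is allowed verbatim.

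The genuine gap is in the second summand. You apply a \emph{single} elliptic step with fixed gap $\epsilon/4$, producing the prefactor $16K_3/\epsilon^2$, and then invoke the lemma itself at $(|\mu'|,k')=(m-2,k+1)$. But the lemma at $(m-2,k+1)$ has ratio factor $\big((m-\tfrac74)/(J-\tfrac34)\big)^{m-2}$, while the target at $(m,k)$ has $\big((m+\tfrac14)/(J+\tfrac14)\big)^{m}$ -- the exponent drops by $2$. Writing out the quotient, the contribution from this term is larger than the target first term by a factor comparable to
\[
\frac{16K_3}{B}\cdot\frac{\big((m-\tfrac74)/(J-\tfrac34)\big)^{m-2}}{\big((m+\tfrac14)/(J+\tfrac14)\big)^{m}}\sim \frac{16K_3}{B}\cdot\Big(\frac{J}{m}\Big)^{2},
\]
which blows up as $J=j-k\to\infty$ with $m$ fixed (take $m=3$ and $J$ large to see this concretely). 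Since $B$ is a fixed constant, the claim that this contribution is ``a factor $\lesssim 1/B$ smaller'' is false, and the absorption fails. The paper avoids this precisely by \emph{not} inducting on $m$ and \emph{not} using a fixed gap: it iterates Corollary~\ref{rem:elliptic} $\lfloor m/2\rfloor$ times with the variable gap $\delta_\ell=\tilde{\epsilon}_\ell/4$ where $\tilde{\epsilon}_1=\epsilon(J+1/4)/(m-1+1/4)\ge\epsilon$, so the prefactor is $16K_3/\tilde{\epsilon}_\ell^2\le 16K_3/\epsilon^2\cdot\big((m-\tfrac34)/(J+\tfrac14)\big)^2$, supplying exactly the missing factor $(m/J)^2$ per step and making the geometric series $\sum_i(16K_3/B^2)^{i-1}$ converge with the correct ratio power. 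To repair your argument you would have to use the gap $\tilde{\epsilon}_1/4$ in the elliptic estimate; once you do, the ``one step plus IH'' structure no longer matches the lemma's parametrization in $(|\mu|,k,\epsilon)$, which is presumably why the paper iterates all the way down to order $\le 2$ and finishes with Lemma~\ref{lemanR} rather than re-invoking the lemma recursively.
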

\begin{pf} 
If \(m:=|\mu|\le 2\), \eqref{eq:der-U(p,q)-bd} follows from
Lemma~\ref{lemanR} and the definition of \(C_{3}\) in
\eqref{eq:const-lem7}, since \(\epsilon(j-k+1/4)\le \epsilon(j+1)\le
R/2<1\), and \(C, B>1\) (see Remark~\ref{rem:constants}).

If $m:=|\mu|\geq 3$ then we write $\mu =
\mu_{m-2}+e_{\nu_1}+e_{\nu_2}$ with 
\(\nu_i\in\{1,2,3\}, i=1,2\), \(|\mu_{m-2}|=m-2\). 
Then by the definition of the $W^{2,3p/2}$-norm (recall
\eqref{def:Sob-norm}) we find that
\begin{align}\label{eq:first-iteration}\nonumber
  \| D^{\mu} U_{a,b}\|_{L^{3p/2}(\omega_{\epsilon(j-k+1/4)})} 
  &\leq 
  \| D^{\mu_{m-2}} U_{a,b} \|_{W^{2,3p/2}(\omega_{\epsilon(j-k+1/4)})} 
  \\&= \|D^{\mu_{m-2}} U_{a,b}
  \|_{W^{2,3p/2}(\omega_{\tilde{\epsilon}_{1}(m-1+1/4)})} ,  
\end{align}
with \(\tilde{\epsilon}_{1}\) such that
\begin{equation}\label{eq:eps-1}
  \tilde{\epsilon}_{1}(m-1+1/4)= \epsilon(j-k+1/4)\,.
\end{equation}
To estimate the norm in \eqref{eq:first-iteration} we will again use that 
$U_{a,b}$ satisfies \eqref{eq:Poisson}.
Applying $D^{\mu_{m-2}}$ to \eqref{eq:Poisson}
and using  the elliptic
{\it a priori} estimate in
Corollary~\ref{rem:elliptic}
(with \(r=r_1=R-\tilde{\epsilon}_{1}(m-1+1/4)\) and
\(\delta=\delta_1=\tilde{\epsilon}_{1}/4\); recall that
\(\omega_{\rho}=B_{R-\rho}(\bx_{0})\)) 
we get that
\begin{align}\label{eq:first-iterate}\nonumber
  \|D^{\mu} U_{a,b}\|_{L^{3p/2}(\omega_{\epsilon(j-k+1/4)})} 
  &\leq 4\pi K_{3} 
  \|D^{\mu_{m-2}}(\varphi_{a}\overline{\varphi_{b}})
  \|_{L^{3p/2}(\omega_{\tilde{\epsilon}_{1}(m-1)})} 
  \\&\qquad +
  \frac{16K_{3}}{\tilde{\epsilon}_{1}^{2}}  
  \|
  D^{\mu_{m-2}}U_{a,b}
  \|_{L^{3p/2}(\omega_{\tilde{\epsilon}_{1}(m-1)})}\,,
\end{align}
with \(K_{3}=K_{3}(p)\) the constant in \eqref{eq:elliptic-est-bis}. 
Notice that for this estimate we needed to enlarge the domain, taking
the ball with a radius $\tilde{\epsilon}_{1}/4 $ larger.  

We now iterate the procedure (on the second term on the right side of
\eqref{eq:first-iterate}), with 
\(\tilde{\epsilon}_{i}\)  
(\(i=2,\ldots, \lfloor \frac{m}{2} \rfloor\))
 such that
\begin{align}\label{eq:eps-i}
  \tilde{\epsilon}_{i}(m-2i+1+1/4)=\tilde{\epsilon}_{i-1}(m-2(i-1)+1)\,,   
\end{align}
and with \(r=r_{i}=R-\tilde{\epsilon}_{i}(m-2i+1+1/4)\) and 
\(\delta=\delta_{i}=\tilde{\epsilon}_{i}/4\).
Note that \eqref{eq:eps-1} and \eqref{eq:eps-i} imply that,
for \(i=2,\ldots, \lfloor \frac{m}{2} \rfloor\),
\begin{align}\label{eq:est-eps-i}
  &\tilde{\epsilon}_{i}\ge   \tilde{\epsilon}_{i-1}
  \ge\ldots\ge   \tilde{\epsilon}_{1}
  =\epsilon\,\frac{j-k+1/4}{m-1+1/4}\,,
\end{align}
and
\begin{align}\label{eq:formula-new-eps's}\nonumber
  \tilde{\epsilon}_{i}(m-2i+1)
  &\le\tilde{\epsilon}_{i-1}(m-2(i-1)+1)
  \\&\le\ldots\le
  \tilde{\epsilon}_{1}(m-1)\le\epsilon(j-k+1/4)\,. 
\end{align}
We get that 
(with \(\prod_{\ell=1}^{0}\equiv1\) and \(|\mu_{m-2i}|=m-2i\)), 
\begin{align}\label{eq:ugly}\nonumber
  \| D^{\mu} &U_{a,b}\|_{L^{3p/2}(\omega_{\epsilon(j-k+1/4)})} 
   \\&\leq  4\pi K_{3}
  \sum_{i=1}^{\lfloor \frac{m}{2} \rfloor} 
  \Big[
  \| D^{\mu_{m-2i}} (\varphi_{a}\overline{\varphi_{b}})
  \|_{L^{3p/2}(\omega_{\tilde{\epsilon}_{i}(m-2i+1)})}
  \prod_{\ell=1}^{i-1}
  \Big(\frac{16K_{3}}{\tilde{\epsilon}_{\ell}^2}\Big)  
  \Big]
  \nonumber
  \\
  &\quad
  +\Big[\,\prod_{\ell=1}^{\lfloor \frac{m}{2} \rfloor} 
   \frac{16K_{3}}{\tilde{\epsilon}_{\ell}^2}\Big]
  \| D^{\mu_{m-2\lfloor \frac{m}{2} \rfloor}}\,
   U_{a,b}
   \|_{L^{3p/2}(\omega_{\tilde{\epsilon}_{\lfloor\frac{m}{2}\rfloor}
   (m- 2\lfloor\frac{m}{2}\rfloor
   + 1)})}\,.  
\end{align}
Using \eqref{eq:est-eps-i}, and Lemma~\ref{prL3} for 
each \(i=1,\ldots, \lfloor \frac{m}{2} \rfloor\) fixed
(note that \(\tilde{\epsilon}_{i}(m-2i+1)\le R/2\) by
\eqref{eq:formula-new-eps's} since \(\epsilon(j+1)\le R/2\)) 
we get that
\begin{align}\label{eq:use-of-prL3}\nonumber
   \| &D^{\mu_{m-2i}} (\varphi_{a}\overline{\varphi_{b}})
  \|_{L^{3p/2}(\omega_{\tilde{\epsilon}_{i}(m-2i+1)})}
  \prod_{\ell=1}^{i-1}
  \Big(\frac{16K_{3}}{\tilde{\epsilon}_{\ell}^2}\Big)  
   \\&\le
  20K_{2}^2C^2\sqrt{m}\Big(\frac{B}{\epsilon}\Big)^{m+2\theta-2}
  \Big(\frac{m-1+1/4}{j-k+1/4}\Big)^{m+2\theta-2}
  \Big(\frac{16K_{3}}{B^2}\Big)^{i-1}\,,
\end{align}
with \(K_{2}\) from Corollary~\ref{cor:adams}, and
\(\theta=\theta(p)=2/p\). Here we also used that \(1+\sqrt{m-2i}\le
2\sqrt{m}\). 
Note that 
\(\sum_{i=1}^{\lfloor \frac{m}{2} \rfloor}(16K_{3}/B^2)^{i-1}<2\)
since  \(B^2>32K_{3}\) (see Remark~\ref{rem:constants}).
It follows that
\begin{align}\label{eq:est-firsty}
   \nonumber
   4\pi K_{3}
  &\sum_{i=1}^{\lfloor \frac{m}{2} \rfloor} 
  \Big[
  \| D^{\mu_{m-2i}} (\varphi_{a}\overline{\varphi_{b}})
  \|_{L^{3p/2}(\omega_{\tilde{\epsilon}_{i}(m-2i+1)})}
  \prod_{\ell=1}^{i-1}
  \Big(\frac{16K_{3}}{\tilde{\epsilon}_{\ell}^2}\Big)  
  \Big]
  \\&\le
  160\pi K_{2}^2K_{3}C^2\sqrt{m}\Big(\frac{B}{\epsilon}\Big)^{m+2\theta-2}
  \Big(\frac{m+1/4}{j-k+1/4}\Big)^{m+2\theta-2}\,.
\end{align}

We now estimate the last term in \eqref{eq:ugly}. Let
\(\delta=m-2\lfloor \frac{m}{2} \rfloor \in\{0,1\}\) (depending on
whether \(m\) is even or odd). Then, using  \eqref{eq:est-eps-i} and
Lemma~\ref{lemanR}, we get that 
\begin{align}\label{eq:sec-term}
    \nonumber
    \Big[\,\prod_{\ell=1}^{\lfloor \frac{m}{2} \rfloor} 
    &\frac{16K_{3}}{\tilde{\epsilon}_{\ell}^2}\Big]
   \| D^{\mu_{m-2\lfloor \frac{m}{2} \rfloor}}\,
    U_{a,b}
    \|_{L^{3p/2}(\omega_{\tilde{\epsilon}_{\lfloor\frac{m}{2}\rfloor}
    (m- 2\lfloor\frac{m}{2}\rfloor
    + 1)})} 
    \\\nonumber
    &\le 4\pi K_{3}(C^2+2C_{1}/R^2)
    \Big(\frac{\sqrt{16K_{3}}}{\epsilon}\Big)^{m}
    \Big(\frac{m-1+1/4}{j-k+1/4}\Big)^{m}\\\nonumber
    &\qquad\qquad\qquad\qquad\qquad\qquad\qquad\times
    \Big( \frac{\epsilon(j-k+1/4)}{m-1+1/4}\Big)^{\delta}
    \\&\le 4\pi K_{3}(1+2C_{1}/R^2)C^2
   \Big(\frac{\sqrt{B}}{\epsilon}\Big)^{m}
   \Big(\frac{m+1/4}{j-k+1/4}\Big)^{m}\,.
\end{align}
Here we also used that \(m\ge 3\) and \(K_{3}\ge1\) (See Corollary~\ref{rem:elliptic}), 
that \(C>1\) and 
\(B>16K_{3}\) (see Remark~\ref{rem:constants}), and that
\(\epsilon(j-k+1/4)\le1\). 

Combining \eqref{eq:ugly}, \eqref{eq:est-firsty}, and
\eqref{eq:sec-term} finishes the proof of \eqref{eq:der-U(p,q)-bd} in
the case \(m=|\mu|\ge3\).

This finishes the proof of Lemma~\ref{lemanR-bis}.
\end{pf}
\appendix
\section{Multiindices and Stirling's Formula}\label{Notation}
 \renewcommand{\theequation}{A.\arabic{equation}}
 \renewcommand{\thetheorem}{A.\arabic{theorem}}
  \setcounter{equation}{0}  
  \setcounter{theorem}{0}  
We denote \(\mathbb{N}_{0}=\mathbb{N}\cup\{0\}\).
For \(\sigma=(\sigma_1,\sigma_2,\sigma_3)\in\N_0^3\)
we let \(|\sigma|:=\sigma_1+\sigma_2+\sigma_3\), 
and
\begin{align}
  \label{eq:diff-op}
  D^{\sigma}:=D_1^{\sigma_1}D_2^{\sigma_2}D_3^{\sigma_3}\ ,
  \quad
  D_{\nu}:={}-{\rm i}\frac{\partial}{\partial x_{\nu}}=:{}-{\rm
    i}\,\partial_{\nu}
  \ ,\quad \nu=1,2,3\,.
\end{align}
This way, 
\begin{align*}
  \partial^{\sigma}:=\frac{\partial^{|\sigma|}}{\partial {\bf
      x}^{\sigma}}:=\frac{\p^{|\sigma|}}{\p x_1^{\sigma_1}
   x_2^{\sigma_2}x_3^{\sigma_3}}= 
  (-{\rm i})^{|\sigma|}D^{\sigma}\,.
\end{align*}
We let
\(\sigma!:=\sigma_1!\sigma_2!\sigma_3!\),
and, for \(n\in\N_{0}\), 
\begin{align}
  \label{eq:mult-two}
  \binom{n}{\sigma}:=\frac{n!}{\sigma!}
  =\frac{n!}{\sigma_1!\sigma_2!\sigma_3!}\,. 
\end{align}
With this notation we have the multinomial formula, for 
\({\bx}=(x_1,x_2,x_3)\in\R^{3}\) and \(n\in\N_{0}\),
\begin{align}\label{eq:multi-nom}
   (x_1+x_2+x_3)^n=\sum_{\mu\in\N_{0}^{3},
     |\mu|=n}\binom{n}{\mu}{\bx}^{\mu}\,. 
\end{align}
Here,
\(\bx^{\mu}:=x_{1}^{\mu_1}x_2^{\mu_2}x_3^{\mu_3}\). 
It follows that
\begin{align}\label{eq:bound-sigma-fak}
   |\sigma|!\le 3^{|\sigma|}\sigma! \ \text{ for all }\sigma\in\N_{0}^{3}\,,
\end{align}
since, using \eqref{eq:mult-two}, that \((1,1,1)^{\mu}=1\) for all
\(\mu\in\N_{0}^{3}\),  
and  \eqref{eq:multi-nom}, 
\begin{align*}
   \frac{|\sigma|!}{\sigma!}=\binom{|\sigma|}{\sigma}
   \le\sum_{\mu\in\N_{0}^{3}, |\mu|=|\sigma|}
   \binom{|\sigma|}{\mu}(1,1,1)^{\mu}=(1+1+1)^{|\sigma|}=3^{|\sigma|}\,.
\end{align*}
We also define
\begin{align}
  \label{eq:mult}
  \binom{\sigma}{\mu}:=\frac{\sigma!}{\mu!(\sigma-\mu)!}
\end{align}
for \(\sigma,\mu\in\N_0^3\) with \(\mu\le\sigma\), that is,
\(\mu_{\nu}\le\sigma_{\nu}\), \(\nu=1,2,3\).
Note that for all \(\sigma\in\N_0^3\) and \(k\in\N_{0}\) (see
\cite[Proposition 2.1]{Kato-paper}), 
\begin{align}
  \label{eq:multiNom}
  \sum_{\mu\le\sigma, |\mu|=k}\binom{\sigma}{\mu}=\binom{|\sigma|}{k}\,.
\end{align}
Finally, by 
\cite[6.1.38]{AbraSte}, 
we have the following
generalization of Stirling's Formula: For \(m\in\N\), 
\begin{equation}\label{eq:Abra-Ste-binom}
  m! = \sqrt{2 \pi} m^{m+\frac12} \exp(-m+\frac{\vartheta}{12m})\
  \mbox{ for some }\ \vartheta=\vartheta(m) \in (0,1)\,,
\end{equation}
and so for \(n,m\in\N\), \(m<n\), 
\begin{align}\label{eq:est-binom-intro}\nonumber
  \binom{n}{m}&=  \frac{1}{\sqrt{2 \pi}}
  \frac{n^{n+1/2}}{m^{m+1/2}
  (n-m)^{n-m+1/2}} \exp(\frac{\vartheta(n)}{12
  n}-\frac{\vartheta(m)}{12 m}
  -\frac{\vartheta(n-m)}{12 (n-m)})\\ 
  & \leq  \frac{{\rm e}^{1/12}}{\sqrt{2\pi}}\,
  \frac{n^{n+1/2}}{m^{m+1/2}
  (n-m)^{n-m+1/2}}\,. 
\end{align}
\section{Choice of the localization}\label{localization}
 \renewcommand{\theequation}{B.\arabic{equation}}
 \renewcommand{\thetheorem}{B.\arabic{theorem}}
 \setcounter{equation}{0}  
 \setcounter{theorem}{0}  

Recall that, for \(\bx_{0}\in\R^{3}\setminus\{0\}\) and
\(R=\min\{1,|\bx_{0}|/4\}\), we have defined 
 \(\omega=B_{R}(\bx_{0})\),
\(\omega_{\delta}=B_{R-\delta}(\bx_{0})\), and that \(\epsilon>0\) is
such that \(\epsilon(j+1)\le R/2\).
Also, recall (see \eqref{def:Phi}) that  we have chosen a function
$\Phi$ (depending on \(j\)) satisfying  
\begin{equation}\label{def:Phi-BIS}
  \Phi \in C^{\infty}_{0}(\omega_{\epsilon (j+3/4)})\,,\quad
  0\le\Phi\le1\,,\quad
  \mbox{
  with }\; \Phi \equiv 1 \; \mbox{ on }\; \omega_{\epsilon(j+1)}\,. 
\end{equation}

For $j\in\N$ we choose functions $\{\chi_{k}\}_{k=0}^{j}$, and
$\{\eta_{k}\}_{k=0}^{j}$ (all depending on \(j\)) with the following
properties (for an illustration, see figures 1 and 2).
The functions $\{\chi_{k}\}_{k=0}^{j}$ are such that
\begin{align*}
  \chi_{0} &\in C^{\infty}_{0}(\omega_{\epsilon (j+1/4)})  
  \ \;\mbox{ with }\;  \ \,
  \chi_{0} \equiv 1 
  \; \ \ \,\mbox{ on }\ 
  \omega_{\epsilon(j+1/2)}\,,
  \intertext{and, for $k= 1, \dots, j$,}
  \chi_{k} \in &C^{\infty}_{0}(\omega_{\epsilon (j-k+1/4)})  \;
  \\&\mbox{ with }\;   
  \left\{\begin{array}{ll} 
  \chi_{k} \equiv 1 \; & \mbox{ on }\ \;
  \omega_{\epsilon(j-k+1/2)}\setminus
  \omega_{\epsilon(j-k+1+1/4)}\,,\\ 
  \chi_{k} \equiv 0 \; & \mbox{ on }\  \;\R^3
  \setminus ( \omega_{\epsilon(j-k+1/4)}\setminus
  \omega_{\epsilon(j-k+1+1/2)})\,, 
\end{array}\right.
\end{align*}
Finally, the functions $\{\eta_{k}\}_{k=0}^{j}$
are such that for $k= 0, \dots, j$, 
\begin{eqnarray*}
  \eta_{k} \in C^{\infty}(\R^3) & \; \mbox{ with }\; & 
  \left\{\begin{array}{ll} 
  \eta_{k} \equiv 1 
  \; & \mbox{ on }\ \;
  \R^3 \setminus
  \omega_{\epsilon(j-k+1/4)}\,,\\ 
  \eta_{k} \equiv 0 
  \; & \mbox{ on }\ \;
  \omega_{\epsilon(j-k+1/2)}\,. 
  \end{array}\right.
\end{eqnarray*}
Moreover we ask that 
\begin{equation}\label{chieta}
\begin{array}{lll}
  \chi_{0}+\eta_{0} \equiv 1 
  & \mbox{ on }\ \; \R^3,\\ 
  \chi_{k}+\eta_{k} \equiv 1 
  & \mbox{ on }\ \; \R^3 \setminus
  \omega_{\epsilon(j-k+1+1/4)}\ \mbox{ for }\ k=1,\dots,j\,,\\ 
  \eta_{k} \equiv \chi_{k+1}+\eta_{k+1} 
  & \mbox{ on }\ \;  \R^3 \ \mbox{
    for }\ k=0,\ldots, j-1\,. 
\end{array}
\end{equation}
Furthermore, we choose  these localization functions such that, for
a constant $C_{*}>0$ (independent of \(\epsilon, k, j, \beta\)) and
for all 
$\beta \in 
\N_{0}^3$ with $|\beta|=1$, we have that
\begin{equation}\label{eq:est-der-loc}
   |D^{\beta}\chi_{k}(\bx)| \leq \frac{C_{*}}{\epsilon}\, \
  \mbox{ and } \ \,  |D^{\beta}\eta_{k}(\bx)| \leq
  \frac{C_{*}}{\epsilon}\,, 
\end{equation}
for $k=0, \dots,j$, and all \(\bx\in\R^3\).


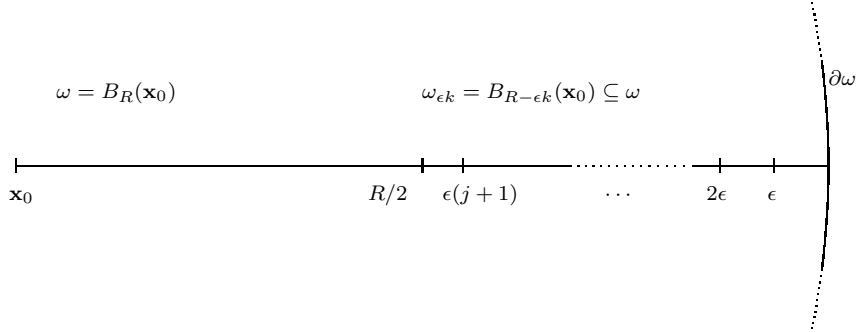
\begin{figure}[th]
\begin{center}
\setlength{\unitlength}{1.8cm}
{\tiny
\begin{picture}(6.4,2.4)(-.2,-1.2)
   \thinlines

\put(0,0){\line(1,0){4.1}}
\qbezier[15](4.1,0)(4.5,0)(5,0)
\put(5,0){\line(1,0){1}}

\put(5.2,-.05){\line(0,1){.1}}
\put(5.1,-.25){\mbox{$2 \epsilon$}}

\put(5.6,-.05){\line(0,1){.1}}
\put(5.55,-.25){\mbox{$\epsilon$}}

\put(0,-.05){\line(0,1){.1}}
\put(-.05,-.25){\mbox{$\bx_{0}$}}

\put(3,-.05){\line(0,1){.1}}
\put(2.6,-.25){\mbox{$R/2$}}

\put(3.3,-.05){\line(0,1){.1}}
\put(3.15,-.25){\mbox{$\epsilon(j+1)$}}

\put(4.35,-.25){\mbox{$\cdots$}}

\put(6,.6){\mbox{$\partial \omega$}}
\put(.3,.5){\mbox{$\omega=B_R(\bx_{0})$}}
\put(3,.5){\mbox{$\omega_{\epsilon k}=B_{R-\epsilon k}(\bx_{0})\subseteq\omega$}}

\qbezier(6,0)(6,0.38)(5.95,.77)
\qbezier(6,0)(6,-0.38)(5.95,-.77)
\qbezier[10](5.95,.77)(5.92,.999)(5.879,1.2)
\qbezier[10](5.95,-.77)(5.92,-.999)(5.879,-1.2)

\end{picture}
}
\caption{The geometry of \(\omega=B_R(\bx_{0})\) and the
  \(\omega_{\epsilon k}=B_{R-\epsilon k}(\bx_{0})\)'s.}

\end{center}
\end{figure}

\begin{figure}[ht]
\centering
\ersetze{a}{$\Phi$}
\ersetze{b}{$\chi_0$}
\ersetze{c}{$\chi_1$}
\ersetze{d}{$\chi_{j}$}
\ersetze{e}{$\epsilon(j+1)$}
\ersetze{f}{$\epsilon j$}
\ersetze{g}{$\epsilon(j-1)$}
\ersetze{h}{\ \ $\epsilon$}
\ersetze{i}{$\partial\omega$}
\ersetze{l}{$\eta_0$}
\ersetze{m}{$\eta_1$}
\ersetze{n}{$\eta_{j-1}$}
\ersetze{o}{$\eta_j$}
  \resizebox{12cm}{!}{\includegraphics{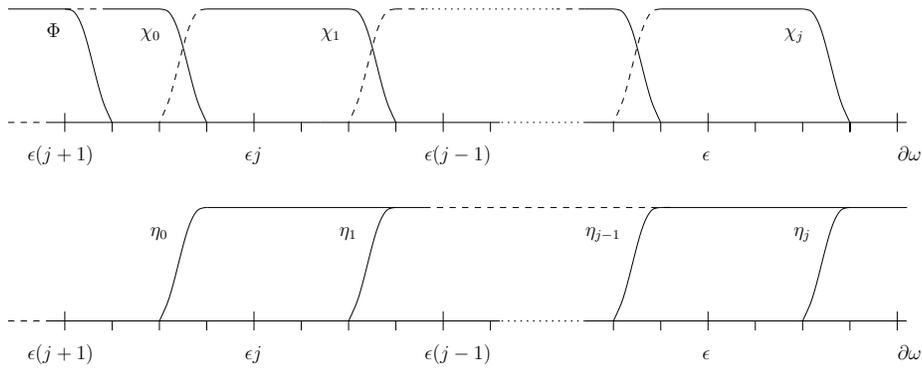}}
\caption{The localization functions.}
  \label{loc}
\end{figure} 

The next lemma shows how to use these localization functions.
\begin{lemma}\label{Edgardo}
For $j\in\N$ fixed, choose  functions $\{\chi_{k}\}_{k=0}^{j}$, and
$\{\eta_{k}\}_{k=0}^{j}$ as above, and
let $\sigma \in \N_{0}^3$ with
$|\sigma|=j$. For $\ell \in \N$ with $\ell \leq j$, choose multiindices
$\{\beta_{k}\}_{k=0}^{\ell}$ such that: 
\begin{equation*}
  |\beta_{k}|=k \mbox{ for }k=0, \dots, \ell\; ,\;  \beta_{k-1}<\beta_{k}
  \; \mbox{ for } \; k=1, \dots, \ell, \; \mbox{ and } \; \beta_{\ell}\leq
  \sigma\,. 
\end{equation*}

Then for all $g \in \mathcal{S}'(\R^3)$,
\begin{align}\label{eq:form-localization}
  D^{\sigma}g & =  \sum_{k=0}^{\ell} D^{\beta_{k}} \chi_{k}
  D^{\sigma-\beta_{k}} g \\ 
  & + \sum_{k=0}^{\ell-1} D^{\beta_{k}}[\eta_{k},D^{\mu_{k}}]
  D^{\sigma-\beta_{k+1}} g + D^{\beta_{\ell}}\eta_{\ell}
  D^{\sigma-\beta_{\ell}} g\,, 
  \nonumber
\end{align}
with $\mu_{k}=\beta_{k+1}-\beta_{k}$ for $k=0, \dots, \ell-1$ (hence,
\(|\mu_k|=1\)). 
\end{lemma}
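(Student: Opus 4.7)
The plan is to prove Lemma~\ref{Edgardo} by induction on $\ell$, exploiting the telescoping relation $\eta_k\equiv \chi_{k+1}+\eta_{k+1}$ in \eqref{chieta} and the elementary commutator identity $\eta D^{\mu} = D^{\mu}\eta + [\eta, D^{\mu}]$. All identities are interpreted in $\mathcal{S}'(\R^{3})$, which makes sense since the $\chi_{k}$ are in $C^{\infty}_{0}(\R^{3})$ and the $\eta_{k}$ are smooth and bounded with bounded derivatives of all orders, so multiplication by them preserves $\mathcal{S}'(\R^{3})$.

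For the base case $\ell=0$, the choice $\beta_{0}=0$ forces the second sum to be empty and \eqref{eq:form-localization} collapses to $D^{\sigma}g = \chi_{0}D^{\sigma}g + \eta_{0}D^{\sigma}g$, which is immediate from $\chi_{0}+\eta_{0}\equiv1$ on $\R^{3}$.

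For the inductive step, I assume \eqref{eq:form-localization} at level $\ell-1$ and transform the ``tail'' term $D^{\beta_{\ell-1}}\eta_{\ell-1}D^{\sigma-\beta_{\ell-1}}g$. Since $|\mu_{\ell-1}|=|\beta_{\ell}-\beta_{\ell-1}|=1$ and $\beta_{\ell-1}<\beta_{\ell}\leq\sigma$, I can write $D^{\sigma-\beta_{\ell-1}} = D^{\mu_{\ell-1}}D^{\sigma-\beta_{\ell}}$, so that
\begin{align*}
D^{\beta_{\ell-1}}\eta_{\ell-1}D^{\sigma-\beta_{\ell-1}}g
&= D^{\beta_{\ell-1}}\bigl(D^{\mu_{\ell-1}}\eta_{\ell-1}+[\eta_{\ell-1},D^{\mu_{\ell-1}}]\bigr)D^{\sigma-\beta_{\ell}}g \\
&= D^{\beta_{\ell}}\eta_{\ell-1}D^{\sigma-\beta_{\ell}}g + D^{\beta_{\ell-1}}[\eta_{\ell-1},D^{\mu_{\ell-1}}]D^{\sigma-\beta_{\ell}}g.
\end{align*}
Applying $\eta_{\ell-1}\equiv \chi_{\ell}+\eta_{\ell}$ (valid on all of $\R^{3}$) splits the first piece into $D^{\beta_{\ell}}\chi_{\ell}D^{\sigma-\beta_{\ell}}g + D^{\beta_{\ell}}\eta_{\ell}D^{\sigma-\beta_{\ell}}g$. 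Inserting this back into the level-$(\ell-1)$ identity extends the first sum from $k=\ell-1$ to $k=\ell$, extends the commutator sum from $k=\ell-2$ to $k=\ell-1$, and leaves precisely the new tail term $D^{\beta_{\ell}}\eta_{\ell}D^{\sigma-\beta_{\ell}}g$, which is exactly the formula \eqref{eq:form-localization} at level $\ell$.

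The only step that requires any care is verifying that the telescoping identity $\eta_{\ell-1}=\chi_{\ell}+\eta_{\ell}$ can be applied globally on $\R^{3}$ (not merely on some restricted set); this is precisely what is guaranteed by the third line of \eqref{chieta}. There are no real obstacles here: the argument is essentially combinatorial bookkeeping, and the hypothesis $\beta_{\ell}\leq \sigma$ ensures that all multiindex subtractions $\sigma-\beta_{k}$ appearing in the induction are nonnegative, so every differential operator $D^{\sigma-\beta_{k}}$ is well-defined.
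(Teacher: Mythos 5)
Your proposal is correct and follows essentially the same route as the paper's own proof: induction on $\ell$, transforming the tail term via the commutator identity $\eta D^{\mu}=D^{\mu}\eta+[\eta,D^{\mu}]$ and the global telescoping relation $\eta_{\ell-1}\equiv\chi_{\ell}+\eta_{\ell}$. The only cosmetic difference is that you anchor the induction at $\ell=0$ (a degenerate instance formally outside the stated range $\ell\in\N$, but harmless) whereas the paper proves $\ell=1$ directly as its base case.
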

\begin{proof}
We prove the lemma by induction on $\ell$ from $\ell=1$ to $\ell=j$.
We start by proving the claim for $\ell=1$. By using property
\eqref{chieta} of the localization functions and that
$\beta_1=\beta_{0}+\mu_{0}=\mu_{0}$ (since $\beta_{0}=0$) we find that
\begin{align}\label{e1}
  D^{\sigma}g
  =\chi_{0} D^{\sigma}g + \eta_{0}D^{\sigma}g
  =\chi_{0} D^{\sigma}g + \eta_{0}
  D^{\sigma-\beta_{1}+\mu_{0}}g\,.
\end{align}
The first term on the right side of \eqref{e1} is the term corresponding to
$k=0$ in the first sum in \eqref{eq:form-localization}. In the second term in
\eqref{e1}, commuting the derivative through \(\eta_0\),  we find that
\begin{align*}
  \eta_{0} D^{\sigma-\beta_{1}+\mu_{0}}g
  = D^{\mu_{0}}\eta_{0} D^{\sigma-\beta_{1}}g
  + [\eta_{0}, D^{\mu_{0}}]
  D^{\sigma-\beta_{1}}g\,.
\end{align*}
Since $\eta_{0}=\chi_{1}+\eta_{1}$ by property \eqref{chieta}, this
implies that
\begin{align}\label{e2}\nonumber
   \eta_{0} &D^{\sigma-\beta_{1}+\mu_{0}}g
   \\&=D^{\beta_{1}}\chi_{1} D^{\sigma-\beta_{1}}g
   +D^{\beta_{1}}\eta_{1} D^{\sigma-\beta_{1}}g
   + [\eta_{0},
   D^{\mu_{0}}] D^{\sigma-\beta_{1}}g\,.
\end{align}
The identity \eqref{eq:form-localization} for $\ell=1$ follows from
\eqref{e1} and \eqref{e2}. 

We now assume that \eqref{eq:form-localization} holds for $\ell-1$ for
some \(\ell\ge2\), i.e., 
\begin{align}\label{e3}
  D^{\sigma}g& = \sum_{k=0}^{\ell-1} D^{\beta_{k}}\chi_{k}
  D^{\sigma-\beta_{k}} g\\
   &{}\  + \sum_{k=0}^{\ell-2} D^{\beta_{k}}[\eta_{k},D^{\mu_{k}}]
  D^{\sigma-\beta_{k+1}} g +  D^{\beta_{\ell-1}}\eta_{\ell-1}
  D^{\sigma-\beta_{\ell-1}} g\,,\notag 
\end{align}
and prove it then holds for $\ell$. Since
$\beta_{\ell-1}=\beta_{\ell}-\mu_{\ell-1}$ we can rewrite the last term on the
right side of \eqref{e3} as  
\begin{eqnarray*}
  D^{\beta_{\ell-1}}\eta_{\ell-1} D^{\sigma-\beta_{\ell-1}} g =
  D^{\beta_{\ell-1}}\eta_{\ell-1} D^{\sigma-\beta_{\ell}+\mu_{\ell-1}}
  g\,.
\end{eqnarray*}
Again, commuting the $\mu_{\ell-1}$-derivative through
\(\eta_{\ell-1}\) this implies that
\begin{align}\notag
  D^{\beta_{\ell-1}}&\eta_{\ell-1} D^{\sigma-\beta_{\ell-1}} g
  \\\nonumber
  & =  D^{\beta_{\ell-1}+\mu_{\ell-1}}\eta_{\ell-1}
  D^{\sigma-\beta_{\ell}} g + D^{\beta_{\ell-1}}[\eta_{\ell-1},
  D^{\mu_{\ell-1}}] D^{\sigma-\beta_{\ell}} g \\ 
  & =  D^{\beta_{\ell}}(\eta_{\ell}+\chi_{\ell}) D^{\sigma-\beta_{\ell}} g +
  D^{\beta_{\ell-1}}[\eta_{\ell-1}, D^{\mu_{\ell-1}}]
  D^{\sigma-\beta_{\ell}} g\,,
  \label{e4} 
\end{align}
using \eqref{chieta}. 
Collecting together \eqref{e3} and \eqref{e4} proves that
\eqref{eq:form-localization} holds for \(\ell\). 

The claim of the lemma then follows by induction.
\end{proof}
\section{Norms of some operators on \(L^p(\R^3)\)}\label{app:smoothing}
\renewcommand{\theequation}{C.\arabic{equation}}
\renewcommand{\thetheorem}{C.\arabic{theorem}}
\setcounter{equation}{0}  
\setcounter{theorem}{0}  
In this section we prove two lemmas on bounds on certain operators
involving the 
operator \(E(\bp)=\sqrt{-\Delta+\alpha^{-2}}\).
\begin{lemma}\label{bounded-mult-op}
Let the operators \(S_{\nu}=E(\bp)^{-1}D_{\nu}\), \(\nu\in\{1,2,3\}\),
be defined for \(f\in\mathcal{S}(\R^3)\) by
\begin{align*}
  (S_{\nu}f)(\bx)=(2\pi)^{-3/2}\int_{\R^3}{\rm e}^{{\rm i}\bx\cdot\bp}
  E(\bp)^{-1}p_{\nu}\hat{f}(\bp)\,d\bp\,,
\end{align*}
with \(\hat{f}(\bp)=(2\pi)^{-3/2}\int_{\R^3}{\rm e}^{-{\rm
    i}\bx\cdot\bp} f(\bx)\,d\bx\) the Fourier transform of \(f\).
(Here, \(\bp=(p_1,p_2,p_3)\).)

Then, for all \(\mathfrak{p}\in(1,\infty)\), \(S_{\nu}\) extend to
bounded 
operators, \(S_{\nu}:L^{\mathfrak{p}}(\R^3)\to
L^{\mathfrak{p}}(\R^3)\), \(\nu\in\{1,2,3\}\). 
Clearly,
\(\|S_\nu\|_{\mathcal{B}_{\mathfrak{p}}}=\|S_\mu\|_{\mathcal{B}_{\mathfrak{p}}}\), 
\(\nu\ne\mu\). 
We let 
\begin{align}\label{norm-mult-op}
   K_{1}\equiv
   K_{1}(\mathfrak{p}):=
   \|S_1\|_{\mathcal{B}_{\mathfrak{p}}}\,.  
\end{align}
\begin{proof}
  This follows from \cite[Theorem 0.2.6]{Sogge} and the {\it Remarks}
  right after it. In fact, since (by induction),
\begin{align*}
  D_{\bp}^{\gamma}\big(p_{\nu}E(\bp)^{-1}\big)
  =P_{\gamma,\nu}(\bp) E(\bp)^{-1-2|\gamma|}\,,\quad \gamma\in\N_0^3\,,
\end{align*}
for some polynomials \(P_{\gamma,\nu}\) of degree \(|\gamma|+1\),
the functions \(m_{\nu}(\bp)=p_{\nu}E(\bp)^{-1}\) are smooth and satisfy the
estimates
\begin{align*}
  |D_{\bp}^{\gamma}m_{\nu}(\bp)|\le C_{\gamma,\nu}|\bp|^{-|\gamma|}\,,
  \quad \gamma\in\N_0^3\,,
\end{align*}
for some constants \(C_{\gamma,\nu}>0\), which is what is needed in
the reference above. 
\end{proof}

\end{lemma}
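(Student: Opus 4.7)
The plan is to recognize that \(S_\nu\) is the Fourier multiplier operator with symbol \(m_\nu(\bp)=p_\nu/\sqrt{|\bp|^2+\alpha^{-2}}\), which away from the origin behaves like the Riesz multiplier \(p_\nu/|\bp|\). I would therefore attempt to apply a Hörmander--Mihlin type multiplier theorem (or directly the cited \cite[Theorem~0.2.6]{Sogge}) to conclude boundedness on \(L^{\mathfrak{p}}(\R^3)\) for every \(\mathfrak{p}\in(1,\infty)\). The hypothesis to verify is the scaling estimate \(|D_{\bp}^{\gamma}m_\nu(\bp)|\le C_\gamma|\bp|^{-|\gamma|}\) for each \(\gamma\in\N_{0}^3\) (or, in the refined Mihlin formulation, for \(|\gamma|\le 2\)).

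The verification of these symbol bounds is the main computational step, and I would carry it out by induction on \(|\gamma|\). Using \(D_{p_j}E(\bp)=p_j E(\bp)^{-1}\) and \(D_{p_j}E(\bp)^{-1}=-p_j E(\bp)^{-3}\), each differentiation of a power \(E(\bp)^{-k}\) multiplies by a linear factor in \(\bp\) and lowers the exponent by \(2\). Iterating Leibniz's rule on \(p_\nu E(\bp)^{-1}\) then yields the structural identity
\[D_{\bp}^{\gamma}\big(p_\nu E(\bp)^{-1}\big) = P_{\gamma,\nu}(\bp)\,E(\bp)^{-1-2|\gamma|},\]
where \(P_{\gamma,\nu}\) is a polynomial of degree at most \(|\gamma|+1\); the inductive step is a direct application of the product rule, keeping track of how the degree and the exponent on \(E(\bp)\) evolve.

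Combining the elementary lower bound \(E(\bp)\ge|\bp|\) with the pointwise estimate \(|P_{\gamma,\nu}(\bp)|\lesssim (1+|\bp|)^{|\gamma|+1}\) then yields \(|D_{\bp}^{\gamma}m_\nu(\bp)|\le C_{\gamma,\nu}|\bp|^{-|\gamma|}\) on \(\R^3\setminus\{0\}\); for \(|\bp|\le 1\) one also uses that \(E(\bp)\) is bounded above and below by positive constants, so the mass term only improves the symbol there. With these bounds in hand, the cited multiplier theorem immediately gives that \(S_\nu\) extends to a bounded operator on \(L^{\mathfrak{p}}(\R^3)\) for all \(\mathfrak{p}\in(1,\infty)\). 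The equality \(\|S_\nu\|_{\mathcal{B}_{\mathfrak{p}}}=\|S_\mu\|_{\mathcal{B}_{\mathfrak{p}}}\) follows by rotational invariance of both \(E(\bp)\) and the \(L^{\mathfrak{p}}\) norm, via conjugation by a coordinate rotation that permutes the axes.

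The main ``obstacle'' is purely bookkeeping: tracking the polynomial degree and the exponent on \(E(\bp)\) cleanly through the induction so that the Mihlin-type hypothesis closes with the correct power of \(|\bp|\). No genuine harmonic-analytic difficulty appears, since the mass term \(\alpha^{-2}\) only regularizes the low-frequency behavior of the Riesz symbol and is invisible at high frequencies where Calderón--Zygmund theory is decisive.
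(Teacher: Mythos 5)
Your proposal follows essentially the same route as the paper: identify \(m_\nu(\bp)=p_\nu E(\bp)^{-1}\) as the symbol, prove by induction the structural identity \(D_{\bp}^{\gamma}(p_\nu E(\bp)^{-1})=P_{\gamma,\nu}(\bp)E(\bp)^{-1-2|\gamma|}\) with \(\deg P_{\gamma,\nu}\le|\gamma|+1\), deduce the Mihlin-type symbol bounds \(|D_{\bp}^{\gamma}m_\nu(\bp)|\lesssim|\bp|^{-|\gamma|}\), and invoke the multiplier theorem in \cite[Theorem 0.2.6]{Sogge}. You spell out a couple of steps the paper leaves implicit (the deduction of the symbol bound from the degree count and \(E(\bp)\ge|\bp|\), and the rotational-invariance argument for the norm equality), but the underlying argument is the same.
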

For \(\mathfrak{p},\mathfrak{q}\in[1,\infty]\), denote by
\(\|\cdot\|_{\mathcal{B}_{\mathfrak{p},\mathfrak{q}}}\) the 
operator norm on bounded operators from \(L^{\mathfrak{p}}(\R^{3})\)
to \(L^{\mathfrak{q}}(\R^{3})\). 

\begin{lemma}\label{normsmooth-Lp} 
For all \(\mathfrak{p},\mathfrak{r}\in[1,\infty)\), 
\(\mathfrak{q}\in(1,\infty)\), 
 with
\(\mathfrak{p}^{-1}+\mathfrak{q}^{-1}+\mathfrak{r}^{-1}=2\),
all
\(\alpha>0\), all \(\beta\in\N_0^3\) 
(with  \(|\beta|>1\) 
if \(\mathfrak{r}=1\)), and all
$\Phi, \chi \in C^{\infty}(\R^3)\cap
L^{\infty}(\R^3)$ with  
\begin{align}\label{eq:dist-supports} 
  \dist(\supp(\chi), 
  \supp(\Phi)) \geq d\,,
\end{align}
the operator \(\Phi E(\bp)^{-1}D^{\beta}\chi\) is bounded from
\(L^{\mathfrak{p}}(\R^3)\) to 
\((L^{\mathfrak{q}}(\R^3))'=
L^{\mathfrak{q}^{*}}(\R^3)\) (with
\(\mathfrak{q}^{-1}+{\mathfrak{q}^{*}}^{-1}=1\)), and  
\begin{align}\label{eq:smoothing-est-Lp}
  \|\Phi E(\bp)^{-1}&D^{\beta}\chi\|_{\mathcal{B}_{\mathfrak{p},\mathfrak{q}^{*}}}
  \\&
 \le 
  \frac{4\sqrt{2}}{\pi}
   \beta!\Big(\frac{8}{d}\Big)^{|\beta|}d^{3/\mathfrak{r}-2}
  \big(\mathfrak{r}(|\beta|+2)-3\big)^{-1/\mathfrak{r}}
  \|\Phi\|_{\infty}
  \|\chi\|_{\infty}\,.\nonumber
\end{align}

In particular, (when \(\mathfrak{r}=1\), i.e.,
\(\mathfrak{q}^{*}=\mathfrak{p}\)), 
\begin{align}\label{eq:smoothing-est-Lp-bis}
  \|\Phi E(\bp)^{-1}D^{\beta}\chi\|_{\mathcal{B}_{\mathfrak{p}}}
  \le 
  \frac{32\sqrt{2}}{\pi}
  \frac{\beta!}{|\beta|-1}\Big(\frac{8}{d}\Big)^{|\beta|-1}
  \|\Phi\|_{\infty}
  \|\chi\|_{\infty}\,,
\end{align}
for all \(\beta\in\N_0^3\) with  \(|\beta|>1\).
\end{lemma}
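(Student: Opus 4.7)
The plan is to realize $\Phi E(\bp)^{-1}D^{\beta}\chi$ as an integral operator and reduce the bound to an $L^{\mathfrak r}$-estimate of derivatives of its convolution kernel, via Young's inequality.

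First, I would write $E(\bp)^{-1}$ as convolution with its fundamental solution $G(\bx) = \frac{\alpha^{-1}}{2\pi^{2}|\bx|}K_{1}(|\bx|/\alpha)$, where $K_{1}$ is the modified Bessel function of the second kind. For $f\in\mathcal{S}(\R^{3})$, integration by parts moves the $|\beta|$ derivatives onto the kernel:
\begin{equation*}
(\Phi E(\bp)^{-1}D^{\beta}\chi f)(\bx)=(-i)^{|\beta|}\Phi(\bx)\int_{\R^{3}}(\partial^{\beta}G)(\bx-\by)\chi(\by)f(\by)\,d\by.
\end{equation*}
The support hypothesis \eqref{eq:dist-supports} forces $|\bx-\by|\ge d$ in the integrand, so $\partial^{\beta}G$ may be replaced by $\partial^{\beta}G\cdot\1_{\{|\cdot|\ge d\}}$ without changing the value. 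The hypothesis $\mathfrak p^{-1}+\mathfrak q^{-1}+\mathfrak r^{-1}=2$ is precisely the relation $1/\mathfrak p+1/\mathfrak r=1+1/\mathfrak q^{*}$ required for Young's convolution inequality, and pulling out $\|\Phi\|_{\infty}$ and $\|\chi\|_{\infty}$ gives
\begin{equation*}
\|\Phi E(\bp)^{-1}D^{\beta}\chi\|_{\mathcal{B}_{\mathfrak p,\mathfrak q^{*}}}\le \|\Phi\|_{\infty}\|\chi\|_{\infty}\,\|\partial^{\beta}G\|_{L^{\mathfrak r}(\{|\bz|\ge d\})}.
\end{equation*}

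Second, I would establish the pointwise estimate $|\partial^{\beta}G(\bx)|\le C\,\beta!\,(8/|\bx|)^{|\beta|}|\bx|^{-2}$ for all $\bx\ne 0$, with an absolute constant $C$. The point is that $G$ extends holomorphically to the tube $\{\bz=\bx+i\by\in\C^{3}:|\by|<|\bx|\}$: there $\mathrm{Re}(\sum_{i}z_{i}^{2})=|\bx|^{2}-|\by|^{2}>0$, so the principal square root of $\sum_{i}z_{i}^{2}$ is well defined and $w\mapsto K_{1}(w/\alpha)/w$ is analytic on its image. For any $\bx_{0}\ne 0$, the polydisc $\prod_{i=1}^{3}\{|\zeta_{i}-x_{0,i}|<|\bx_{0}|/8\}$ lies strictly inside this tube (a direct estimate gives $\mathrm{Re}(\sum_{i}z_{i}^{2})>\tfrac{1}{2}|\bx_{0}|^{2}$ and $|\sum_{i}z_{i}^{2}|<\tfrac{3}{2}|\bx_{0}|^{2}$ on it), and on the polydisc one gets $|G(\bz)|\le C/|\bx_{0}|^{2}$ from the small-argument asymptotic $K_{1}(w)\sim 1/w$ together with this control on $\sum z_{i}^{2}$. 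The polydisc Cauchy integral formula then delivers the pointwise bound, with the $\beta!$ (rather than $|\beta|!$) growth a direct consequence of using the multi-dimensional complex form of the formula.

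Third, I would integrate this pointwise bound in spherical coordinates on $\{|\bz|\ge d\}$:
\begin{equation*}
\|\partial^{\beta}G\|_{L^{\mathfrak r}(\{|\bz|\ge d\})}^{\mathfrak r}\le C^{\mathfrak r}(\beta!)^{\mathfrak r}8^{|\beta|\mathfrak r}\cdot 4\pi\int_{d}^{\infty}r^{2-\mathfrak r(|\beta|+2)}\,dr=\frac{4\pi C^{\mathfrak r}(\beta!)^{\mathfrak r}8^{|\beta|\mathfrak r}d^{3-\mathfrak r(|\beta|+2)}}{\mathfrak r(|\beta|+2)-3},
\end{equation*}
where convergence of the radial integral requires $\mathfrak r(|\beta|+2)>3$---which is exactly the hypothesis ``$\mathfrak r>1$, or $\mathfrak r=1$ with $|\beta|>1$'' in the lemma. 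Taking $\mathfrak r$-th roots and combining with the first step produces \eqref{eq:smoothing-est-Lp} once one absorbs $C(4\pi)^{1/\mathfrak r}\le 4\sqrt 2/\pi$ into the prefactor (feasible since $(4\pi)^{1/\mathfrak r}\le 4\pi$ and $C\le 1/(2\pi^{2})$). The special case \eqref{eq:smoothing-est-Lp-bis} is a direct rewriting of the general bound when $\mathfrak r=1$, $\mathfrak q^{*}=\mathfrak p$, in which $d^{3/\mathfrak r-2}=d$ combines with $(|\beta|+2-3)^{-1}=(|\beta|-1)^{-1}$ and one extra factor of $8$ is peeled off.

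The main obstacle lies in the second step: verifying that the polydisc of radius $|\bx_{0}|/8$ really fits inside the holomorphic domain of $G$, and then extracting a uniform sup bound on $|G|$ with a constant clean enough to feed through into the advertised prefactor $4\sqrt 2/\pi$. Both rest on a careful analysis of the explicit Bessel-function formula for $G$ and on quantitative estimates of $\mathrm{Re}(\sum z_{i}^{2})$ and $|\sum z_{i}^{2}|$ on the polydisc. Once this is in hand, everything else---integration by parts, Young's inequality, and polar integration---is routine.
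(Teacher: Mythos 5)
Your overall strategy coincides with the paper's: realize $\Phi E(\bp)^{-1}D^{\beta}\chi$ as an integral operator, integrate the $\beta$ derivatives by parts onto the kernel, use the support gap to restrict to $|\bz|\ge d$, apply Young's inequality, and prove the pointwise kernel bound via Cauchy's inequalities on the polydisc $P^3_{|\bx|/8}(\bx)$, where the key estimate ${\rm Re}(\bz^2)>\frac12|\bx|^2$ holds. The difference is in the choice of kernel representation, and this is where your sketch has a genuine gap.

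The paper does not work with the Bessel kernel $G$ directly. It instead uses the resolvent formula $x^{-1/2}=\frac{1}{\pi}\int_0^\infty (x+t)^{-1}t^{-1/2}\,dt$, writes the bilinear form as a $t$-integral of Yukawa kernels $\frac{e^{-\sqrt{\alpha^{-2}+t}\,|\bz|}}{4\pi|\bz|}$, and proves the polydisc estimate for $\partial_\bz^\beta\frac{e^{-s|\bz|}}{|\bz|}$ keeping the decay factor $e^{-s|\bz|/2}$ (Lemma~\ref{estkernel-BIS}). The crucial advantage is that the $\alpha$-dependence enters only through $s=\sqrt{\alpha^{-2}+t}\ge\sqrt{t}$, so uniformity in $\alpha$ is immediate, and the remaining $t$-integral $\int_0^\infty e^{-\sqrt t\,|\bz|/2}\,\frac{dt}{\sqrt t}=\frac{4}{|\bz|}$ is explicit, producing the sharp prefactor $\sqrt2/\pi^2$ for the kernel bound and hence $4\sqrt2/\pi$ in the lemma.

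Your second step, by contrast, justifies the $\alpha$-uniform bound $|G(\bz)|\le C|\bx_0|^{-2}$ on the polydisc via ``the small-argument asymptotic $K_1(w)\sim 1/w$''. This is insufficient: the argument $\sqrt{\bz^2}/\alpha$ is small only when $|\bx_0|\lesssim\alpha$. When $|\bx_0|\gg\alpha$ the relevant asymptotic is $K_1(w)\sim\sqrt{\pi/(2w)}\,e^{-w}$, and bounding $|G(\bz)|\le C|\bx_0|^{-2}$ there requires a separate computation combining ${\rm Re}(\sqrt{\bz^2})\ge|\bx_0|/2$ with the boundedness of $u\mapsto\sqrt u\,e^{-u/2}$ for $u=|\bx_0|/\alpha$. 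This does work out, but it is not what you wrote, and tracking the precise constant through the two regimes of $K_1$ (small vs.\ large argument) in order to recover exactly $4\sqrt2/\pi$ is substantially more delicate than you indicate. You flag this as ``the main obstacle'' but the specific argument offered does not overcome it. Replacing the direct Bessel kernel by the paper's $t$-integral of Yukawa kernels eliminates the two-regime case analysis and yields the stated constant directly.
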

\begin{proof}
We use duality. Let \(f,g\in\mathcal{S}(\R^{3})\).
Note that, since \(\Phi f,D^{\beta}(\chi g)\in L^2(\R^3)\), the
spectral theorem, and the 
formula 
\begin{equation}\label{eq:formula x minus 1/2}
  \frac{1}{\sqrt{x}} = \frac{1}{\pi} \int_{0}^{\infty}
  \frac{1}{x+t}\,\frac{dt}{\sqrt{t}}\,,\ x>0\,,  
\end{equation}
imply that
\begin{equation*}
  (f,\Phi E(\bp)^{-1}D^{\beta} \chi  g ) = \frac{1}{\pi} \int_{0}^{\infty}
  \frac{dt}{\sqrt{t}} \,(f,\Phi
  (-\Delta+\alpha^{-2}+t)^{-1}D^{\beta} \chi g  )\,. 
\end{equation*}
By using the formula for the kernel of the operator
$(-\Delta+\alpha^{-2}+t)^{-1}$ \cite[(IX.30)]{RS2}, and
integrating by parts, we get that 
\begin{align*}
  &(f,\Phi E(\bp)^{-1}D^{\beta} \chi  g ) \\
  &= \frac{1}{\pi} \int_{0}^{\infty} \frac{dt}{\sqrt{t}} \int_{\R^3}
  \overline{f(\bx)}\Phi(\bx) \int_{\R^3}
  \frac{{\rm e}^{-\sqrt{\alpha^{-2}+t}\,|\bx-\by|}}{4\pi|\bx-\by|}
  \,[D^{\beta}(\chi  g)](\by) \,d\bx d\by\\ 
  &=  \frac{(-1)^{|\beta|}}{\pi}  \int_{0}^{\infty} \frac{dt}{\sqrt{t}}
  \int_{\R^3} \overline{f(\bx)}\Phi(\bx) 
  \int_{\R^3} 
  \Big(D^{\beta}_{\by}\, 
  \frac{{\rm e}^{-\sqrt{\alpha^{-2}+t}\,|\bx-\by|}}{4\pi|\bx-\by|} \Big)
  \chi(\by)  g(\by) \,d\bx d\by\,. 
\end{align*}
Notice that the integrand is different from zero only for
$|\bx-\by|\ge d$, due to the assumption \eqref{eq:dist-supports}.
Hence, by Fubini's theorem,
\begin{align}\label{eq:this}
  (f,\Phi E(\bp)^{-1}D^{\beta} \chi  g )=
  \int_{\R^3}\int_{\R^3}F(\bx)H(\bx-\by)G(\by)\,d\bx d\by\,,
\end{align}
with \(F(\bx)=\overline{f(\bx)}\Phi(\bx)\), \(G(\by)=\chi(\by)g(\by)\),
and
\begin{align}\label{eq:formula-H}\nonumber
  H(\bz)&\equiv H_{\alpha,\beta,d}(\bz)\\
  &={\1}_{\{|\,\cdot\,|\ge d\}}(\bz)\frac{(-1)^{|\beta|}}{\pi}\int_{0}^{\infty}
   \Big(D^{\beta}_{\bz}\,
  \frac{{\rm e}^{-\sqrt{\alpha^{-2}+t}\,|\bz|}}{4\pi|\bz|}\Big)
  \frac{dt}{\sqrt{t}} \,.
\end{align}
Now, by \eqref{eq:est-der-coul-bis} in Lemma~\ref{estkernel-BIS}
below, uniformly for \(\alpha>0\),  
\begin{align*}
  |H(\bz)|&\le {\1}_{\{|\,\cdot\,|\ge d\}}(\bz)
   \frac{\sqrt{2}}{4\pi^2}\frac{\beta!}{|\bz|}
   \Big(\frac{8}{|\bz|}\Big)^{|\beta|}
   \int_{0}^{\infty}{\rm e}^{-\sqrt{t}|\bz|/2}\,\frac{dt}{\sqrt{t}}
   \\&={\1}_{\{|\,\cdot\,|\ge d\}}(\bz)
   \frac{\sqrt{2}}{\pi^2}\frac{\beta!}{|\bz|^2}\Big(\frac{8}{|\bz|}\Big)^{|\beta|}\,,
\end{align*}
and so, for all \(\alpha>0\), \(\mathfrak{r}\in[1,\infty)\),  and all
\(\beta\in\N_{0}^{3}\) (with \(|\beta|>1\) if \(\mathfrak{r}=1\)), 
\begin{align*}
  \|H\|_{\mathfrak{r}}&\le
  (4\pi)^{1/\mathfrak{r}}\frac{\sqrt{2}}{\pi^2}\beta!\,8^{|\beta|}\Big(\int_{d}^{\infty}
  \big(|\bz|^{-|\beta|-2}\big)^r|\bz|^2\,d|\bz|\Big)^{1/\mathfrak{r}}
  \\&=(4\pi)^{1/\mathfrak{r}}\frac{\sqrt2}{\pi^2}
  \beta!\Big(\frac{8}{d}\Big)^{|\beta|}d^{3/\mathfrak{r}-2}
  \big(\mathfrak{r}(|\beta|+2)-3\big)^{-1/\mathfrak{r}}\,.
\end{align*}
From this, \eqref{eq:this}, and Young's inequality
\cite[Theorem~4.2]{LiebLoss} (notice that \(C_{Y}\le 1\)), 
follows that, with
\(\mathfrak{p},\mathfrak{q},\mathfrak{r}\in[1,\infty)\), 
\(\mathfrak{p}^{-1}+\mathfrak{q}^{-1}+\mathfrak{r}^{-1}=2\), 
\begin{align*}
  |(f,\Phi &E(\bp)^{-1}D^{\beta} \chi  g )|\le
   \|F\|_{\mathfrak{q}}\|H\|_{\mathfrak{r}}\|G\|_{\mathfrak{p}}\\&\le
   (4\pi)^{1/\mathfrak{r}}\frac{\sqrt2}{\pi^2}
  \beta!\Big(\frac{8}{d}\Big)^{|\beta|}d^{3/\mathfrak{r}-2}
  \big(\mathfrak{r}(|\beta|+2)-3\big)^{-1/\mathfrak{r}}
  \|F\|_{\mathfrak{q}}\|G\|_{\mathfrak{p}}
  \\&\le
  \frac{4\sqrt2}{\pi}\beta!\Big(\frac{8}{d}\Big)^{|\beta|}d^{3/\mathfrak{r}-2}
  \big(\mathfrak{r}(|\beta|+2)-3\big)^{-1/\mathfrak{r}}
  \|\Phi\|_{\infty}
  \|\chi\|_{\infty}\|f\|_{\mathfrak{q}}\|g\|_{\mathfrak{p}}\,.
\end{align*}
Since \(\mathcal{S}(\R^{3})\) is dense in both
\(L^{\mathfrak{p}}(\R^3)\) and \(L^{\mathfrak{q}^{*}}(\R^3)\), 
this finishes the proof of the lemma.
\end{proof}
\begin{lemma}\label{estkernel-BIS} 
  For all \(s>0\), \(\bx\in
  \R^3\setminus\{0\}\), and \(\beta\in\N_{0}^{3}\), 
  \begin{align}\label{eq:est-der-coul}
  \Big|\partial_{\bx}^{\beta}\frac{1}{|\bx|}\Big|
  &\le
  \frac{\sqrt{2}\beta!}{|\bx|}\Big(\frac{8}{|\bx|}\Big)^{|\beta|}\,, 
  \\
  \label{eq:est-der-coul-bis}
  \Big|\partial_{\bx}^{\beta}\frac{{\rm e}^{-s|\bx|}}{|\bx|}\Big| 
  &\le \frac{\sqrt{2} \beta !}{|\bx|}
  \Big(\frac{8}{|\bx|}\Big)^{|\beta|}{\rm e}^{-s|\bx|/2}\,. 
\end{align}
\end{lemma}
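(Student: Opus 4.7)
\noindent\textit{Proof proposal for Lemma~\ref{estkernel-BIS}.}

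The plan is to extend both functions holomorphically to a suitable polydisk around \(\bx\in\R^3\setminus\{0\}\) and then apply the multivariable Cauchy estimates. Concretely, for \(\bx\ne0\) fixed, I would consider the polydisk
\[
  P_r(\bx)=\{\bz=(z_1,z_2,z_3)\in\C^3\,|\,|z_j-x_j|<r\ \text{for}\ j=1,2,3\}
\]
with the specific choice \(r=|\bx|/8\), and extend \(1/|\bx|\) to the holomorphic function \(F(\bz)=(Q(\bz))^{-1/2}\), where \(Q(\bz):=z_1^2+z_2^2+z_3^2\) and the principal branch of the square root is used.

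The first step is to show that \(P_{|\bx|/8}(\bx)\) lies in the domain of holomorphy of \(F\) and that \(|F(\bz)|\le \sqrt{2}/|\bx|\) there. Writing \(\bz=\bx+\bw\) with \(|w_j|\le r=|\bx|/8\), one has
\[
  Q(\bz)=|\bx|^2+2\bx\cdot\bw+\bw^2\,,
\]
and a direct application of Cauchy--Schwarz (over \(\C\)) gives \(|2\bx\cdot\bw+\bw^2|\le 2\sqrt{3}|\bx|r+3r^2=\tfrac{16\sqrt3+3}{64}|\bx|^2<\tfrac12|\bx|^2\). Hence \(\mathrm{Re}\,Q(\bz)\ge\tfrac{1}{2}|\bx|^2\) and \(|Q(\bz)|\ge\tfrac{1}{2}|\bx|^2\), so the principal branch of \(\sqrt{\cdot}\) is holomorphic at \(Q(\bz)\), and \(|F(\bz)|=|Q(\bz)|^{-1/2}\le\sqrt{2}/|\bx|\). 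The Cauchy integral formula in the polydisk then yields
\[
  |D^{\beta}F(\bx)|\le \frac{\beta!}{r^{|\beta|}}\sup_{P_r(\bx)}|F|
  \le \frac{\sqrt{2}\,\beta!}{|\bx|}\Big(\frac{8}{|\bx|}\Big)^{|\beta|}\,,
\]
which is precisely \eqref{eq:est-der-coul}.

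For \eqref{eq:est-der-coul-bis} I would proceed identically with the holomorphic extension \(G(\bz)=e^{-s\sqrt{Q(\bz)}}/\sqrt{Q(\bz)}\). The only additional ingredient needed is the bound \(\mathrm{Re}\sqrt{Q(\bz)}\ge |\bx|/2\) on \(P_r(\bx)\). This follows from the elementary identity \(\mathrm{Re}\sqrt{\zeta}=\sqrt{(|\zeta|+\mathrm{Re}\,\zeta)/2}\) valid on the principal branch, applied to \(\zeta=Q(\bz)\) with the estimates \(\mathrm{Re}\,Q(\bz)\ge\tfrac12|\bx|^2\) and \(|Q(\bz)|\ge \mathrm{Re}\,Q(\bz)\) already established; one gets \(\mathrm{Re}\sqrt{Q(\bz)}\ge\sqrt{\mathrm{Re}\,Q(\bz)}\ge|\bx|/\sqrt{2}>|\bx|/2\). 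Consequently \(|e^{-s\sqrt{Q(\bz)}}|=e^{-s\,\mathrm{Re}\sqrt{Q(\bz)}}\le e^{-s|\bx|/2}\), and combined with \(|Q(\bz)^{-1/2}|\le \sqrt{2}/|\bx|\) this gives \(|G(\bz)|\le (\sqrt{2}/|\bx|)\,e^{-s|\bx|/2}\) uniformly on \(P_r(\bx)\). Cauchy's estimate again produces the required bound with \(e^{-s|\bx|/2}\) as an extra factor.

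The main (really, the only) technical point is the quantitative geometric verification that, with the choice \(r=|\bx|/8\), the polydisk \(P_r(\bx)\) avoids the branch set \(\{Q=0\}\cup(-\infty,0]\) and keeps \(Q(\bz)\) comfortably inside the right half-plane so that both \(|Q|^{-1/2}\) and \(\mathrm{Re}\sqrt{Q}\) are controlled by \(|\bx|\) with constants compatible with the stated \(\sqrt{2}\) and \(1/2\). The slightly tight inequality \(\sqrt{3}/4+3/64<1/2\) is what forces the constant \(8\) in \((8/|\bx|)^{|\beta|}\); any larger polydisk radius would spoil this. No uniformity issue in \(s\) arises since the \(s\)-dependence enters only through the exponential bound.
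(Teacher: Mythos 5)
Your proof is correct and follows essentially the same route as the paper: extend to a polydisc of polyradius $|\bx|/8$, extend $1/|\bx|$ and $e^{-s|\bx|}/|\bx|$ holomorphically via $Q(\bz)^{-1/2}$ with the principal branch, bound $\mathrm{Re}\,Q(\bz)\geq\tfrac12|\bx|^2$ and $\mathrm{Re}\sqrt{Q(\bz)}\geq|\bx|/2$, and apply the Cauchy estimates. The only (cosmetic) differences are in the internal algebra: you bound the perturbation $|2\bx\cdot\bw+\bw^2|$ directly by Cauchy--Schwarz whereas the paper writes $\bz=\bx+\ba+\mathrm{i}\bb$ and applies Young's inequality to the real part, and for the square root you use the identity $\mathrm{Re}\sqrt{\zeta}=\sqrt{(|\zeta|+\mathrm{Re}\,\zeta)/2}$ whereas the paper argues via $\cos(\tfrac12\Arg\zeta)\geq1/\sqrt2$; both give the same constants.
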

\begin{proof}
We will use the Cauchy inequalities \cite[Theorem 2.2.7]{Hor}.
To avoid confusion with the Euclidean norm \(|\cdot|\) (in \(\R^3\) or in
\(\C^3\)), we denote by \(|\cdot|_\C\) the absolut value in \(\C\).

Let, for \({\bf w}=(w_1,w_2,w_3)\in\C^3\) and \(r>0\),
\begin{align}\label{eq:polycylinder}
  P_{r}^3({\bf w})=\{\bz\in \C^3\ |\ |z_{\nu}-w_{\nu}|_{\C}<r\,,\ \nu=1,2,3\}
\end{align}
be the {\it poly-disc} with {\it poly-radius} \({\bf r}=(r,r,r)\).
The Cauchy inequalities then state that if $u$ is analytic in
$P_{r}^3({\bf w})$ and if $\sup_{\bz\in P_{r}^3({\bf w})}
|u(\bz)|_{\C} \leq M$, then   
\begin{align}\label{eq:Cauchy}
  |\partial_{\bf z}^{\beta} u({\bf w}) |_{\C} \leq M
  \beta!\,r^{-|\beta|} \quad \text{ for all }\beta\in\N_{0}^{3}\,.
\end{align}
We take ${\bf w} = {\bf x} \in \R^3\setminus\{0\} \subseteq \C^3$ 
and choose $r=|{\bf x}|/8$. We prove below that then we have (with
\(\bz^2:=\sum_{\nu=1}^3z_{\nu}^2\in\C\)) 
\begin{align}\label{eq:ineq-on-polycyl-new}
  {\rm Re}(\bz^2) \ge \frac12|\bx|^{2} \ \text{ for } \bz\in
  P^3_{r}(\bx)\,.
\end{align}
It follows 
that $\sqrt{{\bf
    z}^2} := \exp(\frac{1}{2} \Log\,{\bf z^2})$ is well-defined and
analytic on $P^3_r({\bf x})$ with $\Log$ being the principal branch of
the logarithm. 

We will also argue below that
\begin{align}
  \label{eq:norm-square-root-new}
  {\rm Re}(\sqrt{\bz^2})
  \geq \frac12|{\bf x}| \ \text{ for } \bz\in P^3_{r}(\bx)\,.
\end{align}
Then (by \eqref{eq:ineq-on-polycyl-new}) for all
\(\bz\in P^3_{r}(\bx)\),
\begin{align}\label{eq:new-one}
 |\sqrt{ {\bf z}^2}|_{\C} = \sqrt{ |{\bf z}^2|_{\C}} \geq \sqrt{ |
  {\rm Re }\,{\bf z}^2|} \geq |{\bf x}|/\sqrt{2}\,, 
\end{align}
and (by
\eqref{eq:norm-square-root-new}), for all $s \geq 0$ and all
\(\bz\in P^3_{r}(\bx)\),
\begin{align}\label{eq:new-label}
  |\exp( - s \sqrt{{\bf z}^2}) |_{\C} = \exp( -s {\rm
  Re}(\sqrt{\bz^2})) \leq \exp( -s |{\bf x}|/2 )\,. 
\end{align}
Therefore, \eqref{eq:est-der-coul} and \eqref{eq:est-der-coul-bis}
follow from \eqref{eq:Cauchy}, \eqref{eq:new-one}, and \eqref{eq:new-label}.

It remains to prove \eqref{eq:ineq-on-polycyl-new} and
\eqref{eq:norm-square-root-new}.

For \(\bz\in P^3_{r}(\bx)\), write \(\bz=\bx+\ba+{\rm i}\bb\)
with \(\ba,\bb\in\R^3\) satisfying
\(|z_{\nu}-x_{\nu}|_{\C}^2=a_{\nu}^2+b_{\nu}^2\le(|\bx|/8)^2\). Then
\begin{align*}
  \bz^2=|\bx+\ba|^2-|\bb|^2+2{\rm i}(\bx+\ba)\cdot\bb\,,
\end{align*}
so, with \(\epsilon=1/8\),
 \begin{align*}
   {\rm Re}(\bz^2)&=|\bx|^2+|\ba|^2+2\, \bx\cdot\ba -|\bb|^2
   \\&\ge (1-\epsilon)|\bx|^2+(2-\epsilon^{-1})|\ba|^2-(|\ba|^2+|\bb|^2)
   \\&\ge\frac{35}{64}|\bx|^2>\frac12|\bx|^2\,.
 \end{align*}
This establishes \eqref{eq:ineq-on-polycyl-new} .

It follows from \eqref{eq:ineq-on-polycyl-new} that, with \(\Arg\) the
principal branch of the argument,
\begin{align}\label{eq:Arg-new}
  {}-\frac{\pi}{4}\le \frac12\Arg(\bz^2)\le \frac{\pi}{4} \ \text{ for }
  \bz\in P^3_{r}(\bx)\,.
\end{align}
Furthermore (still
for \(\bz\in P^3_{r}(\bx)\)), because of \eqref{eq:Arg-new}, 
\begin{align}
  \label{eq:norm-square-root}
  {\rm Re}(\sqrt{\bz^2})
  =|\bz^2|_{\C}^{1/2}\cos(\tfrac12\Arg(\bz^2))
  \ge|\bz^2|_{\C}^{1/2}/\sqrt{2}\,.
\end{align}
Combining with \eqref{eq:ineq-on-polycyl-new} we get
\eqref{eq:norm-square-root-new}.

This finishes the proof of the lemma.
\end{proof}
\section{Needed results}\label{app:needed}
\renewcommand{\theequation}{D.\arabic{equation}}
\renewcommand{\thetheorem}{D.\arabic{theorem}}
\setcounter{equation}{0}  
\setcounter{theorem}{0}  
In this section we gather some results from the literature which are
needed in our proofs.
\begin{theorem}\cite[Theorem 5.8]{Adams}\label{adams}
Let \(\Omega\) be a domain in \(\R^n\) satisfying the cone
condition. Let \(m\in\N, \mathfrak{p}\in(1,\infty)\). 
If \(m\mathfrak{p}>n\), let \(\mathfrak{p}\le
\mathfrak{q}\le \infty\); if \(m\mathfrak{p}=n\), let 
\(\mathfrak{p}\le \mathfrak{q}<\infty\); if \(m\mathfrak{p}<n\), let
\(\mathfrak{p}\le \mathfrak{q}\le
\mathfrak{p}^*=n\mathfrak{p}/(n-m\mathfrak{p})\). Then  
there exists a constant \(K\) depending on \(m, n, \mathfrak{p},
\mathfrak{q}\) and the 
dimensions of the cone \(C\) providing the cone condition for
\(\Omega\), such that for all \(u\in W^{m,\mathfrak{p}}(\Omega)\),
\begin{align}\label{eq:adams}
   \|u\|_{L^{\mathfrak{q}}(\Omega)} 
   \le K\|u\|_{W^{m,\mathfrak{p}}(\Omega)}^\theta 
   \|u\|_{L^{\mathfrak{p}}(\Omega)}^{1-\theta}\,,
\end{align}
where \(\theta=(n/m\mathfrak{p})-(n/m\mathfrak{q})\).
\end{theorem}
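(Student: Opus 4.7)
My approach is to derive this inequality by combining the Sobolev embedding theorem (enabled on \(\Omega\) by the cone condition) with H\"older's inequality. The Sobolev embedding supplies one endpoint---control of a single higher \(L\)-norm by the full \(W^{m,\mathfrak{p}}\)-norm---and H\"older's inequality then interpolates between that endpoint and the base \(L^{\mathfrak{p}}\)-norm, producing the mixed estimate at the intermediate exponent \(\mathfrak{q}\).

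I would treat the subcritical case \(m\mathfrak{p}<n\) first, as it is the clearest. The cone condition on \(\Omega\) supplies Calder\'on's (equivalently, Stein's) bounded extension \(E:W^{m,\mathfrak{p}}(\Omega)\to W^{m,\mathfrak{p}}(\R^n)\) whose operator norm depends only on \(m,n,\mathfrak{p}\) and the cone dimensions. Composing with the standard Sobolev embedding on \(\R^n\) yields
\[
\|u\|_{L^{\mathfrak{p}^{*}}(\Omega)}\le K_0\|u\|_{W^{m,\mathfrak{p}}(\Omega)},\qquad \mathfrak{p}^{*}=\frac{n\mathfrak{p}}{n-m\mathfrak{p}}.
\]
For any \(\mathfrak{q}\) with \(\mathfrak{p}\le\mathfrak{q}\le\mathfrak{p}^{*}\), write \(1/\mathfrak{q}=(1-\theta)/\mathfrak{p}+\theta/\mathfrak{p}^{*}\); H\"older's inequality (log-convexity of \(L^{r}\)-norms) then gives
\[
\|u\|_{L^{\mathfrak{q}}(\Omega)}\le \|u\|_{L^{\mathfrak{p}}(\Omega)}^{1-\theta}\,\|u\|_{L^{\mathfrak{p}^{*}}(\Omega)}^{\theta}.
\]
Since \(1/\mathfrak{p}-1/\mathfrak{p}^{*}=m/n\), a one-line check yields \(\theta=(n/m\mathfrak{p})-(n/m\mathfrak{q})\), and combining the two displays gives \eqref{eq:adams} throughout this range.

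The critical case \(m\mathfrak{p}=n\) uses \(W^{m,\mathfrak{p}}(\Omega)\hookrightarrow L^{r}(\Omega)\) for some fixed \(r\in(\mathfrak{q},\infty)\); H\"older interpolation with \(1/\mathfrak{q}=(1-\theta_{r})/\mathfrak{p}+\theta_{r}/r\) gives an exponent \(\theta_{r}<\theta:=1-\mathfrak{p}/\mathfrak{q}\), and the deficit \(\theta-\theta_{r}\ge 0\) is absorbed via the trivial bound \(\|u\|_{\mathfrak{p}}\le \|u\|_{W^{m,\mathfrak{p}}}\) so that the target exponent \(\theta\) is attained. For the supercritical case \(m\mathfrak{p}>n\), Morrey's embedding yields the sharper Gagliardo--Nirenberg form \(\|u\|_{L^{\infty}(\Omega)}\le K\|u\|_{W^{m,\mathfrak{p}}(\Omega)}^{n/(m\mathfrak{p})}\|u\|_{L^{\mathfrak{p}}(\Omega)}^{1-n/(m\mathfrak{p})}\) (derivable by extending to \(\R^{n}\) and scaling), and interpolating between \(L^{\mathfrak{p}}\) and \(L^{\infty}\) then produces \eqref{eq:adams} with exponent \(\theta=(n/m\mathfrak{p})-(n/m\mathfrak{q})\) via an algebraic manipulation identical to the one above.

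The principal obstacle is not the interpolation---which is purely algebraic and dimension-free---but the underlying Sobolev embedding on the bounded domain \(\Omega\); this is precisely where the cone condition enters, as it is the geometric input that permits the construction of a bounded extension operator and pins down the dependence of \(K\) on the cone dimensions. Once the embedding is available with the correct constant, the remainder of the proof is a straightforward combination of H\"older's inequality and exponent-matching.
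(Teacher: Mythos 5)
This result is quoted verbatim from Adams (Theorem 5.8); the paper offers no proof of its own, so there is nothing internal to compare your argument against.

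Your handling of the subcritical and supercritical cases is sound: extension to \(\R^n\) plus the endpoint Sobolev (resp.\ Morrey) embedding, then a single H\"older (resp.\ Lebesgue) interpolation, and an exponent check identify \(\theta\) correctly. The critical case \(m\mathfrak{p}=n\), however, contains a sign error that the rest of the argument cannot survive. With \(r>\mathfrak{q}\) and \(1/\mathfrak{q}=(1-\theta_r)/\mathfrak{p}+\theta_r/r\), one solves to find
\begin{align*}
\theta_r=\frac{1/\mathfrak{p}-1/\mathfrak{q}}{1/\mathfrak{p}-1/r}
\;>\;\frac{1/\mathfrak{p}-1/\mathfrak{q}}{1/\mathfrak{p}}
\;=\;1-\frac{\mathfrak{p}}{\mathfrak{q}}\;=\;\theta\,,
\end{align*}
so \(\theta_r>\theta\), not \(\theta_r<\theta\) as you assert. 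With \(\theta_r>\theta\) the absorption step runs in the wrong direction: converting \(\|u\|_{\mathfrak{p}}^{1-\theta_r}\|u\|_{W^{m,\mathfrak{p}}}^{\theta_r}\) into \(\|u\|_{\mathfrak{p}}^{1-\theta}\|u\|_{W^{m,\mathfrak{p}}}^{\theta}\) would require a uniform bound on \((\|u\|_{W^{m,\mathfrak{p}}}/\|u\|_{\mathfrak{p}})^{\theta_r-\theta}\), which is unbounded. Optimizing the choice of \(r\) over \(u\) does not rescue the argument either, since the critical embedding constant grows with \(r\). The critical case genuinely requires a direct Gagliardo--Nirenberg type proof (level-set truncation or the fundamental-theorem-of-calculus argument behind Ladyzhenskaya's inequality), not a fixed-\(r\) embedding composed with H\"older.

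That said, the gap is harmless for the present paper: Theorem~\ref{adams} is applied only with \(n=3\), \(m=1\), \(\mathfrak{p}=p>3\), i.e.\ exclusively in the supercritical regime \(m\mathfrak{p}>n\), which your argument handles correctly.
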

We write \(K=K(m,n,\mathfrak{p},\mathfrak{q},\Omega)\).
We always use Theorem~\ref{adams} with \(n=3\), \(m=1\), and
\(\mathfrak{p}=p, \mathfrak{q}=3p\) for some \(p>3\). Hence
\(m\mathfrak{p}>n\), \(\mathfrak{p}\le\mathfrak{q}\le\infty\), and
\(\theta=\theta(p)=2/p<1\).
Moreover, we always use it 
with \(\Omega\) being a ball, whose radius in all cases is bounded from
above by \(1\) and from
below by \(R/2\) for some \(R>0\) fixed.

Let
\(K_0\equiv K_0(p)\equiv
K(1,3,p,3p, B_{1}(0))\) with \(B_{1}(0)\subseteq\R^3\) the unit ball
(which does satisfy the cone condition). Note that then, by
scaling, \eqref{eq:adams} implies that for all \(r\le1\) and all
\(\bx_{0}\in\R^3\),  
\begin{align}\label{eq:adams-bis}
  \|u\|_{L^{3p}(B_{r}(\bx_{0}))}
  \le K_0r^{-\theta}\|u\|_{W^{1,p}(B_{r}(\bx_{0}))}^{\theta}
  \|u\|_{L^{p}(B_{r}(\bx_{0}))}^{1-\theta}\,,
\end{align}
with \(\theta=2/p\).
 
To summarize, we therefore have the following corollary.
\begin{corollary}\label{cor:adams}
Let $p >3$ and $R\in (0,1]$. Then 
there exists a constant $K_2$, depending only on $p$ and $R$, such
that for all $r \in [R/2,1]$, \(\bx_0\in\R^3\), and all $u \in
W^{1,p}(B_{r}(\bx_0))$, 
\begin{align}
  \|u\|_{L^{3p}(B_{r}(\bx_{0}))}
  \le K_2\|u\|_{W^{1,p}(B_{r}(\bx_{0}))}^{\theta}
  \|u\|_{L^{p}(B_{r}(\bx_{0}))}^{1-\theta}\,,
\end{align}
with \(\theta =2/p\).
\end{corollary}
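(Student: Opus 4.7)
The plan is to deduce this corollary directly from Theorem~\ref{adams} (Adams' theorem) combined with a scaling argument, using the fact that the radius $r$ is bounded from above by $1$ and from below by $R/2$.

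First I would apply Theorem~\ref{adams} on the unit ball $B_1(0)\subseteq\R^3$ with $n=3$, $m=1$, $\mathfrak{p}=p$, $\mathfrak{q}=3p$. Since $p>3$ we have $m\mathfrak{p}=p>3=n$, so the first case applies and $\mathfrak{q}=3p\in[\mathfrak{p},\infty]$ is allowed; moreover $\theta=n/(m\mathfrak{p})-n/(m\mathfrak{q})=3/p-1/p=2/p\in(0,1)$. Since $B_1(0)$ satisfies the cone condition, this yields a constant $K_0=K_0(p)$ with
\[
  \|v\|_{L^{3p}(B_1(0))}\le K_0\,\|v\|_{W^{1,p}(B_1(0))}^{\theta}\,\|v\|_{L^{p}(B_1(0))}^{1-\theta}
\]
for all $v\in W^{1,p}(B_1(0))$.

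Next I would transfer this estimate to a ball $B_r(\bx_0)$ of arbitrary radius $r\in(0,1]$ centered at arbitrary $\bx_0\in\R^3$, via the change of variables $v(\by):=u(\bx_0+r\by)$. A routine computation gives $\|v\|_{L^q(B_1(0))}=r^{-3/q}\|u\|_{L^q(B_r(\bx_0))}$ for each $q$, and $\|Dv\|_{L^p(B_1(0))}=r^{1-3/p}\|Du\|_{L^p(B_r(\bx_0))}$. Since $r\le 1$, these combine into the bound $\|v\|_{W^{1,p}(B_1(0))}\le r^{-3/p}\|u\|_{W^{1,p}(B_r(\bx_0))}$. Inserting these into the unit-ball inequality and collecting powers of $r$ (with $\theta=2/p$) produces exactly the scaled estimate~\eqref{eq:adams-bis}:
\[
  \|u\|_{L^{3p}(B_{r}(\bx_{0}))}\le K_0\,r^{-\theta}\|u\|_{W^{1,p}(B_{r}(\bx_{0}))}^{\theta}\|u\|_{L^{p}(B_{r}(\bx_{0}))}^{1-\theta}.
\]

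Finally, the constraint $r\in[R/2,1]$ makes the scaling factor harmless: $r^{-\theta}\le(R/2)^{-\theta}=(2/R)^{\theta}$, a quantity depending only on $p$ and $R$. Setting $K_2:=K_0(p)\,(2/R)^{\theta}$ gives the claim. No genuine obstacle arises; the only point requiring a line of care is the book-keeping of powers of $r$ under rescaling, and the observation that the $W^{1,p}$-norm, being a sum of the $L^p$-norm and of the $L^p$-norm of the gradient (which rescale with different powers of $r$), is dominated by $r^{-3/p}\|u\|_{W^{1,p}(B_r(\bx_0))}$ precisely because $r\le 1$.
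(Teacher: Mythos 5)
Your proof is correct and is exactly the argument the paper gives (in the text preceding the corollary): apply Theorem~\ref{adams} on the unit ball to get $K_0(p)=K(1,3,p,3p,B_1(0))$, rescale $v(\by)=u(\bx_0+r\by)$ to obtain \eqref{eq:adams-bis} with the factor $r^{-\theta}$, and then use $r\ge R/2$ to absorb that factor into the constant $K_2=(2/R)^{2/p}K_0(p)$ as in \eqref{eq:const-adams}. The book-keeping of powers of $r$, including the use of $r\le1$ to bound the rescaled $W^{1,p}$-norm, matches the paper's computation.
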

Here,
\begin{align}\label{eq:const-adams}
  K_2\equiv K_2(p,R)=(2/R)^{2/p}K_0(p)\,,
\end{align}
where \(K_{0}(p)=K(1,3,p,3p,B_{1}(0))\) in 
Theorem~\ref{adams} above. 
\begin{theorem}\cite[Theorem~4.2]{ChenWu}\label{elliptic}
Let \(\Omega\) be a bounded domain in \(\R^n\) and 
let \(a^{ij}\in C(\overline{\Omega})\), \(b^{i}, c\in
L^{\infty}(\Omega) \, \ i, j\in\{1,\ldots,n\}\), with
\(\lambda, \Lambda>0\) such that
\begin{align}\label{eq:cond-ell-1}
   &\sum_{i,j=1}^{n}a^{ij}\xi_i\xi_j\ge \lambda|\xi|^2\,,
   \ \text{ for all }\ x\in\Omega, \,\xi\in\R^n\,, \\  
   \label{eq:cond-ell-2}
   &\sum_{i,j=1}^{n}\|a^{ij}\|_{L^{\infty}(\Omega)}
  +\sum_{i=1}^{n}\|b^{i}\|_{L^{\infty}(\Omega)}
  +\|c\|_{L^{\infty}(\Omega)}\le\Lambda\,.
\end{align}
Suppose \(u\in W^{2,\mathfrak{p}}_{\rm loc}(\Omega)\) satisfies 
\begin{align}\label{eq:gen-elliptic}
  Lu=\sum_{i,j=1}^{n}{}-a^{ij}D_{i}D_{j}u
     +\sum_{i=1}^{n}b^{i}D_{i}u+cu=f\,.
\end{align}

Then for any \(\Omega'\subset\subset\Omega\),
\begin{align}\label{eq:elliptic-est}
   \|u\|_{W^{2,\mathfrak{p}}(\Omega')}\le C\big\{
   \frac{1}{\lambda}\|f\|_{L^{\mathfrak{p}}(\Omega)}
   +\|u\|_{L^{\mathfrak{p}}(\Omega)}\big\}\,,
\end{align}
where \(C\) depends only on \(n, \mathfrak{p}, \Lambda/\lambda,
\dist\{\Omega',\partial\Omega\}\), and the modulus of continuity of the
\(a^{ij}\)'s. 
\end{theorem}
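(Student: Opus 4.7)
The plan is to reduce the variable-coefficient interior estimate \eqref{eq:elliptic-est} to the constant-coefficient case on \(\R^n\) via a freezing argument, and to treat the lower-order terms as perturbations. The constant-coefficient ingredient is classical: for \(L_{0}=-\sum a_{0}^{ij}D_{i}D_{j}\) with frozen coefficients satisfying \eqref{eq:cond-ell-1}, a linear change of variable puts \(L_{0}\) into the form \(-\Delta\), and then for \(v\in C_{c}^{\infty}(\R^{n})\) one has \(D_{i}D_{j}v=R_{i}R_{j}(-\Delta v)\), where \(R_{i}\) are the Riesz transforms. The Calderón--Zygmund theorem then yields \(\|D^{2}v\|_{L^{\mathfrak{p}}(\R^{n})}\le C(n,\mathfrak{p})\|L_{0}v\|_{L^{\mathfrak{p}}(\R^{n})}\) for every \(\mathfrak{p}\in(1,\infty)\), with \(C\) depending on the ellipticity ratio via the change-of-variable constants.

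Second, I would freeze coefficients locally. Fix \(x_{0}\in\Omega\) and a small radius \(r>0\) so that \(B_{2r}(x_{0})\subset\Omega\). Choose a cutoff \(\zeta\in C_{c}^{\infty}(B_{2r}(x_{0}))\) with \(\zeta\equiv1\) on \(B_{r}(x_{0})\) and \(|D^{k}\zeta|\lesssim r^{-k}\). Applying the constant-coefficient estimate to \(w:=\zeta u\) with \(L_{x_{0}}:=-\sum a^{ij}(x_{0})D_{i}D_{j}\) gives
\begin{equation*}
  \|D^{2}w\|_{L^{\mathfrak{p}}(\R^{n})}\le C\|L_{x_{0}}w\|_{L^{\mathfrak{p}}(\R^{n})}.
\end{equation*}
Now decompose \(L_{x_{0}}w=(L_{x_{0}}-L)w+Lw\) and note that on the support of \(\zeta\),
\begin{equation*}
  \|(L_{x_{0}}-L)w\|_{\mathfrak{p}}\le\omega(2r)\|D^{2}w\|_{\mathfrak{p}}+\Lambda\bigl(\|Dw\|_{\mathfrak{p}}+\|w\|_{\mathfrak{p}}\bigr),
\end{equation*}
where \(\omega\) is the modulus of continuity of the \(a^{ij}\). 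Writing \(Lw=\zeta Lu+[L,\zeta]u\) and using that \([L,\zeta]\) is a first-order operator with coefficients bounded by \(Cr^{-2}\), one ends up with
\begin{equation*}
  \|D^{2}w\|_{\mathfrak{p}}\le C\bigl\{\omega(2r)\|D^{2}w\|_{\mathfrak{p}}+\lambda^{-1}\|f\|_{L^{\mathfrak{p}}(\Omega)}+r^{-2}\|u\|_{W^{1,\mathfrak{p}}(B_{2r}(x_{0}))}\bigr\}.
\end{equation*}
Choosing \(r\) so small that \(C\omega(2r)\le 1/2\) absorbs the top-order term to the left; this is the step that forces the constant to depend on the modulus of continuity of the \(a^{ij}\).

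Third, the first-order term \(\|Du\|_{\mathfrak{p}}\) is eliminated by the Gagliardo--Nirenberg interpolation \(\|Du\|_{\mathfrak{p}}\le \epsilon\|D^{2}u\|_{\mathfrak{p}}+C_{\epsilon}\|u\|_{\mathfrak{p}}\), again absorbing the \(D^{2}\) term. Finally, I would cover \(\Omega'\) by finitely many balls \(B_{r/2}(x_{k})\) of the radius determined above (possible because \(\overline{\Omega'}\) is compact and \(\operatorname{dist}(\Omega',\partial\Omega)>0\)), sum the local estimates, and use that the number of balls and the radius depend only on \(n,\Lambda/\lambda,\mathfrak{p},\operatorname{dist}(\Omega',\partial\Omega)\) and the modulus of continuity of the coefficients. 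The main obstacle is the freezing step: one must pass from the Calderón--Zygmund bound for the constant-coefficient operator \(L_{x_{0}}\) back to \(L\) without losing control, and this is where the continuity of the leading coefficients (rather than mere boundedness) is essential, because the absorption \(\omega(2r)\|D^{2}w\|_{\mathfrak{p}}\le\tfrac12\|D^{2}w\|_{\mathfrak{p}}\) only works if \(\omega(2r)\to 0\) as \(r\to 0\).
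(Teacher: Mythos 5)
The paper does not prove this theorem but cites it verbatim from [ChenWu, Theorem~4.2]; your proposal correctly reproduces the standard Calder\'on--Zygmund freezing argument (constant-coefficient Riesz-transform estimate, localization by a cutoff, absorption of the top-order error via the modulus of continuity of the \(a^{ij}\), interpolation to remove \(\|Du\|_{\mathfrak{p}}\), and a finite covering of \(\Omega'\)), which is exactly the approach of the cited reference and of Gilbarg--Trudinger's Theorem~9.11.
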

We use Theorem~\ref{elliptic} in the case
  where \(\Omega'\) and \(\Omega\) are concentric balls 
(and with \(n=3\), \(\mathfrak{p}=3p/2\), \(a^{ij}=\delta_{ij},
b^{i}=c=0\); hence 
\(\Lambda=\lambda=1\)).
 Reading the proof of the theorem above with this case in mind (see
 \cite[Lemma~4.1]{ChenWu} in particular), one can make
 the dependence on \(\dist\{\Omega',\partial\Omega\}\)
 explicit. 
More precisely, we have the following corollary.
\begin{corollary}\label{rem:elliptic}
For all \(p>1\) 
there exists a constant \(K_{3}=K_{3}(p)\ge1\) such that
for all \(u\in W^{2,3p/2}(B_{r+\delta}(\bx_{0}))\) 
(with \(\bx_{0}\in\R^3, r,\delta>0\))
\begin{align}\label{eq:elliptic-est-bis}\nonumber
  &\|u\|_{W^{2,3p/2}(B_{r}(\bx_{0}))}
  \\&\qquad
  \le K_{3}\big\{\|\Delta u\|_{L^{3p/2}(B_{r+\delta}(\bx_{0}))}
  +\delta^{-2}\|u\|_{L^{3p/2}(B_{r+\delta}(\bx_{0}))}\big\}\,.
\end{align}
\end{corollary}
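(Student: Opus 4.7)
The target inequality \eqref{eq:elliptic-est-bis} is exactly the standard interior Calderón--Zygmund estimate for the Laplacian on concentric balls, quantified to show that the constant in Theorem~\ref{elliptic} depends on $\dist\{\Omega',\partial\Omega\}=\delta$ only through the explicit factor $\delta^{-2}$ in front of $\|u\|_{L^{3p/2}}$. The plan is therefore to reduce Corollary~\ref{rem:elliptic} to Theorem~\ref{elliptic} by a dilation argument in two steps: first extract a universal constant from Theorem~\ref{elliptic} applied in one fixed reference geometry, and then transport the estimate to the general pair $B_r\subset B_{r+\delta}$ by rescaling together with a finite covering.

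\textbf{Step 1 (Universal reference estimate).} Apply Theorem~\ref{elliptic} with $n=3$, $\mathfrak p=3p/2$, $a^{ij}=\delta_{ij}$, $b^{i}=c=0$ (so $\lambda=\Lambda=1$ and the modulus of continuity of the $a^{ij}$'s is trivial) to the concentric balls $\Omega=B_{2}(0)$, $\Omega'=B_{1}(0)$, for which $\dist\{\Omega',\partial\Omega\}=1$. All the quantities on which the constant in Theorem~\ref{elliptic} depends are now fixed (universal constants depending only on $p$), so we obtain a constant $C_{0}=C_{0}(p)\ge1$ with
\begin{equation*}
  \|v\|_{W^{2,3p/2}(B_{1}(0))}\le C_{0}\big\{\|\Delta v\|_{L^{3p/2}(B_{2}(0))}+\|v\|_{L^{3p/2}(B_{2}(0))}\big\}\qquad\text{for all } v\in W^{2,3p/2}(B_{2}(0)).
\end{equation*}

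\textbf{Step 2 (Rescaling and covering).} By translation invariance assume $\bx_{0}=0$. Cover $B_{r}(0)$ by finitely many balls $\{B_{\delta/2}(\by_{i})\}_{i=1}^{N}$ with centres $\by_{i}\in B_{r}(0)$ and uniformly bounded overlap (depending only on the dimension); then $B_{\delta}(\by_{i})\subseteq B_{r+\delta}(0)$ for each $i$. For each $i$ apply the universal estimate to $v_{i}(\by):=u(\by_{i}+(\delta/2)\by)$, using the covariance $\Delta_{\by}v_{i}(\by)=(\delta/2)^{2}(\Delta u)(\by_{i}+(\delta/2)\by)$ and the scaling identity $\|D^{\alpha}v_{i}\|_{L^{3p/2}(B_{t}(0))}=(\delta/2)^{|\alpha|-2/p}\|D^{\alpha}u\|_{L^{3p/2}(B_{t\delta/2}(\by_{i}))}$. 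After multiplying through by $(\delta/2)^{2/p-2}$ the estimate becomes
\begin{equation*}
  \sum_{|\alpha|\le2}(\delta/2)^{|\alpha|-2}\|D^{\alpha}u\|_{L^{3p/2}(B_{\delta/2}(\by_{i}))}\le C_{0}\big\{\|\Delta u\|_{L^{3p/2}(B_{\delta}(\by_{i}))}+(\delta/2)^{-2}\|u\|_{L^{3p/2}(B_{\delta}(\by_{i}))}\big\}.
\end{equation*}
Raising each derivative bound to the $3p/2$-th power and summing over the cover (using bounded overlap and $B_{\delta}(\by_{i})\subseteq B_{r+\delta}(0)$) yields \eqref{eq:elliptic-est-bis} with $K_{3}=K_{3}(p)$ a suitable multiple of $C_{0}(p)$.

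\textbf{Main obstacle.} The only delicate point is that, after rescaling, the left-hand side carries weights $(\delta/2)^{|\alpha|-2}$, which behave well in $\delta$ only when $\delta\lesssim 1$: for large $\delta$ the $|\alpha|=2$ term ceases to dominate the Sobolev sum. In all the applications of Corollary~\ref{rem:elliptic} in this paper one has $\delta\le \tilde\epsilon/4\le R/8\le 1/8$, so this issue never arises; in general one can, if needed, choose the covering at scale $\min\{\delta,1\}/2$ rather than $\delta/2$, which alters only the numerical value of $K_{3}(p)$. Verifying that the constant $C_{0}$ in the reference estimate is genuinely $\delta$-independent is immediate from the form of Theorem~\ref{elliptic} once the coefficients are fixed to be the identity.
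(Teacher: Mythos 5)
Your argument by dilation and covering is a genuinely different, and more self-contained, route than the paper's own. The paper gives no proof at all: the remark preceding the corollary simply points at the proof of Theorem~\ref{elliptic} (namely \cite[Lemma~4.1]{ChenWu}) and asserts that tracing it ``one can make the dependence on $\dist\{\Omega',\partial\Omega\}$ explicit.'' Your Step~1 extracts a fixed reference constant $C_0(p)$ from the concentric pair $B_1(0)\subset B_2(0)$, and Step~2 transports it to the general pair $B_r\subset B_{r+\delta}$ by scaling; after applying the one-ball estimate separately to each $\|D^{\alpha}u\|$ with $|\alpha|\le 2$, raising to the power $3p/2$, and summing over a cover $\{B_{\delta/2}(\by_i)\}$ of $B_r(\bx_0)$ whose \emph{dilated} balls $B_{\delta}(\by_i)$ also have bounded overlap, one obtains \eqref{eq:elliptic-est-bis} for all $\delta$ with $\delta/2\le 1$. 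That range covers every instance in which the corollary is actually invoked in the paper ($\delta=R\le 1$ in Lemma~\ref{lemanR}, and $\delta=\tilde\epsilon_i/4\le 1/8$ in Lemma~\ref{lemanR-bis}). The gain over the paper's ``read the proof of \cite{ChenWu}'' is that you treat the elliptic theorem as a black box and the $\delta^{-2}$ weight falls out purely from scaling.

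The one place you go astray is the closing paragraph on the ``main obstacle,'' where you are in fact too charitable toward the statement as printed. Covering at scale $\min\{\delta,1\}/2$ does \emph{not} merely change the numerical value of $K_3(p)$: for $\delta>2$ it yields only $\|u\|_{W^{2,3p/2}(B_r)}\le C\{\|\Delta u\|_{L^{3p/2}(B_{r+\delta})}+\|u\|_{L^{3p/2}(B_{r+\delta})}\}$ with a $\delta$-independent coefficient on the zeroth-order term, which is strictly weaker than the asserted $\delta^{-2}$. And in fact \eqref{eq:elliptic-est-bis}, as stated for all $r,\delta>0$, is simply false: taking $u\equiv 1$ gives $\Delta u=0$ and reduces the inequality to $r^{2/p}\le K_3\,\delta^{-2}(r+\delta)^{2/p}$, whose right-hand side tends to $0$ as $\delta\to\infty$ (since $2/p<2$) while the left-hand side is fixed. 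The corollary therefore needs an implicit restriction such as $\delta\le 1$; with that proviso your Steps~1 and~2 constitute a complete proof, and the caveat paragraph should simply be deleted rather than patched.
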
 
\begin{theorem}\cite[Theorem~5, Section~5.6.2 (Morrey's
  inequality)]{Evans}\label{lemmaSobolev} 
Let \(\Omega\) be a boun\-ded, open subset in \(\R^n\), \(n\ge2\), and
suppose \(\partial\Omega\) is \(C^{1}\). Assume
\(n<\mathfrak{p}<\infty\), and \(u\in
W^{1,\mathfrak{p}}(\Omega)\). Then \(u\) has a version \(u^{*}\in
C^{0,\gamma}(\overline{\Omega})\), for \(\gamma=1-n/\mathfrak{p}\),
with the estimate
\begin{align}\label{eq:Morrey}
  \|u^{*}\|_{C^{0,\gamma}(\overline{\Omega})}\le
  K_{4}\|u\|_{W^{1,\mathfrak{p}}(\Omega)}\,. 
\end{align}
The constant \(K_{4}\) depends only on \(\mathfrak{p}, n\), and \(\Omega\).
\end{theorem}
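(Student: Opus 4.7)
The plan is to reduce to the case of a smooth function on all of \(\R^{n}\) and then prove a pointwise H\"older bound via the classical representation of \(u(y)-u(x)\) as a line integral of \(\nabla u\). Since \(\p\Omega\) is \(C^{1}\), a standard Sobolev extension theorem will produce \(\tilde{u}\in W^{1,\mathfrak{p}}(\R^{n})\) with \(\tilde{u}|_{\Omega}=u\) and \(\|\tilde{u}\|_{W^{1,\mathfrak{p}}(\R^{n})}\le K_{\Omega}\|u\|_{W^{1,\mathfrak{p}}(\Omega)}\); after mollifying, it suffices to establish the inequality for \(u\in C^{1}(\R^{n})\cap W^{1,\mathfrak{p}}(\R^{n})\) and then pass to the limit, since the H\"older bound survives uniform convergence and thereby yields the continuous representative \(u^{*}\).

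The core estimate I would derive, for every ball \(B_{r}(x)\subseteq\R^{n}\), is
\[
\frac{1}{|B_{r}(x)|}\int_{B_{r}(x)}|u(y)-u(x)|\,dy \le C\,r^{1-n/\mathfrak{p}}\,\|\nabla u\|_{L^{\mathfrak{p}}(B_{r}(x))}\,.
\]
To obtain it, I would write \(u(y)-u(x)=\int_{0}^{|y-x|}\nabla u(x+s\omega)\cdot\omega\,ds\) with \(\omega=(y-x)/|y-x|\), integrate in \(y\) over \(B_{r}(x)\) using polar coordinates centered at \(x\), and swap the order of integration between the radial variable and the parameter \(s\). This produces a Riesz-potential-type bound
\[
\int_{B_{r}(x)}|u(y)-u(x)|\,dy \le \frac{r^{n}}{n}\int_{B_{r}(x)}\frac{|\nabla u(z)|}{|z-x|^{n-1}}\,dz\,.
\]
H\"older's inequality with conjugate exponent \(\mathfrak{q}=\mathfrak{p}/(\mathfrak{p}-1)\) then finishes the step, since \((n-1)\mathfrak{q}<n\) is equivalent to \(\mathfrak{p}>n\) and yields precisely the power \(r^{1-n/\mathfrak{p}}\) after taking the \(\mathfrak{q}\)-th root of \(\int_{B_{r}(x)}|z-x|^{-(n-1)\mathfrak{q}}\,dz\).

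From this single averaged estimate the \(C^{0,\gamma}\) norm is recovered in two moves. For the H\"older seminorm, given \(x\ne y\) I take \(r=|x-y|\) and a ball \(B\) of radius comparable to \(r\) containing both points, and split
\[
|u(x)-u(y)|\le \frac{1}{|B|}\int_{B}|u(x)-u(w)|\,dw + \frac{1}{|B|}\int_{B}|u(y)-u(w)|\,dw\,,
\]
which by the core estimate is at most \(C|x-y|^{\gamma}\,\|\nabla u\|_{L^{\mathfrak{p}}(\R^{n})}\). For the sup-norm part I fix, for each \(x\in\Omega\), a ball \(B\) of diameter comparable to the size of \(\Omega\) that contains \(x\), write \(|u(x)|\le|u(x)-\bar{u}_{B}|+|\bar{u}_{B}|\), and bound the first term by the averaged estimate and the second by H\"older's inequality as \(C\|u\|_{L^{\mathfrak{p}}(\R^{n})}\). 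Combining everything with the extension bound yields \eqref{eq:Morrey}.

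The main technical obstacle is not any single estimate but the exponent bookkeeping: verifying that the integrability threshold \((n-1)\mathfrak{q}<n\) is equivalent to \(\mathfrak{p}>n\) and that the resulting \(r\)-power is exactly \(1-n/\mathfrak{p}\) is the calculation that produces the H\"older exponent \(\gamma\), and it is the only place where a misstep would spoil the argument. A secondary point is the smooth approximation of \(W^{1,\mathfrak{p}}\)-functions and the stability of the H\"older seminorm and the sup-bound under mollification, but these are entirely standard.
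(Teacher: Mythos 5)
Your proof is correct and follows the standard argument of the cited reference: the paper does not prove this theorem but simply quotes it from Evans (Theorem~5, Section~5.6.2), whose proof proceeds exactly as you describe via extension, the averaged difference estimate with the Riesz-potential kernel \(|z-x|^{1-n}\), H\"older's inequality using \(\mathfrak{p}>n\), and the split into H\"older seminorm and sup-norm parts. Your exponent bookkeeping (\((n-1)\mathfrak{q}<n\Leftrightarrow\mathfrak{p}>n\) and \(n/\mathfrak{q}-(n-1)=1-n/\mathfrak{p}\)) is right, so there is nothing to add.
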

Here, \(u^{*}\) is a version of the given \(u\) if \(u=u^{*}\) a.e..
Above, 
\begin{align}\label{eq:Holder-norm}
   \|u\|_{C^{0,\gamma}(\overline{\Omega})}
   :=\sup_{\bx\in\Omega}|u(\bx)|
   +\sup_{\bx,\by\in\Omega,
   \,\bx\neq\by}\frac{|u(\bx)-u(\by)|}{|\bx-\by|^{\gamma}}\,.
\end{align}
Of course,
\(\sup_{\bx\in\Omega}|u(\bx)|\le\|u\|_{C^{0,\gamma}(\overline{\Omega})}\). 
\begin{remark}\label{rem:Morrey}
 Note that
\cite[p.~245]{Evans} uses a definition of the 
\(W^{m,\mathfrak{p}}\)-norm which is slightly different from ours (see
\eqref{def:Sob-norm}), but which is an 
equivalent norm by equivalence of norms in finite dimensional
vectorspaces. Therefore, \eqref{eq:Morrey} holds with our definition
of the norm (but the constant \(K_{4}\) is not the same as the one in
\cite[Theorem~5, Section~5.6.2]{Evans}). 
\end{remark}
\begin{acknowledgement}
The authors thank Heinz Siedentop for suggesting to study this
problem.
AD, T{\O}S, and ES gratefully acknowledge a "Research in Pairs" (RiP) - stay
at the Mathematisches Forschungsinstitut Oberwolfach, where
parts of this research was carried out.  
SF was partially supported by the Lundbeck Foundation and the European
Research Council under the European Community's Seventh Framework
Programme (FP7/2007-2013)/ERC grant agreement n$^{\rm o}$ 202859. 
ES was partially  supported by the DFG (SFB/TR12). 
\end{acknowledgement}

  \bibliographystyle{amsplain}
  \bibliography{relHFanalytic}

\end{document}